\documentclass[format=acmsmall,screen=true,nonacm]{acmart} %screen allows for colored links

\bibliographystyle{ACM-Reference-Format}

% ----------------------------------------------------------------
% ACM copyright macro placeholders, ACM will provide the correct values for accepted papers
% these are not used with nonacm and anonymous options, but omitting them may cause errors in some situations
%\setcopyright{none}
%\copyrightyear{2021}
%\acmYear{2021}
%\acmDOI{}
%\acmConference[]{}{}{}
\acmBooktitle{}
%\acmPrice{}
%\acmISBN{}

% removing "author's addresses" footnote
\makeatletter
\let\@authorsaddresses\@empty
\makeatother

\usepackage[utf8]{inputenc}
\usepackage{enumitem}
\usepackage[section]{placeins}
\usepackage{hyperref}
\hypersetup{
	colorlinks,
	linkcolor={black!30!blue},
	citecolor={black!30!blue},
	urlcolor={black!30!blue}
}

\usepackage{array,amsmath,amsthm,amsfonts,graphicx}
\usepackage{tikz,tikz-cd}
\usetikzlibrary{backgrounds,fit,decorations.pathreplacing,shapes.misc,decorations.text,calc,positioning}  % TikZ librariestikz,tikz-cd,
\usepackage{multirow,blkarray}
\usepackage{braket}
\usepackage{subcaption}
\usepackage{color}
\usepackage{longtable}

\allowdisplaybreaks

\newcommand{\subfigqsaqc}[2]{
	\begin{subfigure}[b]{0.33\textwidth}
		\centering\resizebox{\textwidth}{!}
		{\renewcommand{\eps}{\epsgr}\input{figs/QSAQC/#1}\renewcommand{\eps}{\epsgr}}
		\caption{Step #1#2}
		\label{fig:qsaqc:#1}
\end{subfigure}}

\newenvironment{manualtheorem}[1]{%
	\manualtheoreminner
}{\endmanualtheoreminner}

\newenvironment{manualprop}[1]{%
	\manualpropinner
}{\endmanualpropinner}

% saveboxes for the vertical alignment of subfigures
\newsavebox{\bigmat}
\newsavebox{\tbQS}

% custom colors
\definecolor{blacier}{rgb}{0.22745098039,0.55686274509,0.7294117647}
\definecolor{lowgreen}{rgb}{0.40390625,0.6109375,0.09109375}
\definecolor{lowgray}{gray}{0.5}

\def\blacier#1{{\textcolor{blacier}{#1}}}
\def\lowgreen#1{{\textcolor{lowgreen}{#1}}}
\def\RC#1{{\textcolor{red}{#1}}}

% letters
\newcommand{\cal}[1]{\mathcal{#1}}
\newcommand{\calA}{{\ensuremath{{\cal A}}}}
\newcommand{\calB}{{\ensuremath{{\cal B}}}}
\newcommand{\calC}{{\ensuremath{{\cal C}}}}
\newcommand{\calD}{{\ensuremath{{\cal D}}}}

\newcommand{\calG}{{\ensuremath{{\cal G}}}}
\newcommand{\calH}{{\ensuremath{{\cal H}}}}
\newcommand{\calI}{{\ensuremath{{\cal I}}}}
\newcommand{\calK}{{\ensuremath{{\cal K}}}}

\newcommand{\calN}{{\ensuremath{{\cal N}}}}
\newcommand{\calO}{{\ensuremath{{\cal O}}}}
\newcommand{\calQ}{{\ensuremath{{\cal Q}}}}
\newcommand{\calS}{{\ensuremath{{\cal S}}}}
\newcommand{\calT}{{\ensuremath{{\cal T}}}}
\newcommand{\calV}{{\ensuremath{{\cal V}}}}
\newcommand{\calW}{{\ensuremath{{\cal W}}}}
\newcommand{\calX}{{\ensuremath{{\cal X}}}}

% notation
\newcommand{\In}{\texttt{I}}
\newcommand{\Out}{\texttt{O}}
\newcommand{\ovva}{\overline{a}}
\newcommand{\ovvap}{\overline{a}'}
\newcommand{\ovvq}{\overline{q}}
\newcommand{\ovvqp}{\overline{q}'}
\newcommand{\ovvb}{\overline{b}}
\newcommand{\ovvbp}{\overline{b}'}
\newcommand{\ovvp}{\overline{p}}
\newcommand{\ovvpp}{\overline{p}'}
\newcommand{\eps}{\varepsilon}
\newcommand{\epsgr}{\textcolor{lowgray}{\varepsilon}}

\def\subspace#1{\overline{\calH}_#1}

\newtheorem{remark}{Remark}[section]

\DeclareMathOperator{\SET}{SET}
\DeclareMathOperator{\Conn}{Conn}
% AQC COMMANDS
\newcommand{\sector}[6]{%(x,y,name,target,a_I/O,x_I/O)
	\draw[] (#1,#2) -- (#1,#2+1) -- (#1+2,#2+1) [rounded corners] -- (#1+2,#2) --cycle;
	\draw[] (#1,#2+1) [rounded corners] -- (#1,#2+2) -- (#1+2,#2+2) [sharp corners] -- (#1+2,#2+1) -- cycle;
	\draw (#1+2,#2+1.8) [rounded corners] coordinate[] (C) 
	--  (#1 + 2.3 ,#2+1) coordinate[] (A) 
	-- (#1 + 2,#2+0.2) coordinate[] (B);
	
	\node[align=center] at (#1+1,#2+.5) {\huge #3};
	\node[align=center] at (#1+1,#2+1.5) {\huge #6};
	
	%connection to other sectors
	\node[inner sep= 0pt] at (#1-0.26,#2+2) (name#4) {\Huge #4};
	\node[inner sep= 0pt] at (A) (outarrow#4) {};
	\node[inner sep= 0pt] at (#1,#2+1) (inarrow#4) {};
	\node[inner sep= 0pt] at (#1+.5,#2) (bottomleft#4) {};
	\node[inner sep= 0pt] at (#1+1,#2) (bottommid#4) {};
	\node[inner sep= 0pt] at (#1+1.5,#2) (bottomright#4) {};
	\node[inner sep= 0pt] at (#1+.5,#2+2) (topleft#4) {};
	\node[inner sep= 0pt] at (#1+1,#2+2) (topmid#4) {};
	\node[inner sep= 0pt] at (#1+1.5,#2+2) (topright#4) {};	
}

%I=In O=Out W=stored address space Q=data space
\newcommand{\sectorIOWQ}[9][]{%(x,y,name,target,a_I/O,x_I/O) %the first argument is optional and used as "address" to connect to other sectors
    \draw[] (#2+2,#3+1.8) [rounded corners] coordinate[] (C) 
	--  (#2 + 2.7 ,#3+1) coordinate[] (A) 
	-- (#2 + 2,#3+0.3) coordinate[] (B);
	\node at (barycentric cs:A=1.5,B=1,C=1) {#8};
	\node at (#2-0.25,#3+2) (name) {#9}; 
	
	\draw[] (#2,#3) --  (#2,#3+1) -- (#2+1,#3+1) -- (#2+1,#3)[rounded corners]  -- cycle;
	\draw[] (#2+1,#3) -- (#2+1,#3+1) -- (#2+2,#3+1) [rounded corners] -- (#2+2,#3) [sharp corners] -- cycle;
	\draw[] (#2,#3+1) [rounded corners] -- (#2,#3+2) [sharp corners] -- (#2+1,#3+2) -- (#2+1,#3+1) -- cycle;
	\draw[] (#2+1,#3+1) -- (#2+1,#3+2) [rounded corners] -- (#2+2,#3+2) [sharp corners] -- (#2+2,#3+1) -- cycle;
	\node[] at (#2+.5,#3+.5) {#4};
	\node[] at (#2+1.5,#3+.5) {#5};
	\node[] at (#2+.5,#3+1.5) {#6};
	\node[] at (#2+1.5,#3+1.5) {#7};
	
	%%connection to other sectors
	\node[inner sep= 0pt] at (A) (outarrow#1) {};
	\node[inner sep= 0pt] at (#2,#3+1) (inarrow#1) {};
}

%new fleche command to allow for bends with an optional first command + round caps
\newcommand{\fleche}[4][]{%(adresse,target,color)
	\draw[{Round Cap}-stealth,color=#4,line width=1mm,#1] (#2) to (#3);
}

%a slender controlled colorable arrow
\newcommand{\flechecontrol}[5]{%(adresse,target,control_1,control_2,color)
    \draw[-stealth,color=#5,line width=0.75mm] (#1).. controls (#3,#4) ..(#2);
}

\tikzset{cross/.style={cross out, draw, 
		minimum size=2*(#1-\pgflinewidth), 
		inner sep=0pt, outer sep=0pt}}

\def\drawpolygon#1,#2;{
    \begin{pgfonlayer}{background}
        \filldraw[line width=20,join=round      ](#1.center)foreach\A in{#2}{--(\A.center)}--cycle;
        \filldraw[line width=19,join=round,white](#1.center)foreach\A in{#2}{--(\A.center)}--cycle;
    \end{pgfonlayer}
}

\newcommand{\rectsec}[3]{
	\draw[dashed] (#1-0.3,0.3) rectangle (#1+0.3,-#2+0.7);
    \node (#3) at (#1,0.5)  {#1}; 
}

\newcommand{\rectpos}[6]{% x, length, shown name, y,true name 
{\draw[dashed] (#1-0.3,0.3-#3) rectangle (#1+0.3,-#2+0.7-#3);}
    \node (#6) at (#1-0.3,-#2*0.5-#3*0.5+0.5) {};
    \node (#6r) at (#1+0.3,-#2*0.5-#3*0.5+0.5) {};
    \node (#5) at (#1,0.5-#3)  {#4} ; }

% PBS COMMANDS

%double arrow (controllable) 
\newcommand{\flechedouble}[5]{%(adresse,target,control_1,control_2,color)
    \draw[stealth-stealth,color=#5,line width=0.75mm,color=gray] (#1).. controls (#3,#4) ..(#2);
}

\definecolor{box_color}{rgb}{1,1,1}%{0.60390625,0.9109375,0.39109375}
\definecolor{data_color}{RGB}{0,0,255}%{0.40390625,0.6109375,0.09109375}
\definecolor{dark_box_color}{rgb}{0,0,0}%{0.40390625,0,0}

\newcommand{\drawName}[3]{%(x,y,name)
    \node[draw,fill=dark_box_color!25] at (#1,#2) {%\color{white}
    \Large #3};
}

\newcommand{\drawData}[4]{%(x,y,H ou V, name)
    \node[draw,fill=data_color!10,minimum size=1cm,circle] (#4#3) at (#1,#2) {%\color{white}
    \large $ #3 $};
}

\newcommand{\adresse}[4]{%(x,y,name,position_nom)
%1-> Nord    2-> Est   3-> Sud   4-> Ouest
    %\draw[fill=box_color] (#1,#2) rectangle (#1+2,#2+2);

    \ifthenelse{#4=1}{
    \drawName{#1+1}{#2+1.5}{#3};
    \drawData{#1+0.5}{#2+0.5}{V}{#3};
    \drawData{#1+1.5}{#2+0.5}{H}{#3};
    }{
    \ifthenelse{#4=2}{
    \drawName{#1+1.5}{#2+1}{#3};
    \drawData{#1+0.5}{#2+0.5}{V}{#3};
    \drawData{#1+0.5}{#2+1.5}{H}{#3};
    }{
    \ifthenelse{#4=3}{
    \drawName{#1+1}{#2+0.5}{#3};
    \drawData{#1+1.5}{#2+1.5}{V}{#3};
    \drawData{#1+0.5}{#2+1.5}{H}{#3};
    }{
    \ifthenelse{#4=4}{
    \drawName{#1+0.5}{#2+1}{#3};
    \drawData{#1+1.5}{#2+1.5}{V}{#3};
    \drawData{#1+1.5}{#2+0.5}{H}{#3};
    }}}}
}

% metadata for hyperref
\hypersetup{
	pdftitle = {Addressable quantum gates},
	pdfauthor = {Pablo Arrighi,
		Christopher Cedzich,
		Marin Costes,
		Ulysse Rémond,
		Benoît Valiron},
	pdfsubject = {Quantum computing},
	pdfkeywords = {quantum computing, indefinite causal orders, models for quantum computing, addressable quantum gates, models of distributed quantum computing, anonymous, identity-oblivious, alpha equivalence, name-invariance, covariance, markovian dynamic graphs, timed quantum circuits, clocked quantum circuits, quantum causal graph dynamics, quantum controllable quantum circuits, quantum FPGA}
}

\ccsdesc[500]{Theory of computation~Quantum computation theory}

%%%%%%%%%%%%%%%%%%%%%%%%%%%%%%%%%%%%%%%%%%%%%%%%%%%%%%

%\acmJournal{TQC}

\begin{document}

% the hypersetup has to be put into the body apparently, otherwise links are in ACM default color
\hypersetup{
	colorlinks,
	linkcolor={black!30!blue},
	citecolor={black!30!blue},
	urlcolor={black!30!blue}
}

\title{Addressable quantum gates}

\author[P. Arrighi]{Pablo Arrighi}
\email{pablo.arrighi@universite-paris-saclay.fr}
\affiliation{%
  \institution{Université Paris-Saclay, INRIA, CNRS, LMF}
  \city{Gif-sur-Yvette}
  \country{France}
  \postcode{91190}
}
\affiliation{%
  \institution{IXXI, Lyon}
  \city{Lyon}
  \country{France}
}

\author[C. Cedzich]{Christopher Cedzich}\affiliation{Quantum Technology Group, Heinrich Heine Universit\"at D\"usseldorf, Universit\"atsstr. 1, 40225 D\"usseldorf \country{Germany}}\email{cedzich@hhu.de}{}{}
\author[M. Costes]{Marin Costes}%\affiliation{Université Paris-Saclay, CNRS, ENS Paris-Saclay, LMF, 91190 Gif-sur-Yvette \country{France}}
\email{marin.costes@ens-paris-saclay.fr}{}{}{}
\author[U. Rémond]{Ulysse Rémond}%\affiliation{Université Paris-Saclay, CNRS, ENS Paris-Saclay, LMF, 91190 Gif-sur-Yvette \country{France}}
\email{ulysse.remond@ens-paris-saclay.fr}{}{}{}
\author[B. Valiron]{Beno\^it Valiron}\email{benoit.valiron@monoidal.net}{}{}{}
\affiliation{ Université Paris-Saclay,  CNRS, Centrale Supélec, LMF, 91190 Gif-sur-Yvette \country{France}}

\date{\today}	
\keywords{models of distributed quantum computing, anonymous, identity-oblivious, alpha equivalence, name-invariance, covariance, markovian dynamic graphs, timed quantum circuits, clocked quantum circuits, quantum causal graph dynamics, quantum controllable quantum circuits, quantum FPGA.}

\begin{abstract}
We extend the circuit model of quantum computation so that the wiring between gates is soft-coded within registers inside the gates. The addresses in these registers can be manipulated and put into superpositions. This aims at capturing indefinite causal orders and making their geometrical layout explicit: we express the quantum switch and the polarizing beam-splitter within the model. In this context, our main contribution is a full characterization of the anonymity constraints. Indeed, the names used as addresses should not matter beyond the wiring they describe, i.e. quantum evolutions should commute with ``renamings''. We show that these quantum evolutions can still act non-trivially upon the names. We specify the structure of ``nameblind'' matrices.
\end{abstract}

\maketitle

\section{Introduction}
\indent The standard paradigm for quantum computation is the coprocessor model. In this model, quantum evolution is controlled by a purely classical device ---a conventional computer. Quantum computation is described as the list of elementary instructions ---the quantum gates--- sent to the coprocessor: the so-called quantum circuit. This representation has long been thought of as the {most} viable model of quantum computation, and it has been successful in realizing sizeable complexity gains on numerous useful algorithms.

Several other models of quantum computation have been designed to offer other quantum computing possibilities, compared to the usual circuits (wire/gate) vision ---notably : one-way computing \cite{1wqc}, quantum walks \cite{Kempe}, adiabatic computing \cite{AdiabaticQC}, hybrid models and many more, some of which already proved their practical use time and time again.

However, even when sticking to the wire/gate point of view, one quickly notices that in the coprocessor model only the data is quantum. The control flow, i.e., the order in which the gates are to be applied, is classically determined and definite.
In other words, the wiring between the gates is fixed and, albeit quantum, the data flows through the circuit in a definite, classical manner. 

Quantum mechanics allows for more: in \cite{ValironQSwitch} it was argued by constructing a new elementary circuit, the so-called ``quantum switch'', that classically ordered gates are not the only possible paradigm of quantum computation. Instead, the quantum switch behaves like a \emph{quantum} test: given a qubit $q$ and a single instance of gates $U$ and $V$, the operation $\textsc{Switch}(q)(U)(V)$ realizes
\begin{equation}\label{eq:qswitch}
\begin{array}{l}
\ket{0}_q\otimes\ket{\psi}\longmapsto \ket{0}_q\otimes(UV\ket{\psi}),
\\
\ket{1}_q\otimes\ket{\psi}\longmapsto \ket{1}_q\otimes(VU\ket{\psi}).
\end{array}
\end{equation}

\indent Adding the quantum switch to a circuit renders the orders of gates \emph{indefinite} \cite{ValironQSwitch,OreshkovQuantumOrder,Goswami_2018}: the order in which $U$ and
$V$ are applied depends on the state of the ``control'' qubit $q$. Hence, adding $\textsc{Switch}$ enables data \emph{and} computation to be in
superposition. Notably, the quantum switch is not realizable in the coprocessor model, when only one instance of $U$ and $V$ is available~ \cite{ValironQSwitch}. Yet, it was shown to be realizable in experiments \cite{QswitchExp1,QswitchExp2,exp-N-switch,FirstExpICO}. 
Moreover, it allows for gains in terms of communication \cite{guerinExponentialCommunicationComplexity2016,Chiribella_perfectQCommunication,experimentalSuperiorComplexity} and algorithmic complexity compared to textbook circuits~ \cite{QswitchCompAdv}.

In the recent years, several attempts have been made to encompass indefinite causal orders within a consistent model of quantum computing.
A considerable strand of works focuses on the compositionality of indefinite causal orders, via sums of diagrams in the process theory of quantum mechanics \cite{2020_diagrammatic_ICO}, or higher-order circuits 
featuring type annotations for that specific purpose \cite{HigherOrder}, some for routing per subspace \cite{RoutedQC}, other to enforce the time-ordering of packets \cite{CausalBoxes}.
In all these proposals however, the flow of information may be superposed, but it remains static. A more recent strand of works works by distinguishing data qubits from external control qubits \cite{wechs2021quantum,Quantum-Shannon-Theory}. The latter are used to encode the order in which gates are to be applied upon the former. Because the control qubits can be manipulated coherently and conditionally, the flow of information is both dynamical and quantum, i.e. ``second-quantised'' in the language of \cite{Quantum-Shannon-Theory}. 

In this paper we introduce a new model of quantum computation, not by extending textbook quantum circuits in a ad hoc manner (e.g. adding quantum switches \cite{QswitchCompAdv}, polarizing beam splitters \cite{PBScalculus,PBScalculus:coh} etc.) but by introducing a single new principle encompassing all of these constructs and hopefully more. The basic idea is that gates are no longer hardwired to one another. Instead, the wiring is soft-coded: each gate contains a register storing addresses of the gates it targets. We therefore call such gates \emph{Addressable Quantum Gates} (AQG). Those address registers are quantum, hence target addresses can be superposed, and so output data may be sent to different places, in a superposition. Moreover, addresses can interact with the data and be entangled with it. Thereby, addresses may be sent to other gates in an \emph{Addressable Quantum Circuit} (AQC), which naturally leads to evolving superpositions of causal orders. That is, in contrast to the models in \cite{2020_diagrammatic_ICO,HigherOrder,RoutedQC,CausalBoxes} that also encompass indefinite causal orders, our model allows for quantum dynamic routing. Furthermore, compared to \cite{wechs2021quantum} we allow for vacuum states and the increased compositionality that follows. Moreover, we express the geometry not through external control qubits, but via address registers placed directly within each gates, which allows for a clearer geometrical layout, in the spirit of process algebras.

Besides naturally encompassing indefinite causal orders, the AQC model also extends distributed quantum computing in several aspects. Historically, the search for distributed models of quantum computing has mainly focused on static networks and on classically evolving networks \cite{CQP-Gay}. Recently, the notion of a Quantum Causal Graph Dynamics (QCGD) \cite{ArrighiQCGD} has borne the premises of a fully quantum distributed model of computation, where the connectivity of the network is allowed to evolve coherently and find itself in a superposition. The corresponding mathematical framework \cite{ArrighiQNT}, however, is rather abstract and axiomatic. A model of computation should be constructive. This is the case of the hands-on, bottom up approach presented here: we provide a concrete, composable model of distributed quantum computation featuring local synchronous updates as in the \textsc{Local} model, as well as quantum evolving connectivity in the spirit of classical markovian graph dynamics models \cite{MarkovianDynGraphs}.

Classically it is well-known that the computing power of distributed algorithms crucially depends on the availability of a unique identifier at each node \cite{FraigniaudOblivious}. From a more physical point of view, we may indeed consider that these identifiers, a.k.a., ``names'' are fiducial, serving just to describe the network or laying out a coordinate system upon it, when really only the geometry should matter. The limitations brought by this requirement (cf. `anonymous networks' \cite{AnonymousNetworks}, `identifier-oblivious' algorithms \cite{FraigniaudOblivious0} in distributed computing, `renaming-invariance' \cite{ArrighiNamesInQG} or `covariance' in the Physics literature) are thoroughly studied here in the context of our model.

We remark that the questions of 1) the actual gain over the textbook quantum circuit model brought by indefinite causal orders in general and the quantum switch in particular, and 2) whether the current physical implementations of the quantum switch actually realize this gain, is subject of ongoing scientific discussions. Perhaps one of the strongest attempts to settle this debate is the notion of the time-delocalized system implementation of indefinite causal order developed in \cite{OreshkovTD1,OreshkovTD2}. To some extent, the controversy surrounding this ``gain'' has to do with how the gain is defined and quantified. For instance, we could be counting the number of ``usages'' of gates as in standard oracle complexity. Likewise, we could be counting the number of ``physical implementations" of gates, supposing that each of them requires much experimental care. It is not obvious that the quantum switch, at least in its current experimental implementations, provides any ``usage gain''. It does if we only count as ``usage'' the number of times the gate is traversed by data, but it is not clear why one should not count those times it is traversed by the ``vacuum state'' \cite{Quantum-Shannon-Theory}. On the other hand, it seems clear that, contrary to standard quantum circuits, these experiments factor out the number of physical implementations of gates. In that sense, our model provides some clarification, because each physical implementation of a gate is in one-to-one correspondence with one addressable quantum gate: the reusing of physically implemented resources is made explicit. We leave for future work question whether AQC implementations of indefinite causal orders correspond to (or help express) time-delocalized implementations.

Our contributions are the following:
\begin{itemize}[noitemsep,topsep=0pt,parsep=0pt,partopsep=0pt]
\item A novel computational paradigm ---called \emph{Addressable Quantum Circuits}--- generalizing textbook quantum circuits, so as to make the flow of information quantum (Sect. \ref{Section:model}).

\item Showing that this quantum flow of information allows for indefinite causal order. We demonstrate this below by encoding the quantum switch and the polarizing beam splitter (Sect. \ref{whatcando}).

\item A restriction of the model so that the actual names used as addresses do not matter beyond the wiring they describe, i.e. `disallowing pointer arithmetic' in the manner of Java...

\item ...which led us to solve the following linear algebra question as a natural byproduct: say that a matrix acts on the Hilbert space spanned by $\ket{abc...}$, $\ket{acb...}$, $\ket{bac...}$ in a way that commutes with renaming $a$ into $b$ etc. What is this matrix allowed to do? Can it still create superpositions? We provide a full, constructive characterization of such ``nameblind'' matrices. Whilst strongly structured, they still admit non-trivial behaviors. It is our main technical contribution (Sect.\,\ref{sec:renamings}).

\item The compositionality of the model is formalized (Appendix \ref{annex:operations}). The physicality of the model is discussed with comparison with other frameworks (Section \ref{annex:related}).
\end{itemize}

\noindent Notations are summarized in Appendix \ref{annex:notations}.

\section{The model}\label{Section:model}

\indent Quantum algorithms, whether they are distributed or not, are consist of two steps. Firstly, a quantum circuit is operated on the input quantum data --- possibly in an iterative fashion. In the second step, the resulting quantum state is measured. \emph{Addressable Quantum Circuits} (AQCs) represent a novel description of the first part. More precisely, each iteration of an AQC is one step of propagation of quantum data in the circuit, where gates can be applied in a superposition. Such physical locations in the laboratory are abstracted into \emph{sectors} in the AQC formalism.
 
Each sector is uniquely identified by an \emph{address} taken in a finite set \calA ---see Def. \ref{def:addresses}. 
Several sectors can be grouped together to form \emph{gates}, which determine the level of granularity at which ``local'' unitary \emph{gate operators} are applied. 
These concepts are formally introduced in this section. To establish familiarity, we encode the usual Bell state creation circuit as an AQC. This requires four sectors with addresses $\calA = \{ 1 , 2 , 3, 4 \}$ ---see Fig. \ref{fig:maxeprsect}. 

\begin{figure}[tbh]
    \centering
    \label{fig:epr2}
    \begin{subfigure}[b]{0.6\textwidth}
    \centering
    \resizebox{0.9\textwidth}{!}{% EPR pair creation as a quantum circuit

\begin{tikzpicture}[thick]
    %
    % `operator' will only be used by Hadamard (H) gates here.
    % `phase' is used for controlled phase gates (dots).
    % `surround' is used for the background box.
    \tikzstyle{operator} = [draw,fill=white,minimum size=1.5em] 
    \tikzstyle{phase} = [fill,shape=circle,minimum size=5pt,inner sep=0pt]
    \tikzstyle{oplus} = [draw,fill=white,shape=circle,minimum size=10pt,inner sep=0pt]
    \tikzstyle{plus} = [draw,fill=white,shape=circle,minimum size=10pt,inner sep=0pt]
    \tikzstyle{surround} = [fill=blue!10,thick,draw=black,rounded corners=2mm]
    %
    % Qubits
    \node at (0,0) (q1) {$\ket{0}$};
    \node at (0,-1) (q2) {$\ket{0}$};
    %
    % Column 1
    \rectsec{1}{2};
    \node[operator] (op11) at (2,0) {H} edge [-] (q1);
    \rectsec{2}{2};
    %
    % Column 3
    \rectsec{3}{2};
    \node[phase] (phase11) at (3,0) {} edge [-] (op11);
    \node[oplus] (phase12) at (3,-1) {} edge [-] (q2);
    \node[cross=4pt,rotate=45] (toto) at (3,-1) {} edge [-] (q2);
    \draw[-] (phase11) -- (phase12);
    \rectsec{4}{2}{num4};
    % col 5
    \node (end1) at (5,0) {} edge [-] (phase11);
    \node (end2) at (5,-1) {} edge [-] (phase12);
    %
    % Bracket
    \draw[decorate,decoration={brace},thick] (5,0.2) to
	node[midway,right] (bracket) {$\frac{\ket{00}+\ket{11}}{\sqrt{2}}$}
	(5,-1.2);
    %
    % Background Box
    \begin{pgfonlayer}{background} 
    \node[surround] (background) [fit = (q2) (num4) (bracket)] {};
    \end{pgfonlayer}

\end{tikzpicture}}
    \caption{Bell state creation as a textbook quantum circuit}
    \label{fig:maxeprsect}
    \end{subfigure}
    \begin{subfigure}[b]{0.6\textwidth}
    \centering
    \resizebox{0.6\textwidth}{!}{% EPR pair creation without inherent ordering of gates 

\begin{tikzpicture}[thick]
    %
    % `operator' will only be used by Hadamard (H) gates here.
    % `phase' is used for controlled phase gates (dots).
    \tikzstyle{operator} = [draw,fill=white,minimum size=1.5em] 
    \tikzstyle{phase} = [fill,shape=circle,minimum size=5pt,inner sep=0pt]
    \tikzstyle{oplus} = [draw,fill=white,shape=circle,minimum size=10pt,inner sep=0pt]
    %
    % Qubits
    %
    % Column 1
    \rectpos{1.5}{2}{1}{1}{num1}{rec1};
    \node[operator] (op01) at (1.5,-1) {I};
    \node[operator] (op02) at (1.5,-2) {I};
    
    \node[operator] (op11) at (0,0) {H};
    \node[operator] (op12) at (0,-1) {I};
    \rectpos{0}{2}{0}{2}{num2}{rec2};
    %
    % Column 3
    \rectpos{2.7}{2}{0}{3}{num3}{rec3};
    \node[phase] (phase11) at (2.7,0) {} ;
    \node[oplus] (phase12) at (2.7,-1) {} ;
    \node[cross=4pt,rotate=45] (toto) at (2.7,-1) {} ;
    \draw[-] (phase11) -- (phase12);
    \rectpos{4.5}{2}{0.5}{4}{num4}{rec4};
    \node[operator] (op41) at (4.5,-0.5) {I};
    \node[operator] (op42) at (4.5,-1.5) {I};
    % col 5
    \node (end1) at (5,0) {} ;
    \node (end2) at (5,-1) {} ;
    \flechecontrol{op02}{rec2}{-1.3}{-2.3}{black};
    \flechecontrol{rec2r}{rec3}{1.5}{0}{black};
    \flechecontrol{rec3r}{rec4}{3.3}{-0.3}{black};
\end{tikzpicture}}
    \caption{Bell state creation as an addressable quantum circuit
    }
    \label{fig:epr3}
    \end{subfigure}
    \caption{Dashed boxes represent sectors.  
    (a) Textbook quantum circuits require a background layer that implies a time ordering, here shown explicitly in light purple. (b) The AQC formalism frees the information flow from the background surface. Each sector has a target space, holding a target address, as represented by the arrows. Addresses could have been taken in any other integer set instead.}
    \label{fig:epr}
\end{figure}

\subsection{Addresses and sectors} \label{sect:address}

In the following $\calH_\calX$ denotes the Hilbert space whose canonical orthonormal basis is $\{\ket{e}\}_{e\in \calX} $. To each address we associate a sector, i.e. a connectable device which stores quantum data. 
The main novel feature of our formalism is its ability to manipulate and superpose addresses, and thus the connectivity between sectors. To this end, each sector has a \emph{target address space} $\calH_\calT$ to determine where the output data is to be sent. Additionally, it can store other addresses in its \emph{stored address spaces} $\calH_\calW$. By locally acting on these spaces we may shuffle addresses, possibly into a superposition.

\begin{definition}[Target address space, stored address space]\label{def:targetstored}
Let the set of addresses $\calA$ be a finite set of integers. A \emph{target address space} is a Hilbert space $\calH_\calT$, where $\calT \equiv \{a|a\in \calA^{\leq1}\} = A^?$ is the set of words on $\calA$ of length at most $1$. In contrast, basis states of a \emph{stored address space} $\calH_\calW$ can contain multiple addresses, but each address at most once. Thus, $\calH_\calW$ is a Hilbert space specified by $\calW \equiv \{\ovva = a_1\cdots a_n|a_i\in\calA,a_i\neq a_j\:\forall i,j\} = \calA^-$, the set of words on $\calA$ with all different letters.
\end{definition}

\begin{example}\label{example:ket32}
    With $\calA =  \{2,3\}$, the vector 
$ (\ket{2} +e^{-i\frac{ \pi}{8}} \ket{3} +\ket{3 \hspace{5pt} 2}+\ket{2 \hspace{5pt} 3})/2$
is a valid state of $\calH_{\calW}$.
\end{example}
For the Bell state creation circuit, just like for any textbook quantum circuit encoding, the stored address spaces are unused and the target address spaces remain fixed.
Specifying the target addresses for each sector makes the underlying background layer redundant: we might just as well dismantle the circuit as in Fig. \ref{fig:epr3}. In other examples with dynamically evolving wirings, the circuit may even be in a superposition of several arrow configurations ---see e.g. Fig. \ref{fig:QSAQC}. Such superpositions of wirings induce superpositions of orders of applications of gates on the quantum data stored in $\calH_\calQ$:

\begin{definition}[Data space] \label{def:dataspace}
Let ${\cal D}$ be a finite set of possible {\em data values}. The set of words on data of length at most $n$ is denoted by ${\cal Q} = {\cal D}^{\leq n}$. Since ${\cal D}$ is finite, so is ${\cal Q}$. The Hilbert space $\calH_{{\cal Q}}$ is referred to as a {\em data space}.
\end{definition}

Each sector contains two storage spaces : 
one for the input and one for the output, and both can store addresses and data. Each of these subspaces may be empty, in which case we denote its state by $\eps$. Sectors with empty data spaces correspond to vacuum states \cite{Quantum-Shannon-Theory}. The purpose of such output and input spaces is explicitated in Sec. \ref{evolution}.

\begin{definition}[Input and output space, sector space]\label{def:sectors}
An \emph{input (resp. output) space} is a Hilbert space $\calH_\calI$ (resp. $\calH_\calO$) of the form $\mathcal{H}_{\calW}\otimes \calH_{{\calQ}}$.
A \emph{sector space} is a Hilbert space of the form
$\calH_\calS = \calH_\calT\otimes \calH_\calI\otimes \calH_\calO$. 
To each address $a\in\calA$ is associated a sector space $\calH_\calS^a$.
\end{definition}

We write basis states of input (resp. output) spaces as $\ket{\overline{a},\overline{q}}_{\calI}$ (resp. $\ket{\overline{a},\overline{q}}_{\calO}$), where $\overline{a}\in\calW$  and $\overline{q}\in \calQ$. Basis states of a sector may be expressed as $\ket{t}_\calT\ket{\ovva}_\calI\ket\ovvq_\calI\ket{\ovvap}_\calO\ket\ovvqp_\calO  \in \calH_\calT \otimes \calH_{\calW_I} \otimes \calH_{\calQ_I} \otimes \calH_{\calW_O} \otimes \calH_{\calQ_O}$ or, by reordering the tensor factors as $\ket{t}_\calT\ket{\ovva}_\calI\ket{\ovvap}_\calO\ket\ovvq_I\ket\ovvqp_\calO \in \calH_\calT \otimes \calH_{\calW_I}\otimes \calH_{\calW_O} \otimes \calH_{\calQ_I} \otimes \calH_{\calQ_O}$ ---see Fig \ref{fig:sectors}.

\begin{figure}[ht]
    \centering
    \begin{subfigure}[b]{0.3\textwidth}
    \centering
    {\begin{tikzpicture}[thick]
\sectorIOWQ{0}{0}{$\calH_{\calQ_I}$}{$\calH_{\calQ_O}$}{$\calH_{\calW_I}$}{$\calH_{\calW_O}$}{$\calH_\calT$}{$\calA$}
\end{tikzpicture}}
    \label{fig:sector}
    \end{subfigure}
    \begin{subfigure}[b]{0.33\textwidth}
    \centering
    {\newcommand{\rectan}[3]{
\draw[color=#3,line width=1mm] (0.05,#1+0.05) rectangle (1.95,#1+0.95) node at (-0.26,#1+0.50) {#2};
}

\begin{tikzpicture}[thick]
\fill[color= blue, opacity= .4] (0,0) --  node[opacity=1,left] {$\calQ$} (0,1) -- (2,1)[rounded corners] -- (2,0)  -- cycle;
\fill[color= red, opacity= .4] (0,1)[rounded corners] --  node[opacity=1,left] {$\calW$} (0,2) -- (2,2)[sharp corners] -- (2,1)  -- cycle;

\sectorIOWQ{0}{0}{}{}{}{}{}{} 
\end{tikzpicture}}
    \label{fig:sectorAD}
    \end{subfigure}
    \begin{subfigure}[b]{0.33\textwidth}
    \centering
    {\begin{tikzpicture}[thick]
\fill[color= blue, opacity= .4] (0,0)[rounded corners] --  (0,2)[sharp corners] -- node[opacity=1,above] {$\calI$} (1,2) -- (1,0)[rounded corners]  -- cycle;
\fill[color= red, opacity= .4] (1,0) -- (1,2)[rounded corners] -- node[opacity=1,above] {$\calO$}  (2,2) -- (2,0)[sharp corners]  -- cycle;
\sectorIOWQ{0}{0}{}{}{}{}{}{}%{$\calH_\calQ$}{$\calH_\calQ$}{$\calH_\calW$}{$\calH_\calW$}{$\calH_\calT$}
\end{tikzpicture}}
    \label{fig:sectorIO}
    \end{subfigure}
    \caption{ (a) A sector. \hspace{10pt} (b) Stored addresses vs. quantum data. \hspace{15pt} (c) Input vs. output spaces.\\[5pt]
    When all \textbf{output} (resp. input) spaces are empty, we only represent the \textbf{input} (resp. output) spaces, as in Fig. \ref{fig:QSAQC}.}
    \label{fig:sectors} 
\end{figure}

\begin{example}\label{example:AQorthog}
With $\calA = \{ 1 , 2 , 3, 4 \}$ as in Fig. \ref{fig:maxeprsect} and $\mathcal Q = \{00,01,10,11,\eps\}$, the states $\ket{\eps,10}$, $\ket{3,\eps}$ and $\ket{23,11}$ are basis states for $\calH_\calI$ and $\calH_\calO$. E.g. $\ket{\eps,11} + \ket{2,\eps}$ and $\ket{2,11}$ are orthogonal.

Moreover, $\ket{3}_\calT\ket{2,01}_\calI\ket{\eps,\eps}_\calO$ is a basis state of $\calH_S$. 
\end{example}

The state of the quantum system at any point of its evolution is described by a sum of tensor products of states of each of its sectors. E.g. in the beginning of the execution of the Bell state creation circuit ---see Fig. \ref{fig:maxeprsect}, the quantum system is in the state $\ket{\psi_0}$:
\begin{equation}\label{eq:psi0}
\begin{array}{rrrclclc}
     \ket{\psi_0} =&&\ket{2}^1_\calT & \otimes & \ket{\eps^a,00^d}^1_\calI & \otimes & \ket{\eps^a,\eps^d}^1_O  & \hspace{2em}\text{(sector 1)} \\
     &\otimes & \ket{3}^2_\calT & \otimes & \ket{\eps^a,\eps^d}^2_\calI & \otimes & \ket{\eps^a,\eps^d}^2_O  &  \hspace{2em}\text{(sector 2)} \\
    & \otimes & \ket{4}^3_\calT & \otimes&  \ket{\eps^a,\eps^d}^3_\calI & \otimes & \ket{\eps^a,\eps^d}^3_O & \hspace{2em}\text{(sector 3)}  \\
     &\otimes & \ket{\eps}^4_\calT & \otimes & \ket{\eps^a,\eps^d}^4_\calI  & \otimes & \ket{\eps^a,\eps^d}^4_O & \hspace{2em}\text{(sector 4)}   
\end{array}\\
\end{equation}

\subsection{Addressable Gates}

A \emph{gate} is composed of multiple sectors, and thus can receive information from and send information to multiple sources and destinations. This grouping of sectors into gates defines a notion of locality for the evolution: during a ``scattering step'' a gate acts upon its constituent sectors with a local unitary.

\begin{definition}[Gates, Gate Operators]\label{def:gates}
A \emph{gate} $g$ is a (non-empty) subset of a set of addresses \calA, and the \emph{set of gates} $\calG$ is a partition of $\calA$. The \emph{gate space} of a gate $g$ consists of $|g|$ sectors and is given by $\calH^g=\bigotimes_{a\in g} \calH^a_\calS$. 
A \emph{gate operator} on $g\in\mathcal G$ is a unitary $S^g:\calH^g\to\calH^g$ which 
preserves the addresses of the target and stored spaces, i.e. 
$\bra{\ovvap\ \ovvqp}S^g\ket{\ovva\ \ovvq}\neq 0$ implies that $\ovvap$  has the same letters as $\ovva$, where $\ovva$ and $\ovvap$ (resp. $\ovvq, \ovvqp$) list all target and stored addresses (resp. all data) in sectors of $g$.
\end{definition}

In general, gates act as the identity on the empty state $\eps$. The gate operators of the Bell state creation circuit are given by
\begin{equation}\label{eq:eprgateoperator}S^{\{1\}} = F = S^{\{4\}}, 
     \hspace{20pt}  S^{\{2\}} = F  ((H\otimes I)_{\calI} \otimes {I}_{\calO}), 
     \hspace{20pt}  S^{\{3\}} = F  ({\textsc{Cnot}}_{\calI} \otimes {I}_\calO),  
\end{equation}
     where $F$ swaps the input and output systems. Note that neither data nor addresses are manipulated in sectors $1$ and $4$, yet they are required to compose this AQC with other AQCs ---see Sect. \ref{annex:operations}.

The overall state space of an AQC is the tensor product of its sector spaces together with an additional global condition: to guarantee the reversibility of the transport step that will be defined below, we require that each address appears at most once.

\begin{definition}[Skeleton, Addressable Quantum Circuit]\label{def:AQC}
A \emph{skeleton} $K$ is a tuple $(\calA,\calG,\calQ,S)$ where:
\begin{itemize}
    \item $\calA$ is the set of addresses, 
    \item $\calG$ is the set of gates,
    \item $\calQ$ is the set of data ---consisting of words on data of length at most $n$,
    \item $S$ is a function associating a gate operator $S^g$ to each gate $g \in \calG$.
\end{itemize}
This specifies the sector spaces as $\calH_\calS= \calH_\calT\otimes\calH_{\calW}\otimes\calH_{\calQ}\otimes\calH_{\calW}\otimes\calH_{\calQ}$.\\
The \emph{circuit Hilbert space} $\calH$ is defined as $\bigotimes_{a \in \calA} \calH_\calS^a$ restricted to superpositions of basis states where addresses appear at most once.
An \emph{Addressable Quantum Circuit} is a tuple ($K$,$\ket{\psi}$) that specifies both the dynamics and the current state, i.e. $K$ is a skeleton and $\ket\psi$ is a state of $\calH$.
\end{definition}

For example, consider the skeleton of the Bell state creation circuit shown in Fig. \ref{fig:epr3}. It is specified by $\calA = \{ 1 , 2 , 3, 4 \}$, $\calG = \{ \{1\},\{2\},\{3\},\{4\}\}$, $\calQ$ induced by $\calD = \{0,1\}$ and $n = 2$, and $S$ as in \eqref{eq:eprgateoperator}. In Ex. \ref{example:bell_exec} below we choose the initial state of \eqref{eq:psi0}. Reordering tensor factors as above, we obtain $\ket{\psi_0}=\ket{2,3,4,\eps}_\calT \ket{{00},\eps,\eps,\eps}_\calI \ket{\eps,\eps,\eps,\eps}_\calO $.

\subsection{Evolution}\label{evolution}
The overall evolution alternates a scattering step during which the gate operators are applied to their respective gates, and a transport step that swaps the output space of each sector with the input space of the sector it targets. The evolution is unitary i.e. it does not involve measurements.

\begin{figure}[htb]
    \centering
    \begin{subfigure}[b]{0.49\textwidth}
    \centering
    \resizebox{\textwidth}{!}
    {\begin{tikzpicture}[thick]
\fill[color= red, opacity= .4] (1,0) -- (1,2)[rounded corners] -- node[opacity=1,above] {$\calO$}  (2,2) -- (2,0)[sharp corners]  -- cycle;
\sectorIOWQ[i]{0}{0}{$\ket \ovvq$}{$\ket \ovvqp$}{$\ket \ovva$}{$\ket \ovvap$}{$\ket j$}{$i$}
\fill[color= blue, opacity= .4] (4,0)[rounded corners] --  (4,2)[sharp corners] -- node[opacity=1,above] {$\calI$} (5,2) -- (5,0)[rounded corners]  -- cycle;
\sectorIOWQ[j]{4}{0}{$\ket \ovvp$}{$\ket \ovvpp$}{$\ket \ovvb$}{$\ket \ovvbp$}{$\ket \eps$}{$j$}
\fleche{outarrowi}{inarrowj}{black}
\end{tikzpicture}}
    \caption{Sectors $i$ and $j$ before the transport step.}
    \label{fig:sectorEvolv1}
    \end{subfigure}
    \begin{subfigure}[b]{0.49\textwidth}
    \centering
    \resizebox{\textwidth}{!}
    {\begin{tikzpicture}[thick]
\fill[color= red, opacity= .4] (1,0) -- (1,2)[rounded corners] -- node[opacity=1,above] {$\calO$}  (2,2) -- (2,0)[sharp corners]  -- cycle;
\sectorIOWQ[i]{0}{0}{$\ket \ovvq$}{$\ket \ovvp$}{$\ket \ovva$}{$\ket \ovvb$}{$\ket j$}{$i$}
\fill[color= blue, opacity= .4] (4,0)[rounded corners] --  (4,2)[sharp corners] -- node[opacity=1,above] {$\calI$} (5,2) -- (5,0)[rounded corners]  -- cycle;
\sectorIOWQ[j]{4}{0}{$\ket \ovvqp$}{$\ket \ovvpp$}{$\ket \ovvap$}{$\ket \ovvbp$}{$\ket \eps$}{$j$}
\fleche{outarrowi}{inarrowj}{black}
\end{tikzpicture}}
    \caption{Sectors $i$ and $j$ after the transport step.}
    \label{fig:sectorEvolv2}
    \end{subfigure}
    \caption{Effect of the transport step upon two connected sectors ---see example \ref{example:transport}. Here and elsewhere in the paper arrows represent the content of each target space. If the state of the target space is a superposition of multiple addresses, one arrow points to each target.}
    \label{fig:transp-explanation}
\end{figure}

\begin{definition}[Evolution]\label{def:Evolution}
An addressable quantum circuit $((\calA,\calG,\calQ,S),\ket{\psi})$ evolves in one time step into the addressable quantum circuit $((\calA,\calG,\calQ,S),\ket{\psi'})$, where $\ket{\psi'}=G\ket{\psi}$ with $G=TS$ and:
\begin{description}
\item[Scattering $S$] is the application of $\bigotimes_g S^g$, i.e. the gate operators $S^g$ are applied simultaneously on their corresponding gate spaces.
\item[Transport $T$] is a permutation of basis states, extended linearly to $\calH$. 
The permutation maps synchronously each triple of the form $\ket{j}^i_\calT \ket{\overline x}^i_\calO \ket{\overline y}^j_\calI$ to $\ket{j}^i_\calT \ket{\overline y}^i_\calO \ket{\overline x}^j_\calI$, leaving the rest (of the basis state) unchanged. 
That is, the transport step maps a basis state $\ket{j}^i_\calT \ket{\overline y_i}^i_\calI\ket{\overline x_i}^i_\calO\otimes\ket{\alpha}^j_\calT \ket{\overline y_j}^j_\calI\ket{\overline x_j}^j_\calO\otimes\bigotimes_{k\not\in\{i,j\}}\ket\psi^k$ to $\ket{j}^i_\calT \ket{\overline y_i}^i_\calI\ket{\overline y_j}^i_\calO\otimes\ket{\alpha}^j_\calT \ket{\overline x_i}^j_\calI\ket{\overline x_j}^j_\calO\otimes\bigotimes_{k\not\in\{i,j\}}\ket\psi^k$. 
Colloquially, when Sector $i$ targets Sector $j$, $i$'s output space is swapped with $j$'s input space ---see Fig.~\ref{fig:transp-explanation}.
\end{description}
\end{definition}
The necessity of the requirement that each address appears at most once becomes clear now: it is essential for the transport step to be well-defined by avoiding that a sector is targeted by more than one other sector, as the output space of a sector $i$ are swapped with the input space of the sector targeted by $i$. 
The addresses stored in output and input spaces, referencing some future sector that would be addressed later, flow throughout the system just as the quantum data does. It is only required to store in input or output spaces the addresses of the sectors that will be addressed by a quantum superposition of different sectors.

\begin{example}\label{example:bell_exec}
Let us describe step-by-step the Bell state creation circuit of Fig. \ref{fig:epr}. We drop the stored address space which are always empty. In Ex. \ref{example:epr:ts} we further break down the evolution into $S$ and $T$. The blue and red terms (and, later, arrows ---see Fig. \ref{fig:qsaqc})
are superposed, and get entangled:
\begin{align*}
\textstyle\ket{2,3,4,\eps}_\calT &\textstyle\otimes \ket{{00},\eps,\eps,\eps}_\calI \otimes \ket{\eps,\eps,\eps,\eps}_\calO \vphantom{\frac{1}{\sqrt{2}}}\\ 
&\textstyle\longrightarrow_G\quad\ket{2,3,4,\eps}_\calT \otimes \ket{\eps,{00},\eps,\eps}_\calI  \otimes \ket{\eps,\eps,\eps,\eps}_\calO \vphantom{\frac{1}{\sqrt{2}}}\\ 
&\textstyle\longrightarrow_G\quad \ket{2,3,4,\eps}_\calT\otimes\frac{1}{\sqrt{2}}\left(\ket{\eps,\eps,{\color{red} 00},\eps}_\calI + \ket{\eps,\eps,{\color{blue} 10},\eps}_\calI \right) \otimes \ket{\eps,\eps,\eps,\eps}_\calO\\ 
&\textstyle\longrightarrow_G\quad \ket{2,3,4,\eps}_\calT \otimes \frac{1}{\sqrt2}\left(\ket{\eps,\eps,\eps,{\color{red} 00}}_\calI + \ket{\eps,\eps,\eps,{\color{blue} 11}}_\calI\right) \otimes \ket{\eps,\eps,\eps,\eps}_\calO \\ 
&\textstyle\longrightarrow_G\quad \ket{2,3,4,\eps}_\calT \otimes \ket{\eps,\eps,\eps,\eps}_\calI \otimes \frac1 {\sqrt2}\left(\ket{\eps,\eps,\eps,{\color{red} 00}}_\calO + \ket{\eps,\eps,\eps,{\color{blue} 11}}_\calO\right)\tag{$\ast$}\label{eq:step5}\\
&\textstyle\longrightarrow_G\quad \ket{2,3,4,\eps}_\calT \otimes \ket{\eps,\eps,\eps,\eps}_\calI \otimes \frac1{\sqrt2}\left(\ket{\eps,\eps,{\color{red} 00},\eps}_\calO + \ket{\eps,\eps,{\color{blue} 11},\eps}_\calO\right)\\
&\cdots&
\end{align*}
The finite execution of usual circuits corresponds to a finite sequence of iterations in the otherwise potentially infinite evolution of AQCs. E.g. in order to match the corresponding textbook circuit, the AQC above needs to be stopped after step \eqref{eq:step5}.
\end{example}

\subsection{Composability of AQC}\label{sec:corecomposability}

The fact that the connectivity of gates is soft-coded in the AQC model, leaves room for several ways of composing AQCs. In Appendix. \ref{annex:operations} we propose formal definitions for:
\begin{description}
\item[Parallel composition] of AQCs, corresponding to executing them in parallel without interaction.
\item[Concatenation] of AQCs, corresponding to successively applying them to the quantum data.
\item[Connection] of AQCs, alllowing for a more general connectivity without assuming an intrinsic (temporal) order between AQCs.
\end{description}
The last of these even allows for indefinite causal order. For now let us show how indefinite causal order may happen within one AQC.

\section{What can it do?} \label{whatcando}

Using the Bell state creation circuit as a guiding example, we illustrated how textbook quantum circuits can be implemented in this framework. We argued that textbook quantum circuits only require hardwired one-sector gates in the sense that the connectivity of the graph is static and definite, i.e. $\calH_{\calT}$ is in a fixed basis state and $\calH_{\calW_I}$ and $\calH_{\calW_O}$ are empty.

As argued in the introduction, there exist physical devices that allow for superpositions of causal orders. These devices lie beyond the scope of hardwired circuits, in that they cannot be described by the textbook circuit framework without extending the workspace or waiving knowledge about the internal workings of the gates by using all-encompassing gates. We now describe how such causal superpositions arise naturally by superposing addresses in AQCs.

\subsection{The quantum switch}\label{sec:qs}

The quantum switch described in \eqref{eq:qswitch} applies unitaries $U$ and $V$ to $m$ qubits in a superposition of causal orders that depends on a control qubit ---see Fig.~\ref{fig:QS}. We can encode \textsc{Switch} as the AQC 
of Fig.~\ref{fig:QSAQC}, where
$\calA = \{ 1 , 2 , 3, 4, 5, 6 \}$, $\calG = \{ \{1\},\{2,5\},\{3\},\{4\},\{6\}\}$, 
and $\calQ= \{0,1\}^{\leq m+1}$ contains $m$ controlled qubits plus one control qubit. The gates $\{3\}$ and $\{4\}$ apply unitary operators $U$ and $V$ to the $m$ controlled qubits, whereas the gate $\{2,5\}$ implements the quantum switch itself.

As initial state we choose $\ket {\psi_0} = \ket{2}_\calT^1 \ket{\overline{q} c}_{\calQ_I}^1  \ket{345}_{\calW_O}^2 \ket{6}_\calT^5 $, where $\ket{c}=\alpha\ket 0 + \beta \ket 1\in\calH_\calD $ denotes the  
control qubit and $\ket{\overline{q}}$ are 
the $m$ controlled qubits. All non specified spaces are empty, i.e. they are in state $\ket\eps$.

\begin{figure}[htb]
	\savebox{\tbQS}{\resizebox{0.34\textwidth}{!}{\begin{tikzpicture}[thick]
    %
    % `operator' will only be used by Hadamard (H) gates here.
    % `phase' is used for controlled phase gates (dots).
    % `surround' is used for the background box.
    \tikzstyle{operator} = [draw,fill=white,minimum size=1.5em] 
    \tikzstyle{phase} = [fill,shape=circle,minimum size=5pt,inner sep=0pt]
    \tikzstyle{oplus} = [draw,fill=white,shape=circle,minimum size=10pt,inner sep=0pt]
    \tikzstyle{plus} = [draw,fill=white,shape=circle,minimum size=10pt,inner sep=0pt]
    \tikzstyle{surround} = [fill=blue!10,thick,draw=black,rounded corners=2mm]
    %
    % Qubits
    \node at (0,0) (q1) {};
    \node at (1,-0.4) (vdots1) {\vdots};
    \node at (0,-0.8) (q4) {};
    \node at (0,-0.9) (q5) {};
    \node at (0,-1) (q3) {};
    \node at (0,-1.1) (q6) {};
    \node at (0,-1.2) (q7) {};
    \node at (0,-1.8) (q8) {};
    \node at (1,-1.4) (vdots2) {\vdots};
    \node at (0,-2) (q2) {};
    %
    % Column 1
    % Column 2
    \node[] (sec21) at (2,0) {} edge [-] (q1);
    \node (sec24) at (2,-0.8) {} edge [-] (q4);
    \node (sec25) at (2,-0.9) {}  edge [-] (q5);
    \node[] (sec23) at (2,-1) {} edge [-] (q3);
    \node (sec26) at (2,-1.1) {}  edge [-] (q6);
    \node (sec27) at (2,-1.2) {}  edge [-] (q7);
    \node (sec28) at (2,-1.8) {}  edge [-] (q8);
    \node[] (sec22) at (2,-2) {} edge [-] (q2);
    \node[above = 0.2 of sec21] (num2) {2};
    \node[left = 0.6 of num2] (num1)  {1};
    % sec 3 4
    \node[operator] (sec3) at (3,0) {$U$} edge [] (sec21);
    \node at (3.11,0.1) (sec301) {};
    \node at (3.11,-0.1) (sec302) {};
    \node at (2.89,0.1) (sec303) {};
    \node at (2.89,-0.1) (sec304) {};
    \node[right = 0.6 of num2] (num3)  {3};
    \node[right = 0.6 of num3] (num5)  {5};
    \node[right = 0.2 of num5] (num6)  {6};
    \node[operator] (sec4) at (3,-1.8) {$V$} edge [-] (sec28);
    \node[] (sec401) at (3.11,-1.7) {};
    \node[] (sec402) at (3.11,-1.9) {};
    \node[] (sec403) at (2.89,-1.7) {};
    \node[] (sec404) at (2.89,-1.9) {};
    \node at (3,-2.35) (num4)  {4};
    \node[] (sec5) at (3,-1) {QS};
    %
    % sec 5
    \node[] (sec51) at (4,0) {} edge [-] (sec3);
    \node[] (sec5101) at (4,0.1) {} edge [-] (sec301);
    \node[] (sec5102) at (4,-0.1) {} edge [-] (sec302);
    
    \node at (4.6,-0.4) (vdots1) {\vdots};
    \node at (4,-0.8) (sec54) {};
    \node at (4,-0.9) (sec55) {};
    \node[] (sec53) at (4,-1) {};
    \node at (4,-1.1) (sec56) {};
    \node at (4,-1.2) (sec57) {};
    \node at (4.6,-1.4) (vdots2) {\vdots};
    \node at (4,-1.8) (sec58) {}edge [-] (sec4);
    \node[] (sec5801) at (4,-1.7) {} edge [-] (sec401);
    \node[] (sec5802) at (4,-1.9) {} edge [-] (sec402);
    \node[] (sec52) at (4,-2) {} ;
    % end
    \node[] (end1) at (5,0) {} edge [-] (sec51);
    \node at (5,-0.8) (end4) {} edge [-] (sec54);
    \node at (5,-0.9) (end5) {} edge [-] (sec55);
    \node[] (end3) at (5,-1) {} edge [-] (sec53);
    \node at (5,-1.1) (end6) {} edge [-] (sec56);
    \node at (5,-1.2) (end7) {} edge [-] (sec57);
    \node at (5,-1.8) (end8) {} edge [-] (sec58);
    \node[] (end2) at (5,-2) {} edge [-] (sec52);
    
    %last wires
    
    \node[] (sec2801) at (2,-1.7) {} edge [-] (sec403);
    \node[] (sec2802) at (2,-1.9) {} edge [-] (sec404);
    \node[] (sec5801) at (2,0.1) {} edge [-] (sec303);
    \node[] (sec5802) at (2,-0.1) {} edge [-] (sec304);
    
    % Bracket
    \draw[decorate,decoration={brace,mirror},thick] (0,0.1) to
	node[midway,left] (bracket) {$\ket{\psi}$ \hspace{5pt}}
	(0,-1.85);
    \node[left] (contr) at (0,-2) {$\ket{c}$ \hspace{5pt}};
    %
    % Background Box
    \begin{pgfonlayer}{background} 
    \node[surround] (background) [fit = (end1) (end2) (num5) (num4) (bracket),inner sep=2pt] {};
    \end{pgfonlayer}
    %
    
    %drawing the QS gate
    \drawpolygon sec21, sec22, sec26, sec56, sec52, sec51, sec55, sec25;
\end{tikzpicture}}}
    \begin{subfigure}[t]{0.48\textwidth}
        \centering
        \usebox{\tbQS}
        \caption{Textbook-like representation of the Quantum Switch.}
        \label{fig:QS}
    \end{subfigure}
    \hfill
    \begin{subfigure}[t]{0.48\textwidth}
        \centering
        \raisebox{\dimexpr.5\ht\tbQS-.5\height}{
        \resizebox{0.88\textwidth}{!}{\renewcommand{\eps}{\epsgr}\begin{tikzpicture}[thick]

\sector{0}{1.3}{$\eps$}{1}{$\eps$}{$\eps$}% WARNING !
\sector{0}{-1.3}{$\eps$}{2}{$\eps$}{$\eps$}
\sector{3.2}{1.3}{$\eps$}{3}{$\eps$}{$\eps$}
\sector{3.2}{-1.3}{$\eps$}{4}{$\eps$}{$\eps$}
\sector{6.4}{1.3}{\large
\textcolor{red}{$UV\psi1$}\\ \textcolor{blue}{$VU\psi0$}}
    {5}{$\eps$}{$\eps$}
\sector{6.4}{-1.3}{$\eps$}{6}{$\eps$}{$\eps$}
\draw[-stealth,line width=1mm,bend right=70] (inarrow1) to (inarrow2);
\fleche[bend right=70]{inarrow1}{inarrow2}{black};
\fleche[out=60, in= 200]{outarrow2}{inarrow3}{blue}
\fleche{outarrow2}{inarrow4}{red}
\fleche{topleft4}{bottomright3}{red}	
\fleche{bottomleft3}{topright4}{blue}
\fleche[out=60, in= 200]{outarrow4}{inarrow5}{blue}
\fleche{outarrow3}{inarrow5}{red}
\fleche[bend left=70]{outarrow5}{outarrow6}{black}
\end{tikzpicture}\renewcommand{\eps}{\epsgr}}
    }
        \caption{AQC representation of the Quantum Switch during its execution. Red and blue data and targets are superposed.}
        \label{fig:QSAQC}
    \end{subfigure}
    \caption{Superposing causal orders by superposing addresses.}
\end{figure}

Let us explain the behavior of this AQC, see Fig. \ref{fig:qsaqc}. For the sake of brevity, when specifying scatterings by their action on basis states, non-represented spaces are understood to be empty, i.e. in state $\eps$. Moreover, scatterings are understood to act as the identity on unspecified basis states. Sector $2$ initially contains addresses $3$, $4$ and $5$ as output stored addresses. When the data enters this sector, the gate operator $S^{\{2,5\}}$ superposes the addresses $3$ or $4$ in the target space  (Fig. \ref{fig:qsaqc:1}) according to the control qubit, e.g. $\ket \eps_{\calT}^2 \ket {345}_{\calW_\calO}^2 \ket{\ovvq c}_{\calQ_\calI}^2 \mapsto \alpha\ket 3_{\calT}^2 \ket{45}_{\calW_\calO}^2 \ket{\ovvq 0}_{\calQ_\calO}^2 + \beta\ket 4_{\calT}^2 \ket{35}_{\calW_\calO}^2 \ket{\ovvq 1}_{\calQ_\calO}^2 $. It is then 
transported to sectors $3$ (resp. $4$) as shown in Fig. \ref{fig:qsaqc:2}. There,  $U$ (resp. $V$) is applied, and the target space of the respective sector is filled by the first address of the stored address space (Fig. \ref{fig:qsaqc:3}). The data continues its journey, and $V$ (resp. $U$) is applied in Sector $4$ (resp. $3$). The two branches meet in Sector $5$ (\ref{fig:QSAQC}). The data halts here, while the target addresses fold back and the quantum switch reverts to its former, non entangled state. Then all data flow to Sector $6$. The mathematical details are found in Appendix \ref{annex:QS}.

The addresses flow back within the AQC from steps $5$ to $8$. This means that the addresses flow from the input space of a targeted sector to the output space of a targeting sector. These steps are essential to guarantee that the data can interfere outside of the AQC, i.e. that no intrication remains between quantum data and the quantum switch.
It is crucial to note that one cannot initialize this AQC with the address of Sector $6$, which is always accessed from Sector $5$, in the input or output spaces of Sector $2$, since otherwise the AQC would still be entangled with the output data after the addresses flowed back.

\begin{figure}[H]
	\begin{center}
		\subfigqsaqc{1}{$\ket {\psi_0}$}\hspace{20pt}
		\subfigqsaqc{2}{$\ket{\psi_2}=TS\ket{\psi_0}$}
		\subfigqsaqc{2.5}{: $\ket{\psi_{2.5}}=S\ket{\psi_2}$\\\centering Output spaces.}\hspace{20pt}
		\subfigqsaqc{3}{   $\ket{\psi_3}=TS\ket{\psi_2}$\\ ~}
		\subfigqsaqc{4}{}\hspace{20pt}
		\subfigqsaqc{5}{}
		\subfigqsaqc{6}{: Output spaces.}\hspace{20pt}
		\subfigqsaqc{7}{: Output spaces.} 
		\subfigqsaqc{8}{: Output spaces.}\hspace{20pt}
		\subfigqsaqc{8.5}{: Output spaces.}
		\subfigqsaqc{9}{}
		\caption{Each step corresponds to one application of the global evolution operator $G=TS$. 
			The step $n.5$ for any integer $n$ corresponds to applying the operator $S$ to the state of step $n$. 
			The control qubit is $\ket c = \alpha\ket 0+\beta\ket 1$. Blue represents a scalar coefficient of $\alpha$ before the basis state, red represents a coefficient of $\beta$ and black represents $1$. When nothing is specified in caption, only input spaces are represented, {whereas "Output spaces" in caption indicates that only the output spaces are represented. A} $\calO$ in subscript (as in Figs. \ref{fig:qsaqc:1}, \ref{fig:qsaqc:2}, and \ref{fig:qsaqc:9}) indicates that a specific element is in the output space.}
		\label{fig:qsaqc}
	\end{center}
\end{figure}
%\FloatBarrier

The gate operators $S^{  \{3\}}$ and  $S^{  \{4\}}$ act both on a single sector space $\calH_\calS$ as they are one-sector gates. Both can be decomposed into 1) a gate acting on $\calH_\calS$ handling the swaps between the subspaces of the sector ($M_{34}$ in Appendix \ref{annex:m34}), and 2) a gate applying $U$ or $V$ on the respective data space $\calH_{\calQ_\calO}$, where ${\calQ_\calO}={\calQ^{\leq m+1}}$. The subsystems whose causal order is in superposition are therefore $\calH_{\calQ_\calO}^3$ and $\calH_{\calQ_\calO}^4$, see also \cite{OreshkovTD1}. In other words, the gate applies $U$ (or $V$) on a state consisting of $0$ qubits or $m+1$ qubits. $U$ (or $V$) is any unitary acting on $m$ qubits as it does not take into account the first qubit, i.e., the control qubit that was used in the gate $\{2,5\}$. We refer the reader to Appendix \ref{annex:m34} for details.

\subsection{The polarizing beam splitter}

In a Polarizing Beam Splitter (PBS), birefringent material separates an incident beam according to its polarization ---see Fig. \ref{fig:PBS:vh}.
A transmitted photon moves toward a device denoted by A that rotates the polarization angle of the light, and after that, moves toward a similar device denoted by B that rotates the polarization in a different way.

By using two PBS, the reflected version of the photon encounters B first, then A. The end result of the polarization in this case is different. There are two quantum paths: either A occurs before B, or B occurs before A, yielding an indefinite causal order. Indefinite causal orders are related to the time order of the evolution of the underlying quantum system. In the first case, A causally influences B so that if A had not occurred, then B’s input and output would be totally different. Conversely, in the second case, B causally influences A.

PBS are thus used in the literature to implement indefinite causal orders \cite{Goswami_2018,PBScalculus,PBScalculus:coh}, as in PBS the physical location of quantum data is itself quantum. Therefore, AQCs are well-suited to represent such devices, unlike textbook circuits which are ignorant towards~locations.

Here, we consider a PBS where horizontally polarized photons $(H)$ pass through the bifringent material, while vertically polarized ones $(V)$ are reflected without phase shift.
Photons as the physical carriers of data are represented by Fock states
with a finite number of particles, and are further distinguished by their polarization. We represent this PBS as an AQC $((\calA,\calG,\calQ,S),\ket{\psi_0})$, with $\calD = \{ 0_V,1_V,\dots m_V\} \times \{ 0_H, 1_H,\dots m_H \}$ for some integer $m$ being the total number of photons present in the circuit, $\calQ=\calD^{\leq1}$,  $\calA = \{ 1 ,\dots, 8 \}$, and $\calG = \{ \{1,2,3,4\},\{5\},\{6\},\{7\},\{8\}\}$. The first gate $S^{\{1,2,3,4\}}$ corresponds to the PBS itself, as shown in Fig. \ref{fig:PBS:vh}.
Sectors $5-8$ provide the context in which the the PBS is evolving and thus play the same role as the Sectors $1$ and $4$ of the Bell state creation AQC and the Sectors $1$ and $6$ in the \textsc{Switch} AQC.
In this AQC, photons can enter the PBS from any direction. The gate operator $S^{\{1,2,3,4\}}$ makes them travel to the right sector depending on their polarization (Fig. \ref{fig:PBS:vh}). They exit or enter the PBS during the transport step (Fig. \ref{fig:PBS:quiescent}).

We choose the initial state $\ket {\psi_0} =  \bigotimes_{i=5}^{8}\ket{i-4}_\calT^i\ket{i}_\calT^{i-4}\ket{\psi_i}^i_{\calQ_I}$ where the $\ket{\psi_i}$ represent entry photons ---see Fig. \ref{fig:PBS:quiescent}, or a superposition of such $\ket{\psi_0}$. The content of the target spaces entails that each entry of the PBS may communicate only with the sector in front of it.

\begin{figure}[h]
    \centering
    \begin{subfigure}[t]{0.4\textwidth}
        \centering
        \resizebox{\textwidth}{!}
        {\renewcommand{\eps}{\epsgr}\begin{tikzpicture}
    \draw[dashed] (0,6) rectangle (10,-4);
    \adresse{0}{0}{1}{4};
    \adresse{4}{4}{2}{1};
    \adresse{8}{0}{3}{2};
    \adresse{4}{-4}{4}{3};
    \flechedouble{1V}{2V}{4}{2}{dark_box_color};%{red};
    \flechedouble{1H}{3H}{5}{1}{dark_box_color};
    \flechedouble{2H}{4H}{5}{1}{dark_box_color};
    \flechedouble{4V}{3V}{6}{0}{dark_box_color};
        
    \draw[color=blue,line width=0.45cm,opacity=0.25] (3.5,-0.5)--(6.5,2.5);
\end{tikzpicture}\renewcommand{\eps}{\epsgr}}
        \caption{Polarizing Beam Splitter}
        \label{fig:PBS:vh}
    \end{subfigure}
    \hspace{20pt}
    \begin{subfigure}[t]{0.4\textwidth}
        \centering
        \resizebox{\textwidth}{!}
        {\renewcommand{\eps}{\epsgr}\begin{tikzpicture}[thick]
\draw[dashed] (-3.3,-3.3) rectangle (5.2,5.2);
\sector{-2.5}{0}{$\eps$}{1}{$\eps$}{$\eps$}
\sector{-6}{0}{$\psi_5$}{5}{$\eps$}{$\eps$}
\sector{0}{2.5}{$\eps$}{2}{$\eps$}{$\eps$}
\sector{0}{6}{$\psi_6$}{6}{$\eps$}{$\eps$}

\sector{2.5}{0}{$\eps$}{3}{$\eps$}{$\eps$}

\sector{6}{0}{$\psi_7$}{7}{$\eps$}{$\eps$}

\sector{0}{-2.5}{$\eps$}{4}{$\eps$}{$\eps$}

\sector{0}{-6}{$\psi_8$}{8}{$\eps$}{$\eps$}

\fleche[bend right= 40]{bottommid6}{topmid2}{black}
\fleche[bend right= 40]{topmid2}{bottommid6}{black}
\fleche[bend right= 40]{topmid8}{bottommid4}{black}
\fleche[bend right= 40]{bottommid4}{topmid8}{black}

\fleche[bend right= 40]{outarrow3}{inarrow7}{black}
\fleche[bend right= 40]{inarrow7}{outarrow3}{black}
\fleche[bend right= 40]{inarrow1}{outarrow5}{black}
\fleche[bend right= 40]{outarrow5}{inarrow1}{black}

\end{tikzpicture}\renewcommand{\eps}{\epsgr}}
        \caption{The PBS AQC, represented with addresses around it}
        \label{fig:PBS:quiescent}
        \end{subfigure}
     \label{fig:PBS}
    \caption{$H$ and $V$ indicate dataspaces corresponding to photons of horizontal (resp. vertical polarization). The dashed square represents the multi-sector PBS gate.}
\end{figure}

The gate operator of the PBS gate $S^{\{1,2,3,4\}}$ acts on general states of the Sectors $1$ to $4$ as
\begin{equation}\label{eq:pbs}
S^{\{1,2,3,4\}}:\bigotimes_{i=1}^{4}\ket{i+4}_\calT\ket{n_i}_V\ket{m_i}_H\longmapsto\bigotimes_{i=1}^{4}\ket{i+4}_\calT\ket{n_{v(i)}}_V\ket{m_{h(i)}}_H,
\end{equation}
where $v$ (resp. $h$) is the function associating to a sector's vertical (resp. horizontal) polarization the sector indicated in Fig. \ref{fig:PBS:vh}:  
$v(i) = i \text{ mod } 2 + 1 + 2 \lfloor i/2\rfloor$ and $h(i) = (i + 1 ) \text{ mod } 4 + 1$.

\section{Renamings}\label{sec:renamings}

So far, in our model a single arbitrarily chosen integer serves both as a name and as the address of a sector. Without further restrictions, this can lead to physically odd behavior: 
For example, one easily constructs gate operators whose action is controlled by the address stored in the target address space, see Fig.~\ref{fig: Not_RI}. 
This is quite strange, since in our model we wish to think of sectors merely as different physical locations and of addresses as just encoding the underlying geometry, i.e. the wirings of the gates. 

A standard way to implement this independence is to require invariance under \emph{renamings}, 
i.e. under permutations of addresses. This is akin to requesting identifier-obliviousness in distributed algorithms \cite{FraigniaudOblivious0}. We call a gate operator \emph{renaming-invariant}, or simply \emph{nameblind} if it commutes with all such permutations. For instance, after renaming $x\mapsto x+100\text{~mod~} |\calA|$, the gate operator $S^{\{1\}}$ that was applied at location $\{1\}$ is applied at location $\{101\ \text {mod}\ |\calA|\}$. 
Thus, as renamings reshuffle ``physical locations'' and their associated gate operators, renaming invariance is a global condition. 
In Sect. \ref{namedaqg}, after a detour via a larger Hilbert space in which physical locations and their addresses are explicitly separated, we deduce an equivalent but local condition on gate operators. 

\begin{figure}[ht!]
	\begin{center}
		\begin{subfigure}[b]{0.49 \textwidth}
			\begin{center}
				\resizebox{0.65\textwidth}{!}{\begin{tikzpicture}[thick]
\sector{0}{0}{\textcolor{red}{$U_{\calQ_ I}\psi$}}{1}{$\eps$}{$\eps$}
\sector{5.2}{1.3}{$\eps$}{2}{$\eps$}{$\eps$}
\sector{5.2}{-1.3}{$\eps$}{3}{$\eps$}{$\eps$}
\fleche[in=180, out=0]{outarrow1}{inarrow2}{red}
\end{tikzpicture}}
				\caption{When Sector $1$ targets Sector $2$, it applies $U_{\calQ_ I}$.}
			\end{center}
		\end{subfigure}
		\begin{subfigure}[b]{0.49 \textwidth}
			\begin{center}
				\resizebox{0.65\textwidth}{!}{\begin{tikzpicture}[thick]
\sector{0}{0}{\textcolor{blue}{$V_{\calQ_ I}\psi$}}{1}{$\eps$}{$\eps$}
\sector{5.2}{1.3}{$\eps$}{2}{$\eps$}{$\eps$}
\sector{5.2}{-1.3}{$\eps$}{3}{$\eps$}{$\eps$}
\fleche[in= 180,out =0]{outarrow1}{inarrow3}{blue}
\end{tikzpicture}}
				\caption{When Sector $1$ targets Sector $3$, it applies $V_{\calQ_ I}$.}
			\end{center}
		\end{subfigure}
		\caption{Example of a gate operator that applies different operators depending on the targeted sector and is hence not ``renaming invariant''. Formally this operator could be written as $\ket{2}\bra{2}_\calT\otimes {U}_{\calQ_ I} + \ket{3}\bra{3}_\calT\otimes {V}_{\calQ_ I}$.}
		\label{fig: Not_RI}
	\end{center}
\end{figure}

A practical advantage of working with nameblind gate operators is the reduced number of cases to be specified. For instance, a gate operator manipulating $6$ addresses acts over $6!\times 7$ basis states of the address spaces if the target space is empty (ordering the six addresses and separating them amongst input and output) plus $6\times 6!$ such basis states if the target space is occupied, which gives a total of $9360$ ---see Appendix \ref{annex:addresses}. If the gate operator is nameblind, it suffices to specify its behaviour on $13$ basis states (fixing an order of the addresses and distributing them target, input and output) cases to specify it entirely. Renaming invariance is thus a very restrictive condition, and one may wonder whether it leaves any room for superposing addresses at all. Surprisingly, it does. We will fully characterize the structure of nameblind gate operators in Sect. \ref{sec:nameblind}.

An instructive analogy to renaming invariance from daily life is that of postal delivery, where a letter has to be delivered to a certain house with some house number. This house number is fairly arbitrary and might change due to bureaucratic processes. To guarantee the success of the mailman, when that happens, one has to make that change consistently, i.e. also on the letter and even in the phone books. In other words, if  only the house number changes, the letter will not arrive%(unless you have an attentive mailman  \cite{bukowski2011post})
. Below, we further distinguish renamings into ``internal'' and ``external'' ones w.r.t. a gate. Such renamings can be thought of as changing the house number only in a particular street and everywhere except in that street, respectively.

We remark that the idea of invariance under renamings of pointers is standard in the realm of classical programming languages. For instance, it is standard to consider pointers (or references) as opaque in high-level programming languages such as Java, or OCaml. From a semantics perspective, one way to model such opaque names is to use nominal sets: a variable that is precisely invariant under renaming. Unlike for nominal sets, the set of names in the current setting is finite. 

\subsection{Named AQC}\label{namedaqg}
In the formalism of AQC we use addresses in two distinct ways, with two different notations. 
Take for instance the state of a single sector $\ket{s}^i=\ket{t}^i_\calT\ket\ovva_\calW^i\ket\ovvq_\calQ^i$. 
On the one hand, addresses are ``physical locations''. This is denoted by the superscript $i$ (think of it like a house).
On the other hand, addresses are also used to target the flow of information. Those are the $t$ (think of it like the address of another house on an envelope) and the $\ovva$ (think of it like an address book from which the gate operator can choose). These are denoted as states of certain registers. 

In order to discuss renamings in the AQC context, we need to carefully separate the two uses of addresses, and put them on an equal footing. We achieve this by extending the Hilbert space by a ``name space'' that is induced by $\calN\equiv\calT$. Letting $\ket{t}^i_\calT\ket\ovva_\calW^i\ket\ovvq_\calQ^i \mapsto \ket{t}^i_\calT\ket{\ovva}^i_\calW \ket{\ovvq}^i_\calQ \ket{i}^i_\calN$ leads to the modified formalism which we refer to as \emph{Named Addressable Quantum Circuit} (NAQC) ---see \ref{annex:naqc:def}. 
In that larger space, when we rename addresses, we no longer need to modify the order in which the corresponding sectors occur. Renamings act locally upon each sector without shuffling them around: 
\begin{equation*}
	R\bigg(\bigotimes_i\ket{\ovva\ovvq}^i\ket{i}^i\bigg)  = \bigotimes \ket{R(\ovva)\ovvq}^i\ket{R(i)}^i.
\end{equation*}
In other words, superscript $i$ is the physical location (the house) at which a gate $S^{i}$ will always be applied, before and after a renaming. $\ket{R(i)}^i$ is the name of the physical location (the number on the house). Any occurrence of $R(i)$ within some $\ket{R(\ovva)\ovvq}^j$ still points toward the same physical location (address of the house). The identification between AQC and NAQC thus extends to 
\begin{equation}\label{eq:equivalence}
    \ket{t}^i_\calT\ket\ovva_\calW^i\ket\ovvq_\calQ^i \Longleftrightarrow  \{ \ket{R(t)}^i_\calT\ket{R(\ovva)}^i_\calW \ket{\ovvq}^i_\calQ \ket{R(i)}^i_\calN\ \ |\ \ R \text{ is a renaming}\}
\end{equation}
i.e. each AQC $A$ is associated to the equivalence class NAQCs whose renaming into the canonical name and then projection in the AQC space yields $A$.

\begin{definition}[Renaming and nameblind NAQC]
A bijection over $\cal{A}$ is called a \textbf{renaming}. It naturally extends pointwise to $\calW$ and $\calT$, and linearly to $\calH_{\calW}$, $\calH_{\calT}$, and to any Hilbert space by acting non-trivially only on its address (and  name) spaces. We call a renaming $R$ \textbf{external} to a gate $g$ if it acts trivially on $g$, i.e. if $a\in g\Rightarrow R(a)=a$, i.e. it only manipulates \emph{external addresses} $a \in \calA\setminus g$.
Moreover, we call an NAQC $((\calA,\calG,\calQ,S,\calN),\ket{\psi})$ \textbf{nameblind} if $S$ commutes with every renaming $R$, i.e. if $RS=SR$.
\end{definition}

Nameblindness can be tested locally on each gate by demanding that $\forall g \in \calG, RS^g = S^gR$. This allows us to lift the notion of nameblindness back to AQCs:

\begin{theorem}[Nameblind AQC]\label{thm:isomorphisme}
    Equivalence classes of nameblind NAQCs w.r.t. renaming are isomorphic to AQCs $((\calA,\calG,\calQ,S),\ket{\psi})$ for which for all $g$ in $\calG$, and for all renamings $E$ external to $g$, $S^gE=ES^g$.
    Such AQCs are therefore referred to as the \textbf{nameblind} AQCs.
\end{theorem}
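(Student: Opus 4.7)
The plan is to exhibit explicit inverse maps between nameblind NAQCs and external-commuting AQCs, working gate by gate. Since the scattering $\tilde S=\bigotimes_g \tilde S^g$ decomposes along the gate partition $\calG$ and any renaming $R$ acts locally on each sector (so $R^g := R|_{\text{sectors of }g}$ is well-defined), the global nameblindness $R\tilde S=\tilde SR$ is equivalent to the local conditions $R^g\tilde S^g=\tilde S^gR^g$ for each $g$ and every renaming $R$; this is precisely the locality remark stated just above the theorem. Thus it suffices to set up a correspondence at the level of individual gate operators, identifying the AQC Hilbert space of $g$ with the \emph{canonical-name subspace} of the NAQC Hilbert space of $g$ (the subspace in which each sector $i\in g$ has $\ket{i}^i_\calN$ in its name register) via equation~\eqref{eq:equivalence} with $R=\operatorname{id}$.

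For the forward direction, given a nameblind NAQC, define the AQC gate operator $S^g$ as the restriction of $\tilde S^g$ to the canonical-name subspace. For any renaming $E$ external to $g$ we have $E(i)=i$ for every $i\in g$, so $E$ leaves the name registers of the sectors of $g$ fixed and hence preserves the canonical-name subspace. Restricting the local nameblindness $E\tilde S^g=\tilde S^gE$ to this invariant subspace then yields $ES^g=S^gE$, as required.

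For the backward direction, given an AQC with $S^g$ commuting with every external renaming, lift to an NAQC by defining $\tilde S^g$ at each name configuration $\nu\colon g\to \nu(g)\subseteq\calA$ by
\[
\tilde S^g\big|_\nu \;=\; \hat\nu\, S^g\, \hat\nu^{-1},
\]
where $\hat\nu$ is \emph{any} renaming of $\calA$ extending $\nu$. Well-definedness is the main obstacle: two extensions $\hat\nu_1,\hat\nu_2$ of $\nu$ satisfy $(\hat\nu_2^{-1}\hat\nu_1)(i)=i$ for every $i\in g$, so $\hat\nu_2^{-1}\hat\nu_1$ is external to $g$, and the external commutation hypothesis forces $\hat\nu_1 S^g\hat\nu_1^{-1}=\hat\nu_2 S^g\hat\nu_2^{-1}$. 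Finally, choosing $\widehat{R\nu}:=R\hat\nu$ as an extension of $R\circ\nu$ gives $\tilde S^g|_{R\nu}=R\,\tilde S^g|_\nu\, R^{-1}$ for every renaming $R$, which assembled across all name configurations is precisely the local nameblindness of the lifted operator.

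The two constructions are mutually inverse: restricting the lifted $\tilde S^g$ to the canonical configuration permits the choice $\hat\nu=\operatorname{id}$, recovering $S^g$ exactly; and extending the restriction of a nameblind $\tilde S^g$ and re-evaluating at an arbitrary $\nu$ returns the original, since nameblindness already gives $\tilde S^g|_\nu=\hat\nu S^g\hat\nu^{-1}$. State spaces match via equation~\eqref{eq:equivalence}, and the transport step, which does not involve gate operators and is compatible with the canonical-name identification, is preserved, completing the isomorphism.
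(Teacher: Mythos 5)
Your proof is correct and follows essentially the same route as the paper's: both identify the AQC gate operator with the block of the named gate operator at the canonical name, and both reduce nameblindness of the NAQC to the statement that this block commutes with external renamings, the crux in each case being that two renamings carrying the canonical name to the same name $\nu$ differ by a renaming external to the gate. The only real difference is presentational: you define the lift via an arbitrary extension $\hat\nu$ and check well-definedness directly, whereas the paper first proves a unique decomposition of every renaming into external, internal and mixed parts, introduces internally-blind operators via a canonical choice $R_{e\to\mu}$, and then verifies commutation with each type of renaming in two separate lemmas --- your version is a mild streamlining of the same argument.
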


Summarizing, an AQC is nameblind if its gate operators commute with all external renamings. Next, we show that such gate operators are made of blocks of `nameblind matrices', i.e. matrices that commute with all renamings, and study the structure of these. 

\subsection{Nameblind matrices}\label{sec:nameblind}

Gate operators $S$ of nameblind AQCs, called \emph{nameblind} gate operators, manipulate all external addresses in the same manner, yet may do so in a non-trivial way. For instance, $S$ may choose to swap one for the other, or not, in a superposition, as shown by the general form of nameblind unitary matrices over two addresses:
\begin{equation*}
	U_{\pm}(\varphi,\vartheta)=e^{i\varphi}\left(\begin{array}{cc}
\cos(\vartheta)& \pm i \sin(\vartheta)\\
\pm i \sin(\vartheta) & \cos(\vartheta)
\end{array}\right)
\end{equation*}
understood as acting over $\textrm{Span}\{\ket{a\ b},\ket{b\ a}\}$ with any $a,b \in \mathcal{A}$, $a\neq b$.
Note that the data register of the gate space can readily be used to control these angles, e.g. we could let $S\ket{\overline{a}\ \overline{x}}=\left(U_+(\varphi_{\overline{x}},\vartheta_{\overline{x}})\ket{\overline{a}}\right)\otimes \ket{\overline{x}}$:

\begin{example}\label{example:controlledU}
The action of the \textsc{switch} gate $S^{\{2,5\}}$ in the cases shown in Fig. \ref{fig:qsaqc:2.5} 
is described by $U_+(\varphi_{\overline{x}},\vartheta_{\overline{x}})\ket{\overline{a}}$ with control angles:
$\vartheta_{\overline{x}} = \varphi_{\overline{x}} = 0 $ when the control qubit in $\overline{x}$ is $0$, and
$\vartheta_{\overline{x}} = \varphi_{\overline{x}} = \frac{\pi}{2} $ when the control qubit in $\overline{x}$ is $1$,
followed by a swap of input and output data spaces.
\end{example}

Keeping this in mind, we now turn our attention towards matrices over $n$ addresses, which commute with all renamings. 
Such matrices act on the vector space whose \emph{canonical basis} is identified with the words of length $n$ on the alphabet $\{1\dots n\}$, i.e. $n$ addresses between $1$ and $n$. Moreover, we sort this canonical basis in ascending lexicographic order, i.e.  
$\{1\,2\cdots (n-1)\,n,\quad 1\,2\cdots n\, (n-1),\quad\dots,\quad n\,(n-1) \cdots 2\,1.\}$

\begin{theorem}[$(n,n)$-nameblind matrices] \label{thm:nameblind}
Matrices on states containing all $n$ addresses which commute with all renamings, i.e. $(n,n)$\emph{-nameblind matrices}, are of the following form: 
\begin{equation*}
	\begin{pmatrix}
        D&B&M_1B&M_2M_1B&M_3M_2M_1B&\dots\\
        B&D&M_1BM_1&M_2M_1BM_1&\dots&\\
        BM_1&M_1BM_1&D&&& \\
        BM_1M_2&M_1BM_1M_2&&\ddots&& \\
       BM_1M_2M_3&\vdots&&&&\\
       \vdots&&&&&
    \end{pmatrix}
\end{equation*}
where the blocks act on $n-1$ addresses, $D$ is any $(n-1,n-1)$-nameblind matrix, $M_i$ is the renaming $i\leftrightarrow i+1$, and $B$ (see Sect. \ref{annex:partially-nameblind}) commutes with $M_i$ (see Sect. \ref{annex:rkktranspo}) for each $i=2,3,...,n-2$.
\end{theorem}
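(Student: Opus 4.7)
The proof proceeds by induction on $n$. The base case $n=2$ is immediate: the canonical basis is $\{\ket{12},\ket{21}\}$, the only non-trivial renaming is $M_1$, and the nameblind matrices are exactly those of the form $aI+bM_1$, matching the displayed block form with $D=a$ and $B=b$. For the induction step, I would decompose the $n!$-dimensional space according to the first letter of each basis word: $\calH=\bigoplus_{k=1}^n V_k$, where $V_k$ is the $(n-1)!$-dimensional subspace spanned by basis words whose first letter is $k$. Any matrix $U$ then takes an $n\times n$ block form $(U_{ij})$ with $U_{ij}\colon V_j\to V_i$ of size $(n-1)!\times(n-1)!$.

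Nameblindness is then translated into constraints on the blocks via the appropriate stabilizer subgroups of $S_n$. For the diagonal block $U_{kk}$: the pointwise stabilizer of $k$ in $S_n$ is $S_{n-1}$ acting on the tails by renaming, and commutativity of $U$ with these renamings forces $U_{kk}$ to be $(n-1,n-1)$-nameblind after identifying $\{1,\ldots,n\}\setminus\{k\}$ with $\{1,\ldots,n-1\}$. By the induction hypothesis, $U_{kk}$ has the claimed recursive form, and the renamings $1\leftrightarrow k$ identify all diagonal blocks with a single matrix $D$. For the off-diagonal block $U_{12}$: the pointwise stabilizer of $\{1,2\}$ is $S_{n-2}$ acting on $\{3,\ldots,n\}$, so $B := U_{12}$ must commute with this subgroup, which in the lex-ordered tail basis is generated by $M_2,\ldots,M_{n-2}$. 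Every remaining off-diagonal block $U_{ij}$ is then obtained from $B$ by transport along a renaming carrying the ordered pair $(1,2)$ to $(i,j)$.

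The main obstacle is the bookkeeping in this last step: determining exactly which products of adjacent transpositions appear on the left and right of $B$ in each position. I would handle this by fixing a canonical reduced expression for each such renaming. For instance, the renaming sending $(1,2)$ to $(1,j)$ is the cycle $(2\,3\,\cdots\,j)$, whose reduced word is $M_{j-1}\cdots M_2$; after accounting for the re-identification of each tail alphabet with $\{1,\ldots,n-1\}$, this yields the displayed left factor $M_{j-2}\cdots M_1$ in row $1$. A symmetric argument handles column $1$, and chaining adjacent transpositions then handles the generic $(i,j)$ block. Finally, for sufficiency, I would verify directly that any block matrix of the displayed form commutes with every generator $M_k$ of $S_n$: the induction hypothesis on $D$, together with the commutation conditions $M_iB=BM_i$ for $i=2,\ldots,n-2$, is precisely what is needed for the braid-like arrangement of $M_k$-factors in neighbouring blocks to slide consistently through one another.
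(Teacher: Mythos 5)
Your proposal is correct and takes essentially the same approach as the paper: both decompose the $n!$-dimensional space by the first letter of each word, translate commutation with the adjacent transpositions $R_k$ (whose block form is computed in Lemma~\ref{lemme:rkktranspo}) into the conditions that all diagonal blocks equal a single nameblind $D$, that $B=A_{1,2}$ commutes with $M_2,\dots,M_{n-2}$, and that the remaining blocks are obtained from $B$ by left/right products of $M_i$'s, and then verify sufficiency via the braid relation $M_{k-1}M_kM_{k-1}=M_kM_{k-1}M_k$ together with disjoint-support commutation. The only difference is presentational---you organize the computation as an induction using stabilizer subgroups of $S_n$, where the paper phrases the very same block conditions as its ``blue/green/black'' conditions---so there is no gap to report.
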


By forcing the partially-nameblind matrix $B$ to be nameblind, inductively so, one obtains a convenient subfamily of $(n,n)$-nameblind matrices, built from exactly two smaller matrices of the same subfamily. Even though they are of size $n!\times n!$, these have only $\calO(2^n)$ degrees of freedom. This makes these matrices easy to construct in practice.

\begin{corollary}[Pure nameblind matrices]\label{prop:charac_general}
A pure nameblind matrix on $n$ addresses is a matrix of the following form :
    \begin{equation*}
        \begin{pmatrix}
        D&B&M_1B&M_2M_1B&M_3M_2M_1B&\dots\\
        B&D&B&M_2B&\dots&\\
        M_1B&B&D&&& \\
        M_1M_2B&M_2B&&\ddots&& \\
       M_1M_2M_3B&\vdots&&&&\\
       \vdots&&&&&
        \end{pmatrix},
    \end{equation*}
    where $D$ and $B$ are pure nameblind matrices on $(n-1)$ addresses and the $M_i$ are as in Thm.~\ref{thm:nameblind}. All pure nameblind matrices are also nameblind.
\end{corollary}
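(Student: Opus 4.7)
The plan is to deduce the Corollary as a specialization of Theorem~\ref{thm:nameblind}, combined with an induction on $n$ reflecting the recursive definition of pure nameblind matrices. Since the pure form on $n$ addresses is built from two pure nameblind matrices $D$ and $B$ on $n-1$ addresses, I would prove simultaneously by induction that (a) every pure nameblind matrix on $n-1$ addresses is $(n-1,n-1)$-nameblind, and (b) the block matrix displayed in the Corollary is $(n,n)$-nameblind. The base case is immediate, as pure nameblind matrices on the empty or singleton address set are scalars and commute with the trivial renaming.

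For the inductive step, the key consequence of the induction hypothesis is that $B$ commutes with every renaming of $\{1,\ldots,n-1\}$, in particular with each transposition $M_1, M_2, \ldots, M_{n-2}$. I would then compare the pure form entry-by-entry with the general form of Theorem~\ref{thm:nameblind}: wherever the general form contains a product $M_{j_1}\cdots M_{j_k}\, B\, M_{\ell_1}\cdots M_{\ell_p}$, the relations $B M_i = M_i B$ together with $M_i^2 = \mathrm{Id}$ let me pull all the $M$'s through $B$ and collapse them, recovering exactly the pure form's entry. Quick sanity checks: the general form's $M_1 B M_1$ at position $(2,3)$ becomes $B$; $M_2 M_1 B M_1$ at $(2,4)$ becomes $M_2 B$; $B M_1$ at $(3,1)$ becomes $M_1 B$; $B M_1 M_2$ at $(4,1)$ becomes $M_1 M_2 B$. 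All match the displayed pure form. The weaker hypothesis of Theorem~\ref{thm:nameblind}, namely commutation of $B$ with $M_2,\ldots,M_{n-2}$, holds a fortiori, and $D$ is $(n-1,n-1)$-nameblind by induction, so Theorem~\ref{thm:nameblind} applies and yields $(n,n)$-nameblindness of the constructed matrix, closing the induction.

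The main technical obstacle I anticipate is making this pattern-matching precise for all entries beyond the handful explicitly displayed. This requires extracting a closed-form description of the $(i,j)$-block of the general form --- presumably a left/right product of consecutive transpositions around $B$ (or $D$ on the diagonal) uniquely determined by $i$ and $j$ --- and then verifying uniformly that this product simplifies to the corresponding pure-form entry using only $B M_i = M_i B$ and $M_i^2 = \mathrm{Id}$. Once that indexing scheme is nailed down, the reduction becomes mechanical; but formalizing the pattern hidden behind the ellipses and diagonal dots in both displayed matrices, so that the two forms can be compared rigorously block by block, is the genuine work.
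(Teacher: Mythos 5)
Your proposal is correct and follows essentially the same route as the paper, which presents the corollary as an immediate specialization of Theorem~\ref{thm:nameblind}: since a pure nameblind $B$ commutes with every $M_i$, the theorem's hypothesis on $B$ holds a fortiori, and each general-form block $M_{j-2}\cdots M_1 B M_1\cdots M_{i-1}$ collapses to the displayed pure-form entry via $BM_i=M_iB$ and $M_i^2=\mathrm{Id}$. The paper leaves the simultaneous induction on the recursive definition and the block-by-block verification implicit, so the scaffolding you describe is precisely what is needed to make that one-line argument rigorous.
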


Our main goal is to characterize gate operators, which do not necessarily act on the full address space of an AQC. 
As gate operators never create nor suppress addresses, we can consider separately all lengths of address words. 
For instance, with $\calA=\{1,2,3\}$, we consider independently the action of a gate operator on words of length 2, i.e. on the subset $\{12,21,13,31,23,32\}$

Let $M$ be a nameblind matrix acting on words in $\calA^-$ of size $m\leq n = |\calA|$, meaning that $M$ is a nameblind matrix on a set of external addresses, some of which may not be present in the basis state. 
Such matrices, which we additionally require to neither create nor suppress addresses are called a $(m,n)$-nameblind matrix. They are characterized as follows:

\begin{proposition}[$(m,n)$-nameblind matrices]\label{prop:nameblind}
Let $M$ be a $(m,n)$-nameblind matrix. 
When sorting the basis first by which addresses are present and then by lexicographical order, $M$ can be written as $\bigoplus_{\calA' \subset_m \calA} D$
    where $D$ is a ($m!$)-dimensional $(m,m)$-nameblind matrix as in Thm. \ref{thm:nameblind}.
\end{proposition}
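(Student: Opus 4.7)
The plan is to exploit separately the two defining properties of an $(m,n)$-nameblind matrix: (i) that $M$ neither creates nor suppresses addresses, which yields a block decomposition indexed by the $m$-subsets of $\calA$; and (ii) that $M$ commutes with every renaming, which forces each block to be an $(m,m)$-nameblind matrix and, moreover, forces all blocks to coincide.

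First I would use the ``no creation/suppression'' condition, which reads $\bra{\ovvap} M \ket{\ovva} = 0$ whenever the underlying sets of letters of $\ovva$ and $\ovvap$ differ. Grouping the basis (words of length $m$ with distinct letters over $\calA$) according to the $m$-subset $\calA' \subseteq \calA$ of letters it contains therefore yields
\[M = \bigoplus_{\calA' \subset_m \calA} M_{\calA'},\]
where each block $M_{\calA'}$ acts on the $m!$-dimensional subspace spanned by the orderings of $\calA'$, sorted lexicographically as prescribed in the statement.

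Next I would fix a subset $\calA'$ and verify that $M_{\calA'}$ satisfies the hypotheses of Theorem~\ref{thm:nameblind}. Given any permutation $\tau$ of $\calA'$, extend it to a renaming $R$ of $\calA$ by setting $R$ to the identity on $\calA \setminus \calA'$; such an $R$ preserves each block of the decomposition, and the global commutation $RM=MR$ restricts on $\calA'$ to $\tau M_{\calA'}=M_{\calA'}\tau$. Hence $M_{\calA'}$ commutes with the full symmetric group on $\calA'$, and after identifying $\calA'$ with $\{1,\dots,m\}$ via its order-preserving bijection it becomes an $(m,m)$-nameblind matrix $D$ of the form given in Theorem~\ref{thm:nameblind}.

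To finish, I would show that $D$ is independent of $\calA'$. Take any two $m$-subsets $\calA',\calA''$, let $\sigma:\calA'\to\calA''$ be the unique order-preserving bijection, and extend $\sigma$ arbitrarily to a renaming $R$ of $\calA$. Under the order-preserving identifications of $\calA'$ and $\calA''$ with $\{1,\dots,m\}$, the induced action of $R$ between the two lex-ordered bases is the identity matrix, so the relation $R M_{\calA'} = M_{\calA''} R$ collapses to $M_{\calA'} = M_{\calA''}$ as matrices. The only delicate point is precisely this book-keeping of the lex-order identification across different subsets; it is resolved cleanly by always using the \emph{order-preserving} representative $\sigma$ when comparing two blocks, which is what makes $D$ well-defined independently of $\calA'$.
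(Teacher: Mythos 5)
Your proposal is correct and follows essentially the same route as the paper: block-diagonality over $m$-subsets from the no-creation/suppression condition, nameblindness of each block from commutation with renamings supported on that subset, and equality of blocks from commutation with a renaming carrying one subset onto another. Your last step is in fact a cleaner rendering of the paper's block-equality lemma: where the paper enumerates the subsets by a binary ``size'', chains through consecutive ones, and builds explicit ``incrementation'' and ``carry'' renamings whose essential property is that they preserve relative order, you invoke directly the order-preserving bijection between any two $m$-subsets, which induces the identity matrix between the lex-ordered blocks and collapses the intertwining relation to equality.
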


\subsection{Gate Operators of a nameblind AQC}\label{nameblindqs} 
%\FloatBarrier

In practice, gate operators do not manipulate basis states solely composed of addresses. Basis states also contain data, separators between (\calT, $\calI$ and \calO) spaces and between the different sectors. Moreover, recall from Thm. \ref{thm:isomorphisme} that the gate operators of a nameblind AQC need only commute with external renamings. Still, they can be organized as blocks of nameblind matrices acting on well-identified subspaces, as in Fig. \ref{fig:subspaces:intext}, \ref{fig: nameblind block} and \ref{fig:subspaces}.

\begin{figure}[t]
    \centering
$\ket\eps_\calT\ket{1a}_\calI\ket\eps_\calO, \hspace{11pt} \ket\eps_\calT\ket{a1}_\calI\ket\eps_\calO,\hspace{11pt}\ket\eps_\calT\ket a_\calI\ket{1}_\calO,\hspace{11pt} \ket\eps_\calT\ket1_\calI\ket{a}_\calO,$\\ \phantom{.}\\
$\ket\eps_\calT\ket{\eps}_\calI\ket{1a}_\calO, \hspace{11pt} \ket\eps_\calT\ket{\eps}_\calI\ket{a1}_\calO,\hspace{11pt}\ket 1_\calT\ket a_\calI\ket{\eps}_\calO,\hspace{11pt} \ket1_\calT\ket\eps_\calI\ket{a}_\calO,$\\ \phantom{.}\\
$\ket a_\calT\ket{1}_\calI\ket\eps_\calO ,\hspace{11pt} \ket a _\calT\ket\eps_\calI\ket 1_\calO$
    \caption{Consider a gate operator $S^{\{1\}}$ acting over one external address $a$ plus its own address. Its state space can be organized into the above ten `nameblind' subspaces which should be interpreted as $V_1=\textrm{Span}_{a\in\mathcal{A}\backslash\{1\}} \{\ket\eps_\calT\ket{1a}_\calI\ket\eps_\calO\}$, $V_2=\textrm{Span}_{a\in\mathcal{A}\backslash\{1\}} \{\ket\eps_\calT\ket{a1}_\calI\ket\eps_\calO\}$, etc. Knowing in which of these subspaces $V_i$ we are, $a$ fully determines a basis state. That is, basis states can be identified with the tuple $(V_i,a)$. 
    All states within one of such subspaces evolve in the same manner under the action of $S^{\{1\}}$.
    Therefore, it suffices to describe the action of the gate operator on a general state $\ket{V_i}$ to describe its action on all the true basis states $\ket{V_i}\otimes\ket a$.}
    \label{fig:subspaces:intext}
\end{figure}

A nameblind gate operator $S$ can be written using only blocks of the form given by Thm. \ref{thm:nameblind}:
\begin{itemize}[noitemsep,topsep=0pt,parsep=0pt,partopsep=0pt]
    \item First, we decompose the large gate space into $|Q^{2|g|}|$ subspaces for each possible basis state of the dataspace. For each couple of basis states ($\ket{q_0}, \ket{q_1}) \in {\calH}_{\calQ^{2|g|}}^2$ we denote $S_{q_0\to q_1}$ the block of $S$ which maps states with data $\ket{q_0}$ to states with data $\ket{q_1}$.
    \item Then we regroup states according to their number of external addresses $m$ and the set of internal addresses $\calA_{int}$ contained in their address spaces.
    Fig. \ref{fig:subspaces:intext} exemplifies the subspace defined by $\calA_{int}=\{1\}$ and $m=1$.
    Since both parameters are preserved by $S$ and $R$, any $S_{q_0\to q_1}$ is block-diagonal in this basis.
    
    \item By also fixing positions $P_{int}$ and $P_{ex}$ for internal and external addresses, we can define even smaller subspaces preserved by renamings (but not by S). The blocks corresponding to this subspaces are $(m,n-|g|)$-nameblind matrices (Prop. \ref{prop:nameblind}) ---see Appendix \ref{annex: from GO to (n,n) nameblind matrices} . They are illustrated in Fig. \ref{fig:block:nameblind} and correspond to the different subspaces described in Fig. \ref{fig:subspaces:intext}. 
    
    \item Finally we know that these $(m,n-|g|)$-nameblind matrices are block diagonal over subspaces defined by a set of $m$ external addresses (Prop. \ref{prop:nameblind}), and only contain the same  $(m,m)$-nameblind matrix in each block (in Fig. \ref{fig: nameblind block}, each block $A_{i,j}$ denotes such a matrix). Indeed, these operators act the same way whatever the names of external addresses are, because we made such addresses indistinguishable.
\end{itemize}

These results are proved in full detail in Appendix \ref{annex: from GO to (n,n) nameblind matrices} together with Prop. \ref{prop:gateopdecomp}, which states formally that a gate operator can be decomposed into 
$(k,k)$-nameblind matrices for some $k$.

\begin{proposition}[Decomposition of gate operators into $(m,m)$-nameblind matrices]\label{prop:gateopdecomp}
Let $S^g$ be a gate operator on a set of addresses $\calA$. Suppose that it commutes with all renamings on external addresses. Consider any words on quantum data $q_0$ and $q_1$, internal addresses $\calA_{int} \subseteq g$ with positions $P_{int_{1}}$ and $P_{int_{2}}$, external addresses  $\calA_{ex} \subseteq  \calA \backslash g$ with positions $P_{ex_{1}}$ and $P_{ex_{2}}$
. The action of $S^g$ between states corresponding to these parameters $S^g_{\calA_{int},\calA_{ex},q_0\to q_1,P_{int_{1}}\to P_{int_{2}},P_{ex_{1}}\to P_{ex_{2}}}$, is a $(m,m)$-nameblind matrix, where $m = |\calA_{ex}|$.
\end{proposition}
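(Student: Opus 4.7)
The plan is to restrict $S^g$ successively to subspaces labelled by the parameters listed in the statement, and then to identify the final restriction as a matrix commuting with all permutations of the external name alphabet $\calA\setminus g$; by Prop.~\ref{prop:nameblind} applied in the case where every external address is present at once, this forces it to be $(m,m)$-nameblind. Throughout, the two structural facts I will use at every step are that gate operators preserve the multiset of addresses in their gate (Def.~\ref{def:gates}), and that $S^g$ commutes with every external renaming $E$ by hypothesis (the nameblindness condition of Thm.~\ref{thm:isomorphisme}).

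First, I decompose $\calH^g$ by data content: for each pair of basis data words $(q_0,q_1)$, the block $S^g_{q_0\to q_1}$ of $S^g$ is the piece mapping states with input data $q_0$ to states with output data $q_1$. Next, I further decompose by the set $\calA_{int}\subseteq g$ of internal addresses present in target and stored spaces. Since an external renaming fixes every element of $g$ pointwise, and $S^g$ preserves the overall address multiset of the gate, both $\calA_{int}$ and the number $m$ of external addresses are invariants of $S^g$ and of every external renaming. Hence the restriction $S^g_{\calA_{int},q_0\to q_1}$ is well-defined and still commutes with every external renaming.

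Now I fix positions $P_{int_1},P_{int_2}$ for internal addresses and $P_{ex_1},P_{ex_2}$ for external-address slots, in input and output respectively. External renamings leave positions untouched, so the input-side subspace indexed by $(P_{int_1},P_{ex_1})$ and the output-side subspace indexed by $(P_{int_2},P_{ex_2})$ are both stable under the renaming action. Both subspaces are naturally parametrised by the arrangement of $m$ distinct external addresses chosen from $\calA\setminus g$ at their $P_{ex}$ slots, which identifies them as copies of the same $m$-letter word space over the alphabet $\calA\setminus g$. Under this identification, the block $S^g_{\calA_{int},\calA_{ex},q_0\to q_1,P_{int_1}\to P_{int_2},P_{ex_1}\to P_{ex_2}}$, initially one sub-block among many, becomes a single square matrix on this word space. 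The relation $S^gR=RS^g$ restricts to the statement that this matrix intertwines the $\calA\setminus g$-permutation action on the input and output sides, i.e. commutes with every such permutation. It is therefore an $(m,n-|g|)$-nameblind matrix, and Prop.~\ref{prop:nameblind} decomposes it as a direct sum, indexed by $m$-element subsets $\calA_{ex}\subseteq\calA\setminus g$, of $(m,m)$-nameblind blocks. Selecting the summand corresponding to the prescribed $\calA_{ex}$ yields the claim.

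The main obstacle will be bookkeeping rather than insight: writing down precisely the subspace associated with each parameter tuple, checking that it is jointly invariant under $S^g$ (for those parameters $S^g$ preserves) and under all external renamings, and verifying that the canonical identification of the input-config and output-config subspaces via the position parameters is the one that turns the commutation $S^gR=RS^g$ into the standard commutation defining an $(m,n-|g|)$-nameblind matrix. Once this setup is in place, the block structure with $(m,m)$-nameblind diagonal pieces comes for free from Prop.~\ref{prop:nameblind}.
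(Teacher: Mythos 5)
Your proposal follows essentially the same route as the paper's proof: decompose by data words, then by the set of internal addresses and the number of external addresses (both preserved by $S^g$ and by external renamings), then by positions (preserved by renamings but not by $S^g$), show each resulting block inherits commutation with external renamings from $S^gR=RS^g$ via the projector argument, and finally apply Prop.~\ref{prop:nameblind} to extract the $(m,m)$-nameblind summand. The argument is correct and matches the paper's structure step for step.
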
 

\begin{figure}[htbp]
    \centering
    \savebox{\bigmat}{
    	\hbox{\scalebox{0.65}{
    		$B_i = 
    		\left(
    		\begin{array}{ccc|ccc|ccc|ccc}
    		A_{1,1} & 0 & 0 & A_{1,2} & 0 & 0 &&&& A_{1,N} & 0 & 0\\
    		0 & \ddots & 0 & 0 & \ddots & 0 && \dots &&0 & \ddots & 0\\
    		0 & 0 & A_{1,1} & 0 & 0 & A_{1,2} &&&& 0 & 0 & A_{1,N} \\\hline
    		A_{2,1} & 0 & 0 & A_{2,2} & 0 & 0 &&&& A_{2,N} & 0 & 0 \\
    		0 & \ddots & 0 & 0 & \ddots & 0 && \dots && 0 & \ddots & 0 \\
    		0 & 0 & A_{2,1} & 0 & 0 & A_{2,2} &&&& 0 & 0 & A_{2,N} \\\hline
    		&&&&&&&&&&\\
    		&\vdots&&&\vdots&&&\ddots&&&\vdots&\\
    		&&&&&&&&&&\\\hline
    		A_{N,1} & 0 & 0 & A_{N,2} & 0 & 0 &&&& A_{N,N} & 0 & 0\\
    		0 & \ddots & 0 & 0 & \ddots & 0 && \dots && 0 & \ddots & 0\\
    		0 & 0 & A_{N,1} & 0 & 0 & A_{N,2} &&&& 0 & 0 & A_{N,N}
    		\end{array}
    		\right)
    		$}
    	}
    }
    \subcaptionbox{\label{fig:block:data}Block $S_{q_0\to q_1}$ of a gate operator}[0.39\textwidth]{
	    \centering
	    \raisebox{\dimexpr\ht\bigmat-.5\height}{
	    \scalebox{0.8}{$S_{q_0\to q_1}= 
	    \left(\begin{array}{c|c|c|c}
	    B_1 & 0 & \dots & 0\\\hline
	    0 & B_2 & \dots & 0\\\hline
	    \vdots & \vdots & \ddots &\vdots\\\hline
	    0 & 0 & \dots & B_{f(n,m)}
	    \end{array}\right)
	    $}
	}
    }
	\subcaptionbox{\label{fig:block:nameblind}Detailed representation of the blocks $B_{i}$ of Fig. \ref{fig:block:data}. If $B_i$ corresponds to subspaces with $k$ external addresses, the $A_{i,j}$ are $(k,k)$-nameblind matrices.}[0.60\textwidth]{
     \centering
     \usebox{\bigmat}
    }
    \caption{
     To each pair of values $q_0,q_1$ in the data space corresponds a block $S_{q_0\to q_1}$ of the gate operator. 
     This block $S_{q_0\to q_1}$ may be divided into sub-blocks $B_i$ acting on subspaces having a given set of internal addresses $\calA_{int}$ and a given number of external addresses $m$ ---see Fig. \ref{fig:block:data}.
     Non-diagonal sub-blocks are zero because gate operators preserve these parameters. The diagonal sub-blocks are divided into  $(m,n)$-nameblind matrices on more addresses than the ones present---see Prop. \ref{prop:nameblind} for details. Those $(m,n)$-nameblind matrices are then split into $(m,m)$-nameblind matrices, yielding Prop. \ref{prop:gateopdecomp}.}
    \label{fig: nameblind block}
\end{figure}

\begin{example}
In Sect. \ref{whatcando}, all gate operators are nameblind.
As an example, let us argue that the gate operator of the $\textsc{Switch}$ $S^{\{2,5\}}$ from Sect. \ref{sec:qs} commutes with any renaming over $\calA = \{1,2,3,4,5,6\}$. 
$S^{\{2,5\}}$ is controlled by many inputs, therefore proving renaming invariance in all cases specified in Appendix \ref{annex:QS} would be cumbersome. Instead, we focus on the first case - the one leading to Fig. \ref{fig:qsaqc:1}.
. It explains how states of the subspace $\calK  = \{\ket {ta\ovvap}_{\calW_\calO}^2 \ket{\ovvq c}_{\calQ_\calI}^2 \ket {b}_{\calT}^5 \in \calH^{\{2,5\}}\}$
evolve:
\begin{equation}
S^{\{2,5\}} \ket {ta\ovvap}_{\calW_\calO}^2 \ket{\ovvq c}_{\calQ_\calI}^2 \ket {b}_{\calT}^5   = \begin{cases}\ket{t}_\calT^2\ket{a\ovvap,\ovvq c}_\calO^2 \ket {b}_{\calT}^5&\text{if }c=0\\
    \ket{a}_\calT^2\ket{t\ovvap,\ovvq c}_\calO^2 \ket {b}_{\calT}^5&\text{if }c=1\end{cases}
\end{equation}
To see that this operator is nameblind, we consider an element of the form $\ket {ta\ovvap}_{\calW_\calO}^2 \ket{\ovvq 1}_{\calQ_\calI}^2 \ket {b}_{\calT}^5 $ in $\calK$ and a renaming $R$:
\begin{align*}\label{eq: qs commute}
RS^{\{2,5\}}\ket {ta\ovvap}_{\calW_\calO}^2 \ket{\ovvq 1}_{\calQ_\calI}^2 \ket {b}_{\calT}^5
    &= \ket{R(a)}_\calT^2\ket{R(t)R(\ovvap),\ovvq c}_\calO^2 \ket {R(b)}_{\calT}^5 \\
    &= S^{\{2,5\}} \ket {R(t)R(a)R(\ovvap)}_{\calW_\calO}^2 \ket{\ovvq 1}_{\calQ_\calI}^2 \ket {R(b)}_{\calT}^5\\
    &=S^{\{2,5\}}R \ket {ta\ovvap}_{\calW_\calO}^2 \ket{\ovvq 1}_{\calQ_\calI}^2 \ket {b}_{\calT}^5.
\end{align*}
Similar calculations for the other cases show the nameblindness of $S^{\{2,5\}}$.
\end{example}

\section{Related works}\label{annex:related}

In the following, we discuss links between our model and other approaches to
indefinite causal orders in the literature.

\medskip
\noindent
\textit{Quantum Causal Graph Dynamics (QCGD).}~
Another approach to represent indefinite causal orders is to shun away from the circuit formalism and take inspiration from cellular automata. The QCGD framework \cite{ArrighiQCGD} features quantum superpositions of graphs evolving synchronously according to unitary local rules. Each vertex has a local quantum state and communicates with other vertices via ports. Compared to AQCs, vertices have replaced gates and edges have replaced wires. Like QCGD, our model is able to represent a quantum superposition of different graphs (the graphs of connectivity between gates). 
The interdiction of non-causal evolution in QCGD may appear as contradictory with the sudden appearance of an edge in the AQC between nodes of different gates --- for instance, when an address suddenly goes from the input address space to the target. 
However, by encoding address spaces into port names, one may simulate this seemingly non-causal edge appearance at the cost of having a larger number of ports. We can therefore encode any AQC into a QCGD (see Appendix~\ref{section:QCGD}).

\noindent
\textit{Quantum circuits with quantum control of causal order.}~
One way is to allow higher-order circuits, with supermaps encoding of quantum control \cite{wechs2021quantum}, in a way that generalizes the quantum switch. All thus far known physically realizable coherent control of orderings can be expressed within this QC-QC formalism. Notice that these supermaps cannot reuse the same black box twice, as causality is not handled as in AQCs.

\noindent
\textit{PBS calculus.}~
A second proposal uses a graphical language inspired by PBS to express causal ordering of general, non unitary quantum channels \cite{PBScalculus,PBScalculus:coh}.
The PBS diagram formalism \cite{PBScalculus} is a graphical language modeling coherent control of "purified channels" \cite{PBScalculus:coh}, i.e. unitary matrices enhanced with ancill\ae~ from an environment, to represent CPTP maps. It provides completeness theorems for coherent quantum control. This control is obtained by having the PBS as a primitive of this graphical language. Similar to AQC, the physical location of quantum data is itself quantum. However, unlike AQC in the PBS calculus wires are fixed.

\noindent
\textit{Routed quantum circuits (RQC).}~
A third idea consists in building a process theory \cite{coecke_kissinger_2017} which allows for superposition of paths \cite{RoutedQC}. RQC \cite{RoutedQC} is an extension of the quantum circuit formalism which allows for communication in a superposition of paths \cite{Quantum-Shannon-Theory}. In this work, the authors describe sectorial constraints to study causal decomposition \cite{Causal-decomposition} of the used unitaries. In this purpose they partition Hilbert spaces into sectors, over which they describe with a relation ---called a route--- which sectors are causally connected. This is reminiscent of our model as each sector's target may be understood as a specific sectorial constraint. However the formalism of \cite{RoutedQC} does not allow for a dynamical rewiring of the circuit.

\noindent
\textit{Causal Boxes.}~
A fourth way is to admit superpositions of temporal orders of messages with \emph{Causal Boxes} \cite{CausalBoxes}. Causality is ensured by enforcing that each new message depends on messages labelled with lower time tags. This allows for implementations of systems as the quantum switch or a the control of an unknown unitary. Our model bears similarity to \cite{CausalBoxes}, for instance the input and output spaces of sectors are analogous to the input and output wires of causal boxes.  Nevertheless, unlike \cite{CausalBoxes} where circuits are fixed, our model is inherently causal \emph{by construction}, despite the dynamical aspect of the wiring structure.

\noindent
\textit{Quantum Shannon Theory with superposition of trajectories.}~
The paper \cite{Quantum-Shannon-Theory} extends the usual Quantum Shannon Theory by allowing trajectories of the information carriers to be superposed. A separation between trajectory and internal quantum information is enforced similar to our separation between addresses and data. If "phase kickback" \cite{phasekickback} were to be used to encode supplementary information in the trajectories, they would become part of the message, and the standard Quantum Shannon Theory would be sufficient to express the communication protocol. In the AQC formalism, all external addresses are handled in the same way. This ensures that the phase kickback only arises from things such as the number of addresses and their position, but never from their value.

\section{Conclusion} \label{part:comparison}

{\em Summary.} We introduced Addressable Quantum Circuits (AQC), which consist of ``gates'' that are themselves divided into ``sectors''. Each sector contains ``data'' and ``addresses''. The evolution decomposes into two steps. During the `scattering step', each gate applies a local unitary operator over the data and addresses of its constituent sectors. During the `transport step', each sector holding an address within its `target space', swaps the content of its `output space' with that of the `input space' of the target sector. 
We showed how indefinite causal orders and superpositions of paths emerge naturally in this formalism, by encoding the quantum switch and the  polarizing beam splitter. 
Operations that manipulate addresses raised the question of their fundamental nature. If we require addresses to represent purely geometrical data, we must restrict such operations by demanding that they are indistinguishable. Hence our focus on `nameblind' AQCs, i.e. those whose scattering step commute with `renamings' of addresses. Through a detour via the sub-formalism of Named AQC, we derived local conditions on the gates of AQCs for it to be nameblind. The gates must commute with renamings over `external addresses', i.e. those addresses which do not pertain to the gate. We were able to give the general form of the $(n,n)$-`nameblind matrices' (i.e. acting only over lists of addresses and commuting with every renaming) first, and then show that gates are blockwise decomposable into these.
In the appendices we studied composability, and encoding within Quantum Causal Graph Dynamics (QCGD).\\
Altogether this provides a concrete, composable model of distributed quantum computing, featuring quantum evolutions of connectivity in the spirit of classical markovian graph dynamics \cite{MarkovianDynGraphs}, and a thorough understanding of the limitations brought by renaming invariance in the spirit of classical identifier-obliviousness \cite{FraigniaudOblivious0}.

{\em Perspectives.} The structure of nameblind matrices and gates might simplify even further if combined with the unitary condition. Further comparison with related models could also be fruitful: Are AQCs equivalent, in terms of their expressiveness, to quantum channels extended with vacuum states \cite{Quantum-Shannon-Theory}? Could an expressive process language be devised that would encompass AQC, including their dynamical aspects \cite{BruknerDynamics}?
We proved that each AQC can be simulated by a QCGD, but is the converse simulation always possible? This last question would help us establish a form of completeness of AQCs within in-principle physically realisable dynamics over quantum causal orders.\\

\begin{acks}
We thank Cyril Branciard and Michel Quercia for insightful discussions and Pascal Baßler for carefully reading the manuscript. This work was funded by the Deutsche Forschungsgemeinschaft (DFG, German
Research Foundation) -- 441423094 and the German Federal Ministry of Education and Research
(BMBF) within the funding program ``quantum technologies -- from basic
research to market'' via the joint project MIQRO under the grant number
13N15522. 
This publication was made possible through the support of the ID\# 61466 grant from the John Templeton Foundation, as part of the “The Quantum Information Structure of Spacetime (QISS)” Project (\href{qiss.fr}{qiss.fr}). The opinions expressed in this publication are those of the author(s) and do not necessarily reflect the views of the John Templeton Foundation».
\end{acks}

\bibliography{biblio}

\newpage

\appendix
\section{Addressable Quantum Circuits}\label{annex:defaqc}
\subsection{Details on the model}\label{annex:addresses}
Several notions are introduced in this paper. For a summary of notations ---see Sect. \ref{annex:notations}. This section provides details on Sect. \ref{Section:model}.

\subsubsection*{Addresses}
In the core of the paper the stored address spaces and the target spaces were defined in words. More formally, we have
\begin{definition}[Addresses]\label{def:addresses}
The \emph{set of addresses} $\calA$ is a finite subset of $\mathbb{N}$. Moreover, we denote by $\calW = \calA^-=\{a_1\dots a_n|a_i\in\calA,a_i\neq a_j\:\forall i,j\}$ the set of words of \emph{non-repeating addresses}, and by $\calT = \calA^?=\{a \mid a\in\calA^1\}\cup\{\eps\}$ the set of words of length at most one.
\end{definition}

\noindent When $n$ addresses are in a sector space, a gate operator can map them to any superposition of the $(2n+1)n!$ different states 
: 
\begin{itemize}
    \item if the target space is empty, there are $n+1$ possibilities for the number of addresses in the input space (the others are in the output space) and $n!$ different ways to order them.
    \item if the target space is non-empty, the $n-1$ remaining addresses are to be distributed between input and output space: any number of addresses from $0$ to $n-1$ can be in the input space, and as above, there are $n!$ different ways to order them order.
\end{itemize}

When $n$ addresses are in a sector space, and additionally the gate operator is \emph{nameblind}, there are only $2n+1$ different possibilities counted as above but up to reordering. It suffices to specify the behaviour of the gate operator on one of each of those $(2n+1)$ possibilities, to have them specified on all the corresponding $n!$ reordered possibilities, by renaming invariance ---see Fig. \ref{fig:subspaces:intext}. 

\subsubsection*{Flipping gates}\label{annex:gates}

To prepare the transport to a targeted sector, many gate operators push some of their content to the output spaces 
after manipulating it. This is achieved by flipping operations, which are in some sense necessary since gates whose scattering unitary is the identity $I$ are merely as a ``mirror'', i.e. they reflect incoming data. In the example of the Bell state creation circuit in \eqref{eq:eprgateoperator} the input and output spaces are flipped by
\begin{equation*}
    F:\ket{t}_\calT\ket{\ovva,\ovvq}_\calI\ket{\ovvap,\ovvqp}_\calO\longmapsto \ket{t}_\calT\ket{\ovvap,\ovvqp}_\calI\ket{\ovva,\ovvq}_\calO.
\end{equation*}
Other possibilities are flipping only the data spaces, i.e.
\begin{equation*}
    F_{\text{data}}:\ket{t}_\calT\ket{\overline{a},\ovvq}_\calI\ket{\ovvap,\ovvqp}_\calO\longmapsto \ket{t}_\calT\ket{\overline{a},\ovvqp}_\calI\ket{\ovvap,\ovvq}_\calO,
\end{equation*}
or more sophisticated flips that are conditioned on the state of the sector.

\subsubsection*{Evolution}

Like in Def. \ref{def:Evolution}, the example below uses the following convention on the states : $\ket{x}^y_z$ indicates that the space $z \in \{\calT,\calI,\calO\}$ of the sector $y \in \calA$ is in the state $\ket{x}$.

\begin{example}\label{example:transport}
Effect of the transport step $T$ with $\calA = \{1,2\} $:

Without basis reordering : 
\begin{equation*}
	\begin{array}{rcl}
    	\ket{2}^1_\calT \ket{\eps,q_1}^1_\calI \ket{\eps,\eps}^1_O & \multirow{2}{*}{$\longrightarrow_T$}& \ket{2}^1_\calT \ket{\eps,q_1}^1_\calI \ket{1,q_2}^1_O   \\[1mm]
     	\ket{\eps}^2_\calT \ket{1,q_2}^2_\calI \ket{\eps,\eps}^2_O  &  & \ket{\eps}^2_\calT \ket{\eps,\eps}^2_\calI \ket{\eps,\eps}^2_O
	\end{array}
\end{equation*}

With basis reordering of Sect. \ref{sect:address}: 
\begin{align*}
&\ket{2,\eps}_\calT  \ket{(\eps,q_1),(1,q_2)}_\calI \ket{(\eps,\eps),(\eps,\eps)}_O \\
\longrightarrow_T 
&\ket{2,\eps}_\calT \ket{(\eps,q_1),(\eps,\eps)}_\calI \ket{(1,q_2),(\eps,\eps)}_O
\end{align*}
\end{example}

The address $1$ and data $q_2$ were in the input space of the sector 2, and $T$ transports them to the output space of sector $1$, because sector $1$ targeted sector $2$.

\begin{example}\label{example:epr:ts}
The Bell state creation circuit evolves very simply --- missing the obvious coefficient here and not writing target spaces (as they do not change) and empty spaces :
\begin{equation*}
	\begin{array}{@{~~}r@{~~}l@{~~}l@{~~}l@{~~}l@{~~}l@{~~}l@{~~}l@{~~}l@{}}
	\ket{00}_{\calI}^1 &\to_S  & \ket{00}_{\calO}^1  &
    \to_T & \ket {00}_\calI^2 &
    \to_S & (\ket{0} + \ket 1)\ket0_\calO^2 &
    \to_T & (\ket0 + \ket1)\ket 0_\calI^3 \\[1em]
    & \to_S & \ket{00}_\calO^3 + \ket{11}_\calO^3 &
    \to_T & \ket{00}_\calI^4 + \ket{11}_\calI^4 &
    \to_S & \ket{00}_\calO^4 + \ket{11}_\calO^4  &
    \to_T & \ket{00}_\calO^4 + \ket{11}_\calO^4,
	\end{array}
\end{equation*}
where the lack of a target address in sector $4$ causes the data to stay there during the last transport step.
\end{example}

\subsection{Details of the Quantum Switch}\label{annex:QS}
\subsubsection{Definition}\label{annex:m34}
The description of the quantum switch in Sect. \ref{sec:qs} can be completed by specifying some gate operators $S^g$. Their action on the addresses is as follows:
\begin{equation*}
    S^{\{1\}} = F = S^{\{6\}}, \qquad S^{\{3\}} = M_{34}\cdot L_U, \qquad  S^{\{4\}} = M_{34} \cdot L_V.
\end{equation*}
Here, $L_g$ applies the gate $g$ 
to the first $m$ bits of data, i.e. on all data except the control qubit, and 
\begin{align*}
    M_{34} = \left(
    \begin{array}{l@{\,}l@{\,}l@{~~\longleftrightarrow~~}l@{\,}l@{\,}l}
    \ket{t}_\calT&\ket{\ovva}_{\calW_\calO} &\ket{\ovvq}_{\calQ_\calO} &  \ket{\eps}_\calT&\ket{t\ovva}_{\calW_\calI}&\ket{\ovvq}_{\calQ_\calI} \\
     \ket t_\calT&\ket\eps_{\calW_\calO}&\ket{\ovvq}_{\calQ_\calO} & \ket\eps_\calT& \ket t_{\calW_\calI}&\ket{\ovvq}_{\calQ_\calI}
     \end{array}
     \right)
\end{align*}
The gate operator $S^{\{2,5\}}$, the effect of which is detailed in \ref{annex:QS:correctness}, acts as follows (where $a,b,t \in \calA$, $\ovvap \in \calA^-$, $c\in\calD$, $\ovvq \in \calD^m$ and, as usual, states that are omitted correspond to $\eps$):

\begin{align}\label{eq:S25:s2}
S^{\{2,5\}} \ket {ta\ovvap}_{\calW_\calO}^2 \ket{\ovvq c}_{\calQ_\calI}^2 \ket {b}_{\calT}^5 &= \begin{cases}\ket{t}_\calT^2\ket{a\ovvap,\ovvq c}_\calO^2 \ket {b}_{\calT}^5&\text{if }c=0\\
    \ket{a}_\calT^2\ket{t\ovvap,\ovvq c}_\calO^2 \ket {b}_{\calT}^5&\text{if }c=1\end{cases}\\[2mm]
    \label{eq:S25:s5}
S^{\{2,5\}} \ket{b}_\calT^2\ket{t}_\calT^5\ket{\ovvq c}_{\calQ_\calI}^5 &= \ket{b}_\calT^2\ket{t}_{\calW_\calO}^5\ket{c}_{\calQ_\calI}^5\ket{\ovvq}_{\calQ_\calO}^5 \\[2mm]
\label{eq:S25:s2s5}
S^{\{2,5\}}\ket{t}_\calT^2\ket{b\ovvap}_{\calW_\calO}^2 \ket{c}_{\calQ_\calO}^2\ket{a,\ovvq}_\calO^5 &=\begin{cases}\ket{tb\ovvap}_{\calW_\calO}^2 \ket a_\calT^5\ket{\ovvq c}_{\calQ_\calO}^5 &\text{if }c=0 \\\ket{bt\ovvap}_{\calW_\calO}^2\ket a_\calT^5\ket{\ovvq c}_{\calQ_\calO}^5 &\text{if }c=1   \end{cases}
\end{align}
We ensure unitarity of $S^{\{2,5\}}$ by specifying a reciprocal version of the equations above:
\begin{itemize}
    \item reciprocal version of \eqref{eq:S25:s2}: $ \begin{cases}S^{\{2,5\}}\ket{t}_\calT^2\ket{a\ovvap,\ovvq 0}_\calO^2 \ket {b}_{\calT}^5 =
    \ket {ta\ovvap}_{\calW_\calO}^2 \ket{\ovvq 0}_{\calQ_\calI}^2 \ket {b}_{\calT}^5 \\
    S^{\{2,5\}}\ket{a}_\calT^2\ket{t\ovvap,\ovvq 1}_\calO^2 \ket {b}_{\calT}^5 = \ket {ta\ovvap}_{\calW_\calO}^2 \ket{\ovvq 1}_{\calQ_\calI}^2 \ket {b}_{\calT}^5  \end{cases}$
    \item of \eqref{eq:S25:s5}: $ S^{\{2,5\}} \ket{b}_\calT^2\ket{t}_{\calW_\calO}^5\ket{c}_{\calQ_\calI}^5\ket{\ovvq}_{\calQ_\calO}^5\ket\eps_\calW^2 = \ket{b}_\calT^2\ket{t}_\calT^5\ket{\ovvq c}_{\calQ_\calI}^5\ket\eps_\calW^2$
    \item and of \eqref{eq:S25:s2s5}: $\begin{cases}\ket{tb\ovvap}_{\calW_\calO}^2 \ket a_\calT^5\ket{\ovvq 0}_{\calQ_\calO}^5  = S^{\{2,5\}}\ket{t}_\calT^2\ket{b\ovvap}_{\calW_\calO}^2 \ket{0}_{\calQ_\calO}^2\ket{a,\ovvq}_\calO^5 \\\ket{bt\ovvap}_{\calW_\calO}^2\ket a_\calT^5\ket{\ovvq 1}_{\calQ_\calO}^5 = S^{\{2,5\}}\ket{t}_\calT^2\ket{b\ovvap}_{\calW_\calO}^2 \ket{1}_{\calQ_\calO}^2\ket{a,\ovvq}_\calO^5  \end{cases}$
\end{itemize}
These three reciprocal cases are not useful in usual executions of the quantum switch, where they have zero amplitude. 
On all other basis states, $S^{\{2,5\}}$ acts as the identity. This fully defines $S^{\{2,5\}}$.

\subsubsection{Correctness}\label{annex:QS:correctness}

The evolution of the quantum switch is displayed in Fig. \ref{fig:qsaqc}. Since the operator $S^{\{2,5\}}$ is the most interesting gate operator, let us describe its action in detail. In the scattering after step $2$, it acts as in \eqref{eq:S25:s2}. In step $5$, it acts as in \eqref{eq:S25:s5}, which allows the remaining data to wait in the output space of sector 5. Most importantly, the control qubit flows back into the circuit, allowing the addresses to return to their initial positions in steps 6-9. In step 8, they ``merge'' back together, according to \eqref{eq:S25:s2s5}. The data then flows out of sector 5 into the outgoing buffer sector 6, and can be processed further by concatenated AQCs (see Sect. \ref{annex:concatenation}) as the quantum switch is not entangled with the data anymore ---see e.g. \ref{fig:qsaqc:8.5}.

That is, at the end of Fig. \ref{fig:qsaqc}, the data has been delivered, and the quantum switch has folded back into its initial configuration. In Step 9 the final state of the data space is $UV\ket{\psi}\ket1+VU\ket{\psi}\ket0$ (when omitting scalar coefficients). This matches the behaviour described in \eqref{eq:qswitch}, which proves correctness.

As a final remark, note that according to Thm. \ref{thm:isomorphisme} the nameblindness of $S^{\{2,5\}}$ only would have required commutation with renamings over the set of \emph{external} addresses $\{1,3,4,6\}$. Yet, we proved in Sect. \ref{nameblindqs} that it also holds for the addresses $\{2,5\}$.

\section{Named AQC}\label{annex:naqc}

To prove the results on the nameblindness of the main text, we disentangle the notions of physical locations and ``names'' as arbitrary labels of sectors. This amounts to extending the Hilbert spaces we work with by a ``name space'', i.e. an additional address space, which allows us to carefully distinguish between instances of AQCs with differently chosen labelings. There are also other ways to describe a geometry than resorting to arbitrary names, e.g. working modulo isomorphism which, however, results in signalling problems \cite{ArrighiNamesInQG}.

\subsection{Definition of an NAQC}\label{annex:naqc:def}
Here we define formally every notion used for \emph{Named AQC} (NAQC) which differs from that used for AQC.

\begin{definition}[Named sector space]\label{def:sectorspace NAQG}

Let $i$ be the index of a sector. The Hilbert space ${\hat \calH}^i
= \calH_\calT\otimes \calH_\calI\otimes \calH_\calO \otimes\calH_\calN$ is referred to as a \emph{named sector space}, with $\calH_\calN$ a copy of $\calH_\calA$ referred to as the \emph{name space}.
\end{definition}

\begin{definition}[Named circuit space]\label{def: circuit space NAQG}
The circuit space ${\hat \calH}$ of an NAQC is the subspace of $\bigotimes_{i \in \calA} {\hat \calH}^i$ such that each address appears at most once in the combined target, input and output spaces, and exactly once in the name spaces.
\end{definition}

In the NAQC formalism, target spaces $\calH_\calT$ no longer target physical locations (a position as factor in the tensor product  $\bigotimes_i {\hat \calH}^i$). Instead, it targets the name $a\in\calA$ of that physical location, as occurring in some name space $\calH_\calN$ of some sector. A gate $g=\{i_1,i_2,\ldots,i_k\}$, on the other hand, consists of the sectors at the physical locations $\{i_1,i_2,\ldots,i_k\}$.

\begin{definition}[Named gate space]\label{def NAQC: gate operator}
The \emph{named gate space} of a gate $g$ consists of $|g|$ named sector spaces, i.e. it is given by ${\hat \calH}^g=\bigotimes_{i\in g} {\hat \calH}^i$.
\end{definition}

\begin{definition}[Named gate operator]\label{def: gate operator NAQG}

A \emph{named gate operator} on $g\in\mathcal G$ is a unitary $U:{\hat \calH}^g\to{\hat \calH}^g$ which preserves the addresses of the target and stored spaces as in Def. \ref{def:gates}, and separately 
the name space stays unchanged.
\end{definition}

For any name $\mu\in\calA^-$, we denote by ${\hat\calH}_{\mu}^g := {\calH}^g\otimes\ket{\mu}_\calN$ the subspace of ${\hat\calH}^g$ spanned by the basis states that give the name $\mu$ to the gate $g$. Clearly, ${\hat \calH}^g$ is the smallest Hilbert space containing $\bigoplus\nolimits_{\mu}{\hat \calH}^g_{\mu}$. Moreover, since gate operators by definition preserve the name of the gate space, they are of the form $U^g=\bigoplus\nolimits_{\mu} U^g_\mu$, i.e. named gate operators are block-diagonal with blocks $U^g_\mu$ acting on and preserving ${\hat \calH}^g_\mu$.

\begin{definition}[NAQC transport]\label{def: Transport NAQG}
The named transport $T$ maps synchronously $\ket{a}^i_\calT  \ket{x}^i_\calO \otimes \ket{y}^j_\calI\ket{a}^j_\calN$ to  $\ket{a}^i_\calT \ket{y}^i_\calO  \otimes \ket{x}^j_\calI\ket{a}^j_\calN$, leaving the other sector constituents unchanged. That is, $T$ swaps the output space of a sector with the input space of the sector with the \emph{name} it targets, i.e. whenever a sector $i$ targets a name $a$, $i$'s output space is swapped with the input space of the sector with name $a$.
\end{definition}

\subsection{Renaming NAQCs}\label{par: useful subspaces}

Let $\mu$ be a name, i.e. an element in ${\cal A}^{-}$. We call a renaming $R$, i.e. a permutation of $\calA^-$
\begin{itemize}
    \item \textbf{external} for $\mu$, if $R$ acts as the identity over each address in $\mu$.
    \item \textbf{internal} for $\mu$, if $R$ acts as the identity over each address in $\cal A \setminus \mu$.
    \item \textbf{mixed} for $\mu$, if $R$ is a composition of disjoint swaps fully specified by two equal sized sets $(M_1,M_2)$ where $M_1$ takes its elements in $\mu$ and $M_2$ takes its elements in $\cal A \setminus \mu$. $R$ swaps the first element of $M_1$ for the natural order over $\mathbb{N}$ with the first element of $M_2$, the second with the second and so on). On elements of $\calA\setminus(M_1\cup M_2)$, $R$ acts as the identity.
\end{itemize}

\begin{figure}[H]
    \centering
    \begin{subfigure}[t]{\textwidth}
        \centering
        \resizebox{0.7\textwidth}{!}{\renewcommand{\eps}{\epsgr}\definecolor{box_color}{rgb}{1,1,1}%{0.60390625,0.9109375,0.39109375}
\definecolor{data_color}{RGB}{0,0,255}%{0.40390625,0.61093
\definecolor{dark_box_color}{rgb}{0,0,0}
\def\bleu#1{{\textcolor{data_color}{#1}}}

\newcommand*{\xMin}{0}%
\newcommand*{\xMax}{10}%
\newcommand*{\yMin}{0}%
\newcommand*{\yMax}{2}%

\newcommand{\adresseCircuitGrid}[4]{%(x,y,name,data)
    \draw[fill=box_color] (#1-0.45,#2-0.45) rectangle (#1+0.45,#2+0.45);
    \node[draw,fill=dark_box_color!10] at (#1,#2+0.18) {%\color{white}
     #3};
    \node (#3) at (#1,#2-0.24) {$\ket{ #4}_\calW$};
}

\begin{tikzpicture}
%GRILLE
    \foreach \i in {\xMin,...,\xMax} {
        \draw [very thin,gray,dashed] (\i,\yMin) -- (\i,\yMax)  node [below] at (\i,\yMin) {$\i$};
    }
    \foreach \i in {\yMin,...,\yMax} {
        \draw [very thin,gray,dashed] (\xMin,\i) -- (\xMax,\i) node [left] at (\xMin,\i) {};
    }
%SECTORS
    \adresseCircuitGrid{2}{1}{9}{\eps};
    \adresseCircuitGrid{4}{1}{2}{89};
    \adresseCircuitGrid{5}{1}{5}{24};
    \adresseCircuitGrid{9}{1}{8}{\eps};
    \adresseCircuitGrid{8}{1}{4}{\eps}
%GATESPACES
\draw[dashed,red] (3.5,0.5) rectangle (5.5,1.5);
    
\end{tikzpicture}\renewcommand{\eps}{\epsgr}}
        \caption{An NAQC $\dots$}
        \label{figs/NAQC_1}
    \end{subfigure}
    
    \begin{subfigure}[t]{\textwidth}
        \centering
        \resizebox{0.7\textwidth}{!}{\renewcommand{\eps}{\epsgr}\definecolor{box_color}{rgb}{1,1,1}
\definecolor{data_color}{RGB}{0,0,255}
\definecolor{dark_box_color}{rgb}{0,0,0}
\def\bleu#1{{\textcolor{data_color}{#1}}}

\newcommand*{\xMin}{0}%
\newcommand*{\xMax}{10}%
\newcommand*{\yMin}{0}%
\newcommand*{\yMax}{2}%

\newcommand{\adresseCircuitGrid}[4]{%(x,y,name,data)
    \draw[fill=box_color] (#1-0.45,#2-0.45) rectangle (#1+0.45,#2+0.45);
    \node[draw,fill=dark_box_color!10] at (#1,#2+0.18) {
     #3};
    \node (#3) at (#1,#2-0.24) {$\ket{ #4}_\calW$};
}

\begin{tikzpicture}
%GRILLE
    \foreach \i in {\xMin,...,\xMax} {
        \draw [very thin,gray,dashed] (\i,\yMin) -- (\i,\yMax)  node [below] at (\i,\yMin) {$\i$};
    }
    \foreach \i in {\yMin,...,\yMax} {
        \draw [very thin,gray,dashed] (\xMin,\i) -- (\xMax,\i) node [left] at (\xMin,\i) {};
    }
%SECTORS
    \adresseCircuitGrid{2}{1}{2}{\eps};
    \adresseCircuitGrid{4}{1}{4}{92};
    \adresseCircuitGrid{5}{1}{5}{48};
    \adresseCircuitGrid{9}{1}{9}{\eps};
    \adresseCircuitGrid{8}{1}{8}{\eps}
%GATESPACES
\draw[dashed,red] (3.5,0.5) rectangle (5.5,1.5);

\end{tikzpicture}\renewcommand{\eps}{\epsgr}}
        \caption{$\dots$ renamed so that each gate space is named with the canonical name.}
        \label{figs/NAQC_2}
    \end{subfigure}
    
    \caption{An NAQC represented over a grid. Here $\calA=\{2,4,5,8,9\}$. Inside of each sector (big white boxes) we can find the content of the stored address space. Gray boxes represent the content of the name space.}
    \label{figs/circuit_on_a_grid}
\end{figure}

\begin{example}
    Consider the circuit in Fig. \ref{figs/NAQC_1}. We focus here on the gate space \{4,5\} which has the name $25$. Then
    \begin{itemize}
        \item $R_{4\leftrightarrow 8}$ is \textbf{external} because it only swaps external addresses.
        \item $R_{2 \leftrightarrow 5}$ is the only non-trivial \textbf{internal} renaming for this state.

        \item $R_{5\leftrightarrow 9,2\leftrightarrow 4}$ is \textbf{mixed} because it swaps external with internal addresses in an ordered fashion.
        \item $R_{5\leftrightarrow 4,2\leftrightarrow 9}$ is neither internal nor mixed nor external.
    \end{itemize}
    
\end{example}

Distinguishing these particular types of renamings is useful because any renaming $R$ can be uniquely decomposed into a mixed, an internal and an external renaming with respect to a given name $\mu$. The basic idea to construct this decomposition is to first apply a mixed renaming to bring all the letters of $R(\mu)$ into the gate space, then use an internal renaming to ensure that the gate space is named $R(\mu)$, and finally adjust the addresses that are not in $\mu$ with an external renaming. Let us define $R_{\mu\to\nu}$ the renaming changing the word $\mu$ to $\nu$, and not changing other letters.

\begin{proposition}[Decomposition of renamings]\label{prop: decomposition}~
\begin{enumerate}
\item Let $\mu$ and $\nu$ be names of same length. There is a unique renaming $R_{\mu\to\nu}=IM$ such that $M$ is mixed for $\mu$ and $I$ is internal for $M(\mu)$.
\item Let $R$ be a renaming and $\mu$ be an element of ${\cal A}^-$ ---see Def. \ref{def:addresses}. There exists exactly one pair of renamings such that:
\begin{equation*}
	R = ER_{\mu\to\nu}
\end{equation*}
where $\nu=R(\mu)$ and $E$ is external for $\nu$.
\end{enumerate}
\end{proposition}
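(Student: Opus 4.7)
For Part (1), my plan is to extract the mixed factor $M$ directly from the set-theoretic difference between the letters of $\mu$ and $\nu$, and then read off $I$ as the residual internal permutation. Let $A$ and $B$ denote the sets of letters appearing in $\mu$ and $\nu$ respectively, and set $M_1 := A \setminus B$ and $M_2 := B \setminus A$. These have the same cardinality (since $|A|=|B|$), and they satisfy $M_1 \subseteq A$ and $M_2 \cap A = \emptyset$, so they specify a legitimate mixed renaming $M$ for $\mu$ via the natural-order pairing convention. A direct check gives $M(A) = (A \setminus M_1) \cup M_2 = B$, so $M(\mu)$ is a word with the same letters as $\nu$; I then let $I$ be the unique internal renaming on $B$ realising $I(M(\mu)) = \nu$ letter-wise. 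Existence of the decomposition is then immediate. For uniqueness, I would argue that any other pair $(M', I')$ with the required properties must satisfy $M'(\mu) = \nu$ as sets (internal renamings preserve letter sets), which by the disjointness constraints $M'_1 \subseteq A$, $M'_2 \cap A = \emptyset$ forces $M'_1 = A \setminus B$ and $M'_2 = B \setminus A$; the natural-order convention then pins $M' = M$, and the equation $I' M(\mu) = \nu$ fixes $I' = I$.

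For Part (2), my plan is to exhibit the factorization explicitly and invoke Part (1). Put $\nu := R(\mu)$ and $E := R \cdot R_{\mu\to\nu}^{-1}$; then $R = E R_{\mu\to\nu}$ by construction. To check that $E$ is external for $\nu$, I would observe that $R_{\mu\to\nu}^{-1}$ sends each letter $b_i$ of $\nu$ back to the corresponding letter $a_i$ of $\mu$, while $R(a_i) = b_i$ because $R(\mu) = \nu$; hence $E(b_i) = b_i$ for every letter of $\nu$. Uniqueness of the pair is immediate: since $\nu = R(\mu)$ is determined by $R$, the second factor $R_{\mu\to\nu}$ is fixed by Part (1), and any admissible $E'$ must then equal $R \cdot R_{\mu\to\nu}^{-1} = E$.

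The only delicate point, and the place where the argument could slip, is the uniqueness claim in Part (1): one might worry that an internal reshuffle of the letters of $B$ could be absorbed into either $M$ or $I$. What rules this out is the rigidity of the definition of a mixed renaming: $M$ is fully determined by its swap sets $(M_1, M_2)$ together with the natural-order pairing inherited from $\mathbb{N}$, so once these sets are forced by the requirement $M(A) = B$, there is no leftover permutation freedom in $M$, and any remaining reshuffle is necessarily pushed entirely into $I$. Once this bookkeeping is carried out, Part (2) reduces to a one-line verification.
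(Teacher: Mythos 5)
Your proposal is correct and follows essentially the same route as the paper's own proof: both parts construct the mixed factor $M$ from the symmetric difference $\SET(\mu)\,\triangle\,\SET(\nu)$ with the natural-order pairing, define $I$ as the letter-wise internal correction on $M(\mu)$, derive uniqueness from the fact that internal renamings preserve letter sets (which forces the swap sets and hence $M$, then $I$), and obtain Part (2) by setting $E=R\,R_{\mu\to\nu}^{-1}$ and checking it fixes $\nu$ pointwise. No gaps; the cardinality bookkeeping you flag as the delicate point is handled correctly.
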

\begin{proof}\label{proof: decomposition} For any name $\mu=\mu_1\mu_2\ldots\mu_k\in {\cal A}^-$ we write $\SET(\mu)=\{\mu_1,\mu_2,\ldots,\mu_k\}$ and note that it is invariant under renamings that are external and internal for $\mu$.

(1) Let $R:\cal A \to \cal A$ be a renaming, and take $M$ as the renaming that swaps elements of 
\begin{equation*}
	\SET(\mu)\setminus \SET(R(\mu))
\end{equation*}
with elements of 
\begin{equation*}
	\SET(R(\mu))\setminus \SET(\mu)
\end{equation*}
and is mixed for $\mu$. 
Note that $\SET(M(\mu))=\SET(R(\mu))$, and that $M$ is unique since we have that $\SET(M(\mu))=\SET(EIM(\mu))$. Similarly, take $I$ as the renaming such that for every integer $0<k<|\mu|+1$
\begin{align*}
  I \colon \calA \to \calA,\qquad M(\mu)_k  \mapsto R(\mu)_k,
\end{align*}
and which is internal for $M(\mu)$. Since $IM(\mu)=EIM(\mu)$ for all $E$ external for $ IM(\mu)$, $I$ is the only possible choice. Together, this gives a unique decomposition $R_{\mu\to\nu}=IM$.

(2) This follows from defining $E=RM^{-1}I^{-1}$ which is unique since it is the only renaming that guarantees $R=EIM$. Moreover, $E$ is (trivially) external for $IM(\mu)$, since $EIM(\mu)=R(\mu)=IM(\mu)$, where the last equality is by construction of $I$ in the proof of (1).
\end{proof}

\begin{example}[Decomposition of renamings]
    Consider the renaming $R_{2\rightarrow 9 \rightarrow 8,4\leftrightarrow 5}$ on the circuit of Fig. \ref{figs/NAQC_1}, which maps the gate space state $\ket{2}_\calN\ket{89}_\calW^{4}\ket{5}_\calN\ket{24}_\calW^{5}$ to $\ket{9}_\calN\ket{28}_\calW^{4}\ket{4}_\calN\ket{95}_\calW^{5}$. We decompose it with respect to the name $\mu=25$:
    \begin{itemize}
        \item Since $R(\mu)=94$, the mixed renaming constructed in the above proof is $M=R_{5\leftrightarrow 9,2\leftrightarrow 4}$, which results in the state $\ket{4}_\calN\ket{85}_\calW^{4}\ket{9}_\calN\ket{42}_\calW^{5}$.
        \item Then, $I=R_{9\leftrightarrow 4}$ is the only possible internal renaming that reorders the name of the gate space, which is thus in the state 
        $\ket{9}_\calN\ket{85}_\calW^{4}\ket{4}_\calN\ket{92}_\calW^{5}$.
        \item Finally, $E=R_{2\rightarrow 5 \rightarrow 8}$ reorders the external addresses to match the desired ones.
    \end{itemize}
\end{example}

Next, we relate AQCs to NAQCs with specifically chosen gate names, called the canonical name ---see Fig. \ref{figs/NAQC_2}. 
To this end, consider a named gate space ${\hat \calH}^g$ with $k=|g|$ sectors. Its basis states are of the form 
\begin{equation*}
    \bigotimes_{\substack{i\in g\\i\hspace{1pt}\text{increasing}}}\ket{\overline{a}_i\overline{q}_i}\ket{\mu_i}_\calN^i
\end{equation*}
This state gives $k$ names $\mu_i$ to the sectors inside the gate space, thus we say that this state  gives the \textbf{name} $\mu=\mu_{i_1}\mu_{i_2}\ldots\mu_{i_k}$  to the gate $g={i_1 i_2\ldots i_k}$ with $i_1<i_2<\ldots<i_k$.

\begin{definition}[Canonical name]\label{def: canonical name}
Consider a gate $g=\{i_1,i_2,i_3,\dots,i_k\}$. The \emph{canonical name} of $g$ is $e=i_1i_2i_3\dots i_k$ with $i_1<i_2<\dots <i_k$. The basis states of ${\cal H'}^g_e$ take the form:
\begin{equation*}
    \bigotimes_{i\in g}\ket{\overline{a}_i   \overline{q}_i}\ket{i}^i
\end{equation*}
\end{definition}

By this definition, the basis states of ${\cal H'}^g_e$ are in one-to-one correspondence with those of the gate space $\calH^g$ of an AQC (Sect. \ref{Section:model}), i.e.
\begin{equation*}
	\bigotimes_{i\in g}\ket{\overline{a}_i   \overline{q}_i}^i.
\end{equation*}
Choosing a canonical name means choosing a representative for each equivalence class w.r.t. renamings --- see \eqref{eq:equivalence}. 
Thus the subspace $\bigotimes_{i\in g} {\cal H'}^g_{e(g)}$ of the named circuit space ${\hat \calH}$ is isomorphic to the circuit space $\calH$.

Finally, we describe those gate operators which act the same way whatever the name of the gate is. This reflects the idea that the action at a physical location (a house) should not depend on the arbitrary choice of name for that location (the number on the house), which is fiducial. 
Since such gate operators do not depend on the internal addresses in the name spaces of gate space, we call them ``internally-blind'':

\begin{definition}[Internally-blind gate operators]\label{def: Internally-blind gate operators}
Let $g$ be a gate with $k$ sectors and canonical name $e\in {\cal A}^-$. The named gate operator $U^g$ is called \textbf{internally-blind} if for any $\mu\in {\cal A}^-$ with $|\mu|=k$ and for any state $\ket{\phi}\in{\cal H'}^g_{\mu}$:
\begin{equation*}
    U^g_{\mu}\ket{\phi} = R_{e\to\mu}U_{e}^g R_{e\to\mu}^{-1}\ket{\phi}.
\end{equation*}
\end{definition}

Internally-blind named gate operators are completely determined by their block $U^g_{e}$. In the next section we show that renaming invariance imposes further constraints on this block.

\subsection{Proof of Theorem \ref{thm:isomorphisme}}

The concept of NAQC introduced above allows us to prove that nameblind AQCs and nameblind NAQCs are isomorphic. To do so, we show that nameblindness of an NAQC is equivalent to it being internally-blind, i.e. fully determined by its action over the canonical name (see Def. \ref{def: Internally-blind gate operators}), and commuting with renamings that are external for the canonical name.

We show that internally-blind operators that commute with external renamings are nameblind:

\begin{lemma}[Commutation with mixed and internal renamings]\label{lemme: commutation with mixed and internal renamings}
Let $U^g$ be an internally-blind gate operator, $e$ the canonical name of $g$, and $\mu\in{\calA}^-$ a name of size $|g|$. If $U^g_{e}$ commutes with every renaming external to $e$, then for all $\nu\in\calA^-$ with $|\nu|=|\mu|$ and any $\ket{\phi}\in{{\hat \calH}}_\mu^g$
\begin{equation*}
	U^g R_{\mu\to\nu}\ket{\phi}=R_{\mu\to\nu}U^g\ket{\phi}.
\end{equation*}
\end{lemma}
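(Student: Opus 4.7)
The plan is to exploit the block-diagonal structure of $U^g$ together with the internally-blind characterisation to reduce the statement to a bare commutation identity for the canonical block $U^g_e$. Since $U^g$ preserves names it decomposes as $U^g=\bigoplus_\sigma U^g_\sigma$, and for $\ket\phi\in{\hat\calH}_\mu^g$ the state $R_{\mu\to\nu}\ket\phi$ lies in ${\hat\calH}_\nu^g$. Consequently the claim $U^g R_{\mu\to\nu}\ket\phi = R_{\mu\to\nu}U^g\ket\phi$ is equivalent to
\[
U^g_\nu\,R_{\mu\to\nu} \;=\; R_{\mu\to\nu}\,U^g_\mu \quad \text{on } {\hat\calH}_\mu^g.
\]

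Next I would plug in the internally-blind expressions $U^g_\mu = R_{e\to\mu}U^g_e R_{e\to\mu}^{-1}$ and $U^g_\nu = R_{e\to\nu}U^g_e R_{e\to\nu}^{-1}$ given by Def.~\ref{def: Internally-blind gate operators}. Writing any state of ${\hat\calH}_\mu^g$ as $\ket\phi = R_{e\to\mu}\ket\psi$ with $\ket\psi\in{\hat\calH}_e^g$, and left-multiplying the previous identity by $R_{e\to\nu}^{-1}$, it reduces to
\[
U^g_e\,X\ket\psi \;=\; X\,U^g_e\ket\psi, \qquad X \;:=\; R_{e\to\nu}^{-1}\,R_{\mu\to\nu}\,R_{e\to\mu},
\]
for every $\ket\psi\in{\hat\calH}_e^g$. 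Thus it is enough to prove that the renaming $X$ commutes with $U^g_e$ on this subspace.

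The remaining step is to verify that $X$ is external for $e$. A letter-by-letter computation shows that for each letter $e_k$ of $e$ we have $R_{e\to\mu}(e_k)=\mu_k$, $R_{\mu\to\nu}(\mu_k)=\nu_k$ and $R_{e\to\nu}^{-1}(\nu_k)=e_k$, hence $X(e_k)=e_k$. So $X$ fixes every letter of $e$ and therefore preserves ${\hat\calH}_e^g$. The hypothesis that $U^g_e$ commutes with every renaming external to $e$ then immediately yields $X U^g_e = U^g_e X$ on ${\hat\calH}_e^g$, which closes the argument.

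The only subtle point --- and the main obstacle --- is the reliance on the fact that the canonical renamings $R_{a\to b}$ act \emph{pointwise} as $a_k\mapsto b_k$ on the letters of $a$. This is guaranteed by the ``mixed-then-internal'' uniqueness from Prop.~\ref{prop: decomposition}(1), which pins down the action of $R_{a\to b}$ on the letters of $a$ independently of its action on the remaining addresses. Away from the letters of $e$, the renaming $X$ can still be a non-trivial permutation of external addresses, but this is harmless: we only need $X$ to be external for $e$ --- not the identity --- in order to invoke the commutation hypothesis on $U^g_e$.
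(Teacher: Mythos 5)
Your proposal is correct and follows essentially the same route as the paper: your renaming $X=R_{e\to\nu}^{-1}R_{\mu\to\nu}R_{e\to\mu}$ is exactly the renaming $E$ the paper introduces, and both arguments reduce the claim via block-diagonality and internal-blindness to the hypothesis that $U^g_e$ commutes with renamings external for $e$. Your explicit letter-by-letter check that $X$ fixes $e$ (via the pointwise action guaranteed by Prop.~\ref{prop: decomposition}) is a slightly more detailed justification of the paper's one-line remark ``Since $Ee=e$, it is external for $e$,'' but it is the same idea.
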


\begin{proof}
Let $R_{\mu\to\nu}$ as in Prop. \ref{prop: decomposition}. Let $E$ be the renaming defined by
\begin{equation*}
	E=R_{e\to\nu}^{-1}R_{\mu\to\nu}R_{e\to\mu}
\end{equation*}
Since $Ee=e$, it is external for $e$.
We have:
\begin{center}
$\begin{array}{rlll}
    U^{g}R_{\mu\to\nu}\ket{\phi} & =U_\nu^g R_{\mu\to\nu}\ket{\phi} &\hspace{1pt} &\text{(Diagonal blockwise)}\\
    & =R_{e\to\nu}U_e^g R_{e\to\nu}^{-1}R_{\mu\to\nu}\ket{\phi} & & \text{(Internally-blind)}\\
    & =R_{e\to\nu}U_e^g E R_{e\to\mu}^{-1}\ket{\phi} &  &\text{(Definition of $E$)}\\
    & =R_{e\to\nu}EU_e^g R_{e\to\mu}^{-1}\ket{\phi} & & \text{(Hypothesis)}\\
    & =R_{\mu\to\nu}R_{e\to\mu}U_e^g R_{e\to\mu}^{-1}\ket{\phi} & &\text{(Definition of $E$)}\\
    & =R_{\mu\to\nu}U_\mu^g \ket{\phi} &  &\text{(Internally-blind)}\\
    & =R_{\mu\to\nu}U^{g}\ket{\phi}& & \text{(Diagonal blockwise)}
\end{array}$\\
\end{center}

\end{proof}

\begin{lemma}[Commutation with external renamings]\label{lemme: commutation with external renamings}
Let $U^g$ be a internally-blind gate operator, $e$ the canonical name of $g$, and $\mu\in\calA^-$ a name of size $|g|$. If $U_{e}^g$ commutes with every external renaming, then for all external renamings $E$ and all $\ket{\phi}\in{{\hat \calH}}_\mu^g$:
\begin{equation*}
	U^gE\ket{\phi}=EU^g\ket{\phi}.
\end{equation*}
\end{lemma}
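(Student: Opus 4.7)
The plan is to upgrade commutation with renamings external to the canonical name $e$ (the hypothesis) to commutation with renamings external to an arbitrary name $\mu$ of the gate, by conjugating through $R_{e\to\mu}$. The argument is entirely parallel to the proof of Lemma \ref{lemme: commutation with mixed and internal renamings}, but now uses the internally-blind property to move $U^g$ between blocks and the structural property of $R_{e\to\mu}$ to transport an external-for-$\mu$ renaming to an external-for-$e$ one.

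First I would observe that, because $E$ is external for $\mu$, it fixes every address in $\SET(\mu)$, and in particular preserves the name $\mu$ of the gate. Hence $E$ maps $\hat\calH^g_\mu$ into itself, so $E\ket\phi\in\hat\calH^g_\mu$. Combining this with the block-diagonal decomposition $U^g=\bigoplus_\nu U^g_\nu$ gives both $U^gE\ket\phi = U^g_\mu E\ket\phi$ and $EU^g\ket\phi = EU^g_\mu\ket\phi$. It therefore suffices to prove that $U^g_\mu$ and $E$ commute on $\hat\calH^g_\mu$.

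The next step is to rewrite $U^g_\mu$ using the internally-blind hypothesis as $U^g_\mu = R_{e\to\mu}\,U^g_e\,R_{e\to\mu}^{-1}$ and to introduce the conjugated renaming $E' := R_{e\to\mu}^{-1}\,E\,R_{e\to\mu}$. The key point is that $E'$ is external for $e$: for any $a\in\SET(e)$, by construction of $R_{e\to\mu}$ (which sends $e$ to $\mu$ letterwise) we have $R_{e\to\mu}(a)\in\SET(\mu)$, hence $E$ fixes $R_{e\to\mu}(a)$, and therefore $E'(a)=a$. By hypothesis, $U^g_e$ commutes with $E'$. Conjugating that identity by $R_{e\to\mu}$ yields
\[
U^g_\mu\,E \;=\; R_{e\to\mu}\,U^g_e\,E'\,R_{e\to\mu}^{-1} \;=\; R_{e\to\mu}\,E'\,U^g_e\,R_{e\to\mu}^{-1} \;=\; E\,U^g_\mu,
\]
which, combined with the first paragraph, gives $U^g E\ket\phi = EU^g\ket\phi$, as claimed.

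The main (essentially the only) obstacle is verifying the structural step that $R_{e\to\mu}^{-1}\,E\,R_{e\to\mu}$ is external for $e$. This relies crucially on the defining property of $R_{e\to\mu}$ from Prop. \ref{prop: decomposition}, namely that it carries $\SET(e)$ bijectively onto $\SET(\mu)$. Once this observation is made, the rest is a short chain of equalities mirroring the previous lemma.
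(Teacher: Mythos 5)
Your proof is correct and follows essentially the same route as the paper: both define the conjugated renaming $E'=R_{e\to\mu}^{-1}ER_{e\to\mu}$, use internal-blindness to write $U^g_\mu=R_{e\to\mu}U^g_eR_{e\to\mu}^{-1}$, apply the hypothesis to commute $U^g_e$ past $E'$, and undo the conjugation. The only difference is that you explicitly verify that $E'$ is external for $e$ and that $E$ preserves the block $\hat\calH^g_\mu$, two points the paper asserts without justification.
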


\begin{proof}
Let $E$ be a renaming external for $\mu$. Let $E'$ be the renaming defined by:
\begin{equation*}
	E'=R_{e\to\mu}^{-1}ER_{e\to\mu}
\end{equation*}
$E'$ is external for $e$ so we have for each state $\ket{\phi}\in {\cal H'}_\mu^g$:

\begin{alignat*}{2}
    U^{g}E\ket{\phi} & =U_\mu^g E\ket{\phi} &\\
    & =R_{e\to\mu}U_e^g R_{e\to\mu}^{-1}E\ket{\phi} &\qquad\text{(Internally-blind)}\\
    & =R_{e\to\mu}U_e^gE' R_{e\to\mu}^{-1}\ket{\phi} &\qquad\text{(Definition of $E'$)}\\
    & =R_{e\to\mu}E'U_e^g R_{e\to\mu}^{-1}\ket{\phi} &\qquad\text{(Hypothesis)}\\
    & =ER_{e\to\mu}U_e^g R_{e\to\mu}^{-1}\ket{\phi} &\\
    & =EU^{g}_\mu\ket{\phi}& & 
\end{alignat*}
\end{proof}

\begin{theorem}[Characterization of renaming invariance]\label{Theorem: Characterization renaming invariance}
Let ${\hat \calH}^g$ be a named gate space, $U^g$ be a named gate operator and $e$ the canonical name of $g$. $U^g$ is renaming invariant if and only if $U^g$ is internally-blind and $U_{e}^g$ commutes with every renaming $E$ external for $e$, i.e $U_{e}^gE=EU_{e}^g$.
\end{theorem}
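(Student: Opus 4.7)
The plan is to prove both implications by combining the two previously established Lemmas with the unique decomposition of renamings from Prop.~\ref{prop: decomposition}. Throughout, I will exploit that named gate operators are block-diagonal w.r.t. the decomposition $U^g=\bigoplus_\mu U^g_\mu$, so equalities can be verified block by block on each $\calH'^g_\mu$.

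For the forward direction ($U^g$ renaming-invariant implies both conditions), I would first note that commutation with renamings $E$ external for $e$ is immediate by restriction: since $E(e)=e$, $E$ stabilizes the block $\calH'^g_e$, so restricting $U^g E=E U^g$ to that block yields $U^g_e E=E U^g_e$. For internally-blindness, I would apply renaming-invariance to the specific renaming $R_{e\to\mu}$ of Prop.~\ref{prop: decomposition}(1). Since $R_{e\to\mu}$ maps $\calH'^g_e$ bijectively onto $\calH'^g_\mu$, the identity $U^g R_{e\to\mu}=R_{e\to\mu} U^g$ restricts to $U^g_\mu R_{e\to\mu}=R_{e\to\mu} U^g_e$, i.e. $U^g_\mu=R_{e\to\mu} U^g_e R_{e\to\mu}^{-1}$ on $\calH'^g_\mu$, which is exactly Def.~\ref{def: Internally-blind gate operators}.

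For the backward direction, I fix an arbitrary renaming $R$, a name $\mu$, and a state $\ket{\phi}\in\calH'^g_\mu$, and invoke Prop.~\ref{prop: decomposition}(2) to write $R=E\,R_{\mu\to\nu}$, where $\nu=R(\mu)$ and $E$ is external for $\nu$. Then I compute $U^g R\ket{\phi}=U^g E\,R_{\mu\to\nu}\ket{\phi}$. Since $R_{\mu\to\nu}\ket{\phi}\in\calH'^g_\nu$ and $E$ is external for $\nu$, Lemma~\ref{lemme: commutation with external renamings} (applied on the block $\calH'^g_\nu$) lets me push $E$ past $U^g$, yielding $E\,U^g R_{\mu\to\nu}\ket{\phi}$. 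Applying Lemma~\ref{lemme: commutation with mixed and internal renamings} next, I commute $R_{\mu\to\nu}$ past $U^g$, obtaining $E\,R_{\mu\to\nu}\,U^g\ket{\phi}=R\,U^g\ket{\phi}$. Since the blocks $\calH'^g_\mu$ span $\hat\calH^g$, linearity yields $U^g R=R U^g$ globally.

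The main hurdle here is not computational but bookkeeping: I must be careful that each lemma is invoked on the correct block with the correct ``anchor name'' for the external/internal/mixed classification, and that the uniqueness part of Prop.~\ref{prop: decomposition}(2) is used to guarantee that the decomposition $R=E\,R_{\mu\to\nu}$ is well-defined for the chosen $\mu$. Once this indexing is straight, the entire backward argument collapses to the two-step substitution chain above, which essentially recycles the calculations already carried out inside Lemmas~\ref{lemme: commutation with mixed and internal renamings} and~\ref{lemme: commutation with external renamings}.
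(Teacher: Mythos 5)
Your proposal is correct and follows essentially the same route as the paper: the forward direction restricts renaming-invariance to the block $\calH'^g_e$ and to the renamings $R_{e\to\mu}$, and the backward direction uses the decomposition $R=E\,R_{\mu\to\nu}$ from Prop.~\ref{prop: decomposition} together with Lemmas~\ref{lemme: commutation with external renamings} and~\ref{lemme: commutation with mixed and internal renamings} in exactly the same order. Your bookkeeping remark about anchoring the external renaming $E$ to the name $\nu=R(\mu)$ (rather than $\mu$) is the one point the paper leaves implicit, and you handle it correctly.
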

\begin{proof}
$[\Rightarrow]$ : 
Let $U^g$ be a nameblind gate operator. Since by hypothesis $U^g$ commutes with all renamings, it obviously commutes with the external ones, and so does the block $U^g_e$. To show that $U^g$ is internally-blind, consider a renaming $R_{e\to\mu}$. Renaming invariance of $U_g$ implies that $U^{g}R_{e\to\mu} = R_{e\to\mu}U^{g}$, and hence for every name $\mu$ of $g$ and for each state $\ket{\phi}\in{\hat \calH}_\mu$ we have
\begin{equation*}
	U_{{\mu}}^g\ket{\phi} = R_{e\to\mu}U_{{e}}^gR_{e\to\mu}^{-1}\ket{\phi},
\end{equation*}
which is a restatement of internal-blindness.

$[\Leftarrow]$ : 
Let $U^g$ be an internally-blind gate operator such that $U^g_{e}$ commutes with all external renamings. Consider a renaming $R$ and a name $\mu\in\calA^-$ of size $|g|$. Then, for every state $\ket{\phi}\in{\hat \calH}_\mu$:
\begin{alignat*}{2}
    U^gR\ket{\phi} & =U^gEIM\ket{\phi} &\qquad\text{(by Prop. \ref{prop: decomposition})}\\
    & =EU^gIM\ket{\phi} &\qquad\text{(by Lem. \ref{lemme: commutation with external renamings})}\\
    & =EIMU^g\ket{\phi} &\qquad\text{(by Lem. \ref{lemme: commutation with mixed and internal renamings})}\\
    & =RU^g\ket{\phi}, &
\end{alignat*}
i.e. $U^g$ is renaming invariant.
\end{proof}

Thm. \ref{Theorem: Characterization renaming invariance} states that the named gate operator $U^g$ of a nameblind NAQC is fully characterized by its action $U^g_{e}$ over the gates with canonical names, and that these must commute with external renamings. From Def. \ref{def: canonical name} it follows that the evolution of an NAQC over gates with canonical names is isomorphic to that of an AQC. Thus, the set of nameblind NAQCs is isomorphic to the set of AQCs whose gate operators commute with external renamings. Therefore, Thm. \ref{thm:isomorphisme} follows from Thm. \ref{Theorem: Characterization renaming invariance}. 

\begin{proof}[Proof of Thm. \ref{thm:isomorphisme}]
Let us consider a nameblind NAQC  $((\calA,\calG,\calQ,S,\calN),\ket{\psi}\ket{e})$, its scattering $S$ commutes with every renaming $R$, i.e. $RS=SR$. The isomorphic nameblind Addressable Quantum Circuit is $((\calA,\calG,\calQ,U),\ket{\psi})$, where $U$ is defined as a scattering where each gate operator $U^g$ is of the form $\bigoplus_\calN U^g_e$ of Thm. \ref{Theorem: Characterization renaming invariance}. Reciprocally, given a nameblind AQC, as each gate operator $S^g$ commutes with external renamings, $U^g = \bigoplus_\calN S^g$ is a nameblind named gate operator that can be used to specify a nameblind NAQC.
\end{proof}

\section{Characterization of nameblind operators}\label{annex:nameblind}

Nameblind gate operators consist of blocks of nameblind matrices (see Prop. \ref{prop:gateopdecomp}), which will be characterized inductively in this appendix.
Below, when representing such operators over a list of $n$ external addresses as matrices, we implicitly use the basis of words sorted in lexicographically ascending order, i.e.
\begin{equation}\label{eq:basis}
    \{12\dots(n-1)n,12\dots n(n-1),\dots,n(n-1)\dots21\}.
\end{equation}
In this basis, the first $(n-1)!$ words begin with letter $1$, the next $(n-1)!$ words begin with letter $2$ etc. We often partition matrices acting on this $n!$-dimensional Hilbert space into $n^2$ blocks, each acting between two $(n-1)!$-dimensional Hilbert spaces, each being spanned by words starting with the same letter.

\subsection{Adjacent transpositions}\label{annex:rkktranspo}

Let $n$ be an integer and let us consider $k\in\{1,\dots,n-1\}$. We call the renaming $R_k$ which swaps $k$ and $k+1$ and acts as the identity otherwise an \emph{adjacent transposition} on words of $n$ addresses at $k$. Adjacent transpositions generate the group of renamings, such that any renaming can be written as a product of such adjacent transpositions. Commuting with all adjacent transpositions is therefore equivalent to commuting with all renamings.
\begin{lemma}\label{lemme:rkktranspo}
In the basis \eqref{eq:basis} the matrix of an adjacent transposition $R_{k}$ has the form 
\begin{equation}\label{eq:Rk}
    R_{k}=
        \begin{blockarray}{ccccccccc}
 & & & k & k+1 & && \\
 & & & \downarrow & \downarrow & && \\
\begin{block}{(cccccccc)l}
        M_{k-1} &0&\dots&&&&&&\\
       0&\ddots &\ddots&&&&&&\\
        \vdots&\ddots&M_{k-1}&&&&&& \\
        &&&0&I&&&&\leftarrow k \\
        &&&I&0&&&&\leftarrow k+1 \\
        &&&&&M_k&\ddots&\vdots &\\
        &&&&&\ddots&\ddots&0& \\
        &&&&&\hdots&0&M_k&\\
\end{block}\end{blockarray}
\end{equation}
where $M_i$ is the adjacent transposition matrix on  words of  $n-1$ addresses at $i$. 
\end{lemma}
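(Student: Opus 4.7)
The plan is to split the basis \eqref{eq:basis} into $n$ consecutive blocks of size $(n-1)!$ according to the first letter of each word, and compute the action of $R_k$ block by block. For a word $iw$ in the block starting with letter $i$, the suffix $w$ is a permutation of the alphabet $\calA_i := \{1,\dots,n\}\setminus\{i\}$. Let $\phi_i \colon \calA_i \to \{1,\dots,n-1\}$ be the order-preserving bijection: $\phi_i(j)=j$ for $j<i$ and $\phi_i(j)=j-1$ for $j>i$. Applied letter-wise, $\phi_i$ turns $w$ into a word over $\{1,\dots,n-1\}$, and since $\phi_i$ is strictly increasing, the lexicographic order on the suffixes $w$ matches the lexicographic order \eqref{eq:basis} for $(n-1)$-letter permutations on the reduced words $\phi_i(w)$. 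Thus within each first-letter block, the induced basis is canonically identified with the basis \eqref{eq:basis} for $n-1$ addresses, and the proof reduces to computing the $n^2$ blocks of $R_k$ one by one.

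\textbf{Diagonal blocks for $i \notin \{k, k+1\}$.} Here $R_k$ preserves the first letter, so only the suffix is affected, and only the diagonal block in row $i$ is non-zero. Conjugating the swap of $k,k+1$ in $w$ by $\phi_i$ yields an adjacent transposition on $\{1,\dots,n-1\}$. For $i < k$, $\phi_i$ sends $k \mapsto k-1$ and $k+1 \mapsto k$, so the induced action is $M_{k-1}$; this accounts for the $k-1$ upper-left diagonal entries of \eqref{eq:Rk}. For $i > k+1$, $\phi_i$ fixes both $k$ and $k+1$, so the action is $M_k$; this accounts for the $n-k-1$ lower-right diagonal entries.

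\textbf{The $k \leftrightarrow k+1$ swap.} For $i = k$, the suffix $w$ is a permutation of $\calA_k$ containing $k+1$ but not $k$. Applying $R_k$ gives $R_k(kw) = (k+1)w'$, where $w'$ is obtained from $w$ by replacing the unique occurrence of $k+1$ by $k$; all other entries are fixed. A direct check shows $\phi_k(w) = \phi_{k+1}(w')$: at the differing position, $\phi_k(k+1) = k$ matches $\phi_{k+1}(k) = k$, and at all other positions the letters lie in $\{1,\dots,k-1\} \cup \{k+2,\dots,n\}$, where $\phi_k$ and $\phi_{k+1}$ agree. Hence the block sending first-letter $k$ to first-letter $k+1$ is the identity, and by symmetry so is its transpose, while the diagonal blocks at positions $k$ and $k+1$ vanish. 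All other off-diagonal blocks are zero, since $R_k$ never sends a word whose first letter is outside $\{k,k+1\}$ to one whose first letter lies inside (and vice versa).

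The \textbf{main bookkeeping obstacle} is verifying that the reductions $\phi_i$ preserve lexicographic order and correctly relabel the transposition $(k,k+1)$; once this is granted, the three cases $i<k$, $i \in \{k,k+1\}$, and $i>k+1$ exhaust the analysis and directly produce the block structure of \eqref{eq:Rk}.
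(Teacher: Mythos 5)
Your proof is correct and follows essentially the same route as the paper's: decompose the basis into first-letter blocks of size $(n-1)!$ and track how the transposition $k\leftrightarrow k+1$ relabels on the remaining $n-1$ letters (giving $M_{k-1}$ when the first letter is below $k$ and $M_k$ when it is above $k+1$). Your version is in fact more complete than the paper's, which only argues the diagonal $M_{k-1}$ and $M_k$ blocks and leaves implicit both the anti-diagonal identity blocks at positions $k,k+1$ (your $\phi_k(w)=\phi_{k+1}(w')$ check) and the vanishing of all other off-diagonal blocks.
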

\begin{proof}

We consider the lines before and after the $k$th line separately: The addresses after the $k$th block line correspond to $n$ letter words that begin with a letter bigger than $k$. Amongst the remaining $n-1$ letters, $k$ is still the $k$th letter in ascending lexicographic order. Thus, to transpose $k$ and $k+1$ we must apply $M_k$.

The lines before the $k$th line, correspond to $n$ letter words starting with a letter smaller than $k$. Thus, amongst the remaining $n-1$ letters, $k$ is still the $(k-1)$-th letter in ascending lexicographic order. Therefore, the correct matrix to apply is $M_{k-1}$.
\end{proof}

\subsection{Nameblind operators}

We now prove Theorem \ref{thm:nameblind}, which we restate here for the convenience of the reader :
\begin{manualtheorem}{\ref*{thm:nameblind}}\label{nameblind matrices}
$(n,n)$-nameblind matrices have exactly the form
    \begin{equation}\label{eq:nameblind}
        A = \begin{pmatrix}
        D&B&M_1B&M_2M_1B&M_3M_2M_1B&\dots\\
        B&D&M_1BM_1&M_2M_1BM_1&\dots&\\
        BM_1&M_1BM_1&D&&& \\
        BM_1M_2&M_1BM_1M_2&&\ddots&& \\
       BM_1M_2M_3&\vdots&&&&\\
       \vdots&&&&&
        \end{pmatrix}
    \end{equation}
        where $D$ is a $(n-1,n-1)$-nameblind matrix 
        and $B$ is any matrix which commutes with $M_i$ for each $i$ in $\{ 2,3,...,n-1\}$, whose characterization is given in Prop. \ref{thm:partially-nameblind}
\end{manualtheorem}

\begin{proof}
Let $A$ be a matrix of size $n!\times n!$ written blockwise with blocks of size $(n-1)!\times (n-1)!$ as
\begin{equation*}
	A = 
        \begin{pmatrix}
        A_{1,1}& A_{1,2}&\dots & A_{1,k}&  A_{1,k+1} &\dots&A_{1,n}\\
        \vdots&\vdots&\ddots&{\vdots}&{\vdots}&&\vdots \\
        {A_{k,1}}&{A_{k,2}}&{\dots}&{A_{k,k}}&{A_{k,k+1}}&{\dots}&{A_{k,n}} \\
        {A_{k+1,1}}&{A_{k+1,2}}&{\dots}&{A_{k+1,k}}&{A_{k+1,k+1}}&{\dots}&{A_{k+1,n}} \\
        \vdots&&&{\vdots}&{\vdots}&\ddots&\vdots \\
        A_{n,1}&A_{n,2}&\hdots&{ A_{n,k}}&{ A_{n,k+1}}&\hdots&A_{n,n}
        \end{pmatrix}.
\end{equation*}
Since every renaming can be written as a product of adjacent transpositions, commuting with all renamings is equivalent to commuting with all adjacent transpositions. Thus, $A$ is nameblind iff $R_kA=AR_k$ for all $k\in\{1 \dots n-1\}$ with $R_k$ as in \eqref{eq:Rk}, i.e. iff the following matrices are equal for every $k$:
\begin{equation*}
	R_{k}A=
        \begin{pmatrix}
        M_{k-1} A_{1,1}&M_{k-1} A_{1,2}&\dots&\lowgreen{M_{k-1} A_{1,k}}&\lowgreen{M_{k-1} A_{1,k+1}}&\dots&M_{k-1}A_{1,n}\\
        M_{k-1} A_{2,1}&M_{k-1} A_{2,2}&\dots&\lowgreen{M_{k-1} A_{2,k}}&\lowgreen{M_{k-1} A_{2,k+1}}&&\vdots\\
        \vdots&\vdots&&\lowgreen{\vdots}&\lowgreen{\vdots}&&\vdots \\
        \lowgreen{A_{k+1,1}}&\lowgreen{A_{k+1,2}}&\lowgreen{\dots}&\blacier{A_{k+1,k}}&\blacier{A_{k+1,k+1}}&\lowgreen{\dots}&\lowgreen{A_{k+1,n}} \\
        \lowgreen{A_{k,1}}&\lowgreen{A_{k,2}}&\lowgreen{\dots}&\blacier{A_{k,k}}&\blacier{A_{k,k+1}}&\lowgreen{\dots}&\lowgreen{A_{k,n}} \\
        \vdots&&&\lowgreen{\vdots}&\lowgreen{\vdots}&&\vdots \\
        M_kA_{n,1}&\hdots&\hdots&\lowgreen{M_k A_{n,k}}&\lowgreen{M_k A_{n,k+1}}&\hdots&M_kA_{n,n}
        \end{pmatrix}
\end{equation*}        

\begin{equation*}
	AR_{k }=
        \begin{pmatrix}
        A_{1,1}M_{k-1}&A_{1,2}M_{k-1}&\dots& \lowgreen{A_{1,k+1}}& \lowgreen{A_{1,k}}&\dots&A_{1,n}M_k\\
        A_{2,1}M_{k-1}&A_{2,2}M_{k-1}&\dots&\lowgreen{A_{2,k+1}}& \lowgreen{A_{2,k}}&&\vdots\\
        \vdots&\vdots&&\lowgreen{\vdots}&\lowgreen{\vdots}&&\vdots \\
        \lowgreen{A_{k,1}M_{k-1}}&\lowgreen{A_{k,2}M_{k-1}}&\lowgreen{\dots}&\blacier{A_{k,k+1}}&\blacier{A_{k,k}}&\lowgreen{\dots}&\lowgreen{A_{k,n}M_k} \\
        \lowgreen{A_{k+1,1}M_{k-1}}&\lowgreen{A_{k+1,2}M_{k-1}}&\lowgreen{\dots}&\blacier{A_{k+1,k+1}}&\blacier{A_{k+1,k}}&\lowgreen{\dots}&\lowgreen{A_{k+1,n}M_k} \\
        \vdots&&&\lowgreen{\vdots}&\lowgreen{\vdots}&&\vdots \\
        A_{n,1}M_{k-1}&\hdots&\hdots&\lowgreen{A_{n,k+1}}&\lowgreen{A_{n,k}}&\hdots&A_{n,n}M_k
        \end{pmatrix}
\end{equation*}
This gives three types of conditions which we distinguish by their color in the matrices above:
        
\paragraph*{\blacier{Blue conditions}}\label{blue}
For each  $i\in\{1 ,\dots, n-1\}$:
\begin{itemize}
 \item[] $A_{i,i} = A_{i+1,i+1}$, i.e. all diagonal blocks are equal.
 \item[] $A_{i,i+1} = A_{i+1,i}$, i.e. the matrix is blockwise-symmetric near the diagonal.  
\end{itemize}
We will henceforth write the diagonal blocks as $D:=A_{1,1}$ and we set $B:=A_{1,2}$.

\paragraph*{\lowgreen{Green conditions}}\label{green}
Consider a row $i$, a column $j$ and an adjacent transposition $R_{k}$. The green conditions from commutation with $R_k$ are equivalent to:

\begin{itemize}
\item[] Above the diagonal, i.e. $i<k<k+1<j$:
\begin{itemize}
 \item $A_{i,k+1}=M_{k-1}A_{i,k}$, i.e one step to the right corresponds to right multiplication with .
 \item $A_{k+1,j}=A_{k,j}M_k$, i.e the $(k+1)$-th element of column $j$ is equal to $M_{k}$ times the $k$-th element.
\end{itemize}
\item[] Below the diagonal, i.e. $j<k<k+1<i$:
\begin{itemize}
 \item $A_{i,k+1}=M_kA_{i,k}$, i.e. the $(k+1)$-th element of line $i$ is equal to $M_{k}$ times the $k$-th element.
 \item $A_{k+1,j}=A_{k,j}M_{k-1}$, i.e. the $(k+1)$-th element of column $j$ is equal to $M_{k-1}$ times the $k$-th element.
\end{itemize}
\end{itemize}
This has to hold for all $k$ which gives the general form of $A$ given in Thm. \ref{nameblind matrices}, because we can deduce all blocks from $A_{1,1}$ and $A_{1,2}$, but the conditions on the blocks are yet to be derived.
         
\paragraph*{Black conditions}\label{black}
From the diagonal conditions in the top-left and bottom-right quadrant we immediately deduce that $D$ commutes with all $M_i$ for $0<i<n-1$, i.e. $D$ is nameblind. Similarly, $B$ commutes with all $M_i$ for $2<i<n-1$. Thus, in the basis \eqref{eq:basis} $A$ has the form \eqref{eq:nameblind}.

\indent Let us now show the converse, i.e. that any matrix $A$ as specified in Thm. \ref{nameblind matrices} is nameblind. By definition, it satisfies the blue and green conditions, and it only remains to check that it also meets the requirements of the black conditions. To this end, consider a block $A_{i,j}$ above the diagonal, i.e. $i<j$, which is symmetric with the case where it is below. By \eqref{eq:nameblind} this block is given by
\begin{equation*}
	A_{i,j}=M_{j-2}\dots M_1BM_1\dots M_{i-1}.
\end{equation*}
To satisfy the top-left and bottom-right conditions, this block has to commute with $M_{k-1}$ for all $k$ such that $j< k <n$, and with $M_{k'}$ for all $k'$ such that $k'+1 < i$, i.e. it has to commute with $M_k$ for all $k\in K:=\{1,\dots,i-2\}\cup\{j,\dots,n-2\}$. Explicitly, we need to show that for $k\in K$
\begin{equation*}
    M_{j-2}\dots M_1BM_1\dots M_{i-1}\RC{M_k} = \blacier{M_{k}}M_{j-2}\dots M_1BM_1\dots M_{i-1}.
\end{equation*}
Rearranging this we obtain
\begin{align*}
    B&=M_1\dots M_{j-2}\blacier{M_{k}}M_{j-2}\dots M_1BM_1\dots M_{i-1}\RC{M_k}M_{i-1}\dots M_1\\
    \Leftrightarrow \qquad B&=M_1\dots M_{j-2}\blacier{M_{k}}M_{j-2}\dots M_1M_1\dots M_{i-1}\RC{M_k}M_{i-1}\dots M_1B\\
    \Leftrightarrow \qquad B&=M_1\dots M_{j-2}\blacier{M_{k}}M_{j-2}\dots M_{i}\RC{M_k}M_{i-1}\dots M_1B\\
    \Leftrightarrow \qquad B&=M_1\dots M_{j-2}\blacier{M_{k}}\RC{M_k}M_{j-2}\dots M_{i}M_{i-1}\dots M_1B\\
    \Leftrightarrow \qquad B&=M_1\dots M_{i-1}M_{i-1}\dots M_1B
\end{align*}
where the first equivalence comes from premise that $B$ commutes with renamings that preserve the letter $1$ (since $\{M_2,\dots,M_{n-1}\}$ generates this family of renamings) which $M_1\dots M_{i-1}\RC{M_k}M_{i-1}\dots M_1$ does since $k$ is different from $i$ and $i-1$, and the third comes from the fact that two transpositions with disjoint support commute. Moreover, since $M_jM_j=I$ the right side of the last equation indeed gives $B$.
        
To verify the top-right and bottom-left conditions, consider a block $A_{i,j}$ with $i<j$, and a $i<k<j-1$. Since the conditions for this quadrant are symmetric, this is sufficient. This block has to satisfy $A_{i,j}\RC{M_{k}}=\blacier{M_{k-1}}A_{i,j}$. We calculate:
\begin{align*}
	A_{i,j}\RC{M_{k}}&=M_{j-2}\dots M_1BM_1\dots M_{i-1}\RC{M_{k}}&(1)\\
	&=M_{j-2}\dots M_1B\RC{M_{k}}M_1\dots M_{i-1}&(2)\\
	&=M_{j-2}\dots M_{k+1}M_{k}M_{k-1}\RC{M_{k}}M_{k-2}\dots M_1BM_1\dots M_{i-1}&(3)\\
	&=M_{j-2}\dots M_{k+1}\blacier{M_{k-1}}M_{k}M_{k-1}M_{k-2}\dots M_1BM_1\dots M_{i-1}&(4)\\
	&=\blacier{M_{k-1}}M_{j-2}\dots M_{k+1}M_{k}M_{k-1}M_{k-2}\dots M_1BM_1\dots M_{i-1}&(5)\\
	&=\blacier{M_{k-1}}A_{i,j}&(6)
\end{align*}
where as above we used that transpositions with disjoint support commute and that $BM_k=M_kB$ since $k>1$. The non-trivial step in the forth equality is to realize that ${M_{k-1}}M_{k}M_{k-1}=M_{k}M_{k-1}M_{k}=(k-1\quad k+1)$.
       
Thus, all black conditions are satisfied by the blocks of $A$ of the general form given in Thm. \ref{nameblind matrices}, which implies that $A$ indeed is nameblind.
\end{proof}

\subsection{Partially-nameblind operators}\label{annex:partially-nameblind}

The matrix $B$ of Thm. \ref{thm:nameblind} is ``partially-nameblind'' in the sense that it commutes with all renamings which preserve the first $p$ addresses in lexicographical order. We denote the set of such matrices by $\calB^n_{p}$, where $n$ indicates the total number of addresses.

\begin{proposition}[Characterization of partially-nameblind matrices]\label{thm:partially-nameblind}
partially-nameblind matrices $A\in\calB^n_{p}$ are exactly those matrices which can be written as $A=$
\begin{equation*}        
	\begin{pmatrix}
        A_{1,1}& 
        \dots &A_{1,p}&  B_{1}^l &M_pB_{1}^l&M_{p+1}M_pB_{1}^l&\dots\\
        A_{2,1}& \dots&A_{2,p}&B_{2}^l&M_pB_{2}^l&M_{p+1}M_pB_{2}^l&\dots\\
        \vdots& &\ddots&&\vdots&&& \\
        {A_{p,1}}& \dots&A_{p,p}&B_p^l&M_pB_p^l&M_{p+1}M_pB_p^l&\dots \\
        {B_{1}^c}& \dots&B_p^c&D&B&M_{p+1}B&\dots  \\
        {B_{1}^c}M_p& \dots&B_p^cM_p&B&D&M_{p+1}BM_{p+1}&{\dots} \\
        {B_{1}^c}M_pM_{p+1}& \dots&B_p^cM_pM_{p+1}&BM_{p+1}&M_{p+1}BM_{p+1}&D& \\
        \vdots&&&BM_{p+1}M_{p+2}&\vdots&&\ddots\\
        &&&\vdots&&&
     \end{pmatrix}
\end{equation*}
where $A_{i,j}\in \calB^{n-1}_{p-1}$, $B_i^l$, $B_i^c$ and $D$ are in $\calB^{n-1}_{p}$, and $B$ is in $\calB^{n-1}_{p+1}$.

\end{proposition}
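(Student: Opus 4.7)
The plan is to mirror the proof of Thm.~\ref{thm:nameblind} step by step, but to weaken the invariance hypothesis to the subgroup of renamings that fix the first $p$ addresses. Since this subgroup is generated by the adjacent transpositions $R_k$ for $k\in\{p+1,\ldots,n-1\}$, a matrix $A$ lies in $\calB^n_p$ if and only if $R_kA=AR_k$ for every such $k$. I will expand $A$ in the lexicographic basis of \eqref{eq:basis} as an $n\times n$ grid of $(n-1)!\times (n-1)!$ blocks $A_{i,j}$, and use Lem.~\ref{lemme:rkktranspo} to read off that $R_k$ acts blockwise as $M_{k-1}$ on rows/cols $1,\ldots,k-1$, swaps rows/cols $k$ and $k+1$, and acts as $M_k$ on rows/cols $k+2,\ldots,n$. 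Each identity $R_kA=AR_k$ then splits into the same three kinds of constraints I will label blue, green and black, in analogy with the proof of Thm.~\ref{thm:nameblind}.

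First, I will extract the blue constraints, which arise only at the swapped rows/columns and, summed over $k\geq p+1$, force the lower-right diagonal blocks to collapse to a single matrix $D:=A_{p+1,p+1}=\cdots=A_{n,n}$ and force the sub-/superdiagonal of the lower-right $(n-p)\times(n-p)$ block to be symmetric with a single generator $B:=A_{p+1,p+2}$, while leaving $A_{1,1},\ldots,A_{p,p}$ unconstrained among themselves. Next I will extract the green constraints, which propagate values along rows and columns by left/right multiplication by $M_i$. In the lower-right quadrant these recover, offset by $p$, exactly the structure of Thm.~\ref{thm:nameblind} and express every off-diagonal block as $M_{j-2}\cdots M_{p+1}BM_{p+1}\cdots M_{i-1}$; in the upper-right quadrant they generate, starting from $B_i^l:=A_{i,p+1}$, the chain $A_{i,j}=M_{j-2}\cdots M_p B_i^l$ for $j\geq p+1$; the lower-left quadrant is obtained symmetrically from the $B_i^c:=A_{p+1,i}$; and the upper-left quadrant receives no blue or green constraint at all, since no $R_k$ with $k\geq p+1$ places its swap inside rows or columns $\leq p$.

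Finally, the black constraints (commutations with $M_{k-1}$ or $M_k$ forced on blocks lying far from the swap) pin down the commutation classes of each generator. Tracking carefully which $M_m$ appear: blocks $A_{i,j}$ with $i,j\leq p$ must commute with every $M_m$ for $m\geq p$, hence $A_{i,j}\in\calB^{n-1}_{p-1}$; $D$ must commute with $M_m$ for $m\geq p+1$, hence $D\in\calB^{n-1}_p$; $B$ must commute with $M_m$ for $m\geq p+2$, hence $B\in\calB^{n-1}_{p+1}$; and a short computation using the braid identity $M_{m-1}M_mM_{m-1}=M_mM_{m-1}M_m$ on the propagated expression $M_{j-2}\cdots M_p B_i^l$ yields that $B_i^l,B_i^c$ must commute with $M_m$ for $m\geq p+1$, landing in $\calB^{n-1}_p$. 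For the converse direction I will check that any matrix of the announced form satisfies $R_kA=AR_k$ for every $k\geq p+1$; the blue and green identities then hold by construction, and the remaining black identities reduce to the same manipulations used at the end of the proof of Thm.~\ref{thm:nameblind}, relying only on $M_k^2=I$, the commutativity of disjoint-support transpositions, and the braid identity.

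The hard part will be the bookkeeping around the two cross quadrants: each seed block $B_i^l, B_i^c$ must satisfy just enough commutation relations to interpolate between the fully nameblind regime $\calB^{n-1}_{p+1}$ inherited from the lower-right and the fully relaxed regime $\calB^{n-1}_{p-1}$ inherited from the upper-left, and it is precisely this mismatch that produces the intermediate class $\calB^{n-1}_p$. Tracking exactly which $M_m$ arise from each off-diagonal position when $k$ ranges over $\{p+1,\ldots,n-1\}$, and confirming that no extraneous relation creeps in (in particular that commutation with $M_p$ is \emph{not} forced on $B_i^l, B_i^c$), is the subtle step, and is best discharged, as in Thm.~\ref{thm:nameblind}, by the braid identity rather than by brute case analysis.
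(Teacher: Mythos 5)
Your proposal is correct and follows essentially the same route as the paper: both restrict the commutation hypothesis to the adjacent transpositions $R_k$ with $k>p$ that generate the stabilizer of the first $p$ addresses, rederive the blue/green/black conditions of Thm.~\ref{thm:nameblind} in that restricted range to obtain the block form and the membership classes $\calB^{n-1}_{p-1}$, $\calB^{n-1}_{p}$, $\calB^{n-1}_{p+1}$, and verify the converse quadrant by quadrant using disjoint-support commutativity and the braid identity $M_{k-1}M_kM_{k-1}=M_kM_{k-1}M_k$. Your explicit tracking of which $M_m$ each generator block must commute with (and the observation that $M_p$-commutation is \emph{not} forced on $B_i^l,B_i^c$) matches what the paper leaves implicit in its terser forward direction.
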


\begin{proof}
    As above, we use blue, green and top-left black conditions, which for matrices in $\calB^n_{p}$ however have to be satisfied only for $p<k$.

    Conversely, we have to check that matrices of this form satisfy the black conditions. First, we consider blocks $A_{i,j}$ in the top-left or bottom-right corner. If $i<p+1$ and $j<p+1$ we just have to check that $A_{i,j}$ is in $\calB^{n-1}_{p-1}$, which is the case by definition. 
        
    If $i>p$ and $j>p$, the proof of Thm. \ref{nameblind matrices} applies. Thus, we only have to consider the case $i<p+1$, $j>p$ since the other one follows by symmetry. 
    In this case the block is given as
    \begin{equation*}
       	A_{i,j}=M_{j-2}\dots M_p B_i^l,
    \end{equation*}
    and we have to verify that it commutes with $M_{k-1}$ for all $k$ such that $j<k<n$, and with $M_{k'}$ for all $k'$ such that $p+1<k'+1<i$, i.e. that it commutes with $M_k$ for each $k\in \{j,n-2\}$ as $\{p+1,\dots,i-2\}$ is empty. Since these transpositions have disjoint support, this is straightforward to see. 
        
    Next, pick $A_{i,j}$ in the top-right or bottom-left corner. 
        Since the top-right and bottom-left conditions are symmetric it is sufficient to consider the case $i<j$. Moreover, if $i>p$ again the previous proof applies, so we only consider $i<p+1<p+2<j$. We have to show that $A_{i,j}\RC{M_{k}}=\blacier{M_{k-1}}A_{i,j}$ for all $k<j-1$. Indeed,
        \begin{align*}
            A_{i,j}\RC{M_{k}}&=M_{j-2}\dots M_p B_i^l\RC{M_{k}}\\
            &=M_{j-2}\dots M_k M_{k-1}\RC{M_{k}}\dots M_p B_i^l\\
            &=M_{j-2}\dots \blacier{M_{k-1}}M_{k}M_{k-1}\dots M_p B_i^l\\
            &=\blacier{M_{k-1}}M_{j-2}\dots M_{k}M_{k-1}\dots M_p B_i^l\\
            &=\blacier{M_{k-1}}A_{i,j}
        \end{align*}
        
        This ends the proof that all black conditions are satisfied by matrices of the general form given in Prop. \ref{thm:partially-nameblind}, and thus we have proved the desired equivalence.
\end{proof}

\subsection{\texorpdfstring{$(m,n)$}{(m,n)}-nameblind matrices}

\subsubsection{Proof of Prop. \ref{prop:nameblind}}
Consider a $(m,n)$-nameblind matrix $M$;
$n=|\calA_{ex}|$ is the number of external addresses, and $m \leq n$ is the number of addresses present in any state handled by $M$. $M$ is an operator over words on $m$ addresses chosen among $n$. In the following, we will use the basis of such words, not sorted in lexicographic ascending order as in $\{12\dots (m-1) m, 12\dots (m-1) (m+1), \dots , n \dots (n-m+2) (n-m) , n \dots (n-m+2) (n-m+1)\}$, but instead sorted first by what addresses are present, and then by lexicographical ascending order, i.e.
\begin{equation*}
	\{12\dots (m-1) m, 12\dots m (m-1) , \dots , n \dots (n-m+1) (n-m+2) , n \dots (n-m+2) (n-m+1)\}.
\end{equation*}

Note that if $m<n$, $M$ acts on basis states with less addresses than all external addresses. Let us show that $M$ can be written blockwise with $(m,m)$-nameblind matrices (see Thm. \ref{thm:nameblind}). 

This state space is partitioned according to the set $\calA'$ (of size $m$) of what addresses are actually present, each subspace being called $\hat\calH_{\calA'} $ --- for some ${\calA'}\subset\calA_{ex}$. As $M$ does neither create nor suppress addresses, it is also block diagonal (in the particular ordering we chose). Lemma \ref{prop:equalblocks} proves that all these blocks are equal.

Next, we show that all blocks are nameblind. To this end, consider a set of addresses $\calB$ with corresponding block $M_{\cal B}$ which maps words from $\hat\calH_{\calB}$ to $\hat\calH_{\calB}$, where  $\hat\calH_{\calB}$ is the corresponding subspace. Then, for any renaming $R$ manipulating addresses within $\hat\calH_{\calB}$, and any $w \in \hat\calH_{\calB}$:
\begin{equation*}
	RMw = R M_{\calB} w, \qquad MRw = M_{\calB}Rw,
\end{equation*}
as $Rw \in \hat\calH_{\calB}$. The left sides of these equalities agree since $M$ is nameblind. Therefore, $M_{\calB}$ is nameblind, and as it is acting on words of size $m$ on $m$ addresses, it is $(m,m)$-nameblind.

\subsubsection{Lemma - Block equality}

The basis states of a $(m,n)$-nameblind matrix $M$ are ordered according to which addresses $\calA'$ are present. For the sake of simplicity of demonstrating Lem. \ref{prop:equalblocks}, we further specify the order between such address sets. Let $\calA'$ be a subset of $\calA_{ex}$ of size $m$. We define its \emph{size} as 
the integer corresponding to the indicator function (understood as little-endian binary number), and note it $\mathfrak B (\calA')$. For instance, with $\calA_{ex} = \{1,2,\dots,6\}$:
\begin{gather*}
	\mathfrak B ( \{2,3,4\}) = Int(011100) = 2^1 + 2^2 + 2^3 = 14	\\
	\mathfrak B ( \{1,2,5\}) = Int(110010) = 2^0 + 2^1 + 2^4 = 17
\end{gather*}

Note that only the binary numbers with $m$ bits set to $1$ will be represented. The order between finite address sets is then simply the usual order between their sizes.

\begin{figure}[H]
    \centering
    \begin{subfigure}[t]{0.49\textwidth}
    $\begin{array}{r@{~~<~~}l}
     \dots& \ket {011100}~~\leftrightarrow~~ \{2,3,4\}\\
        & \ket {110010}~~\leftrightarrow~~ \{1,2,5\}\\
        & \ket {101010} ~~\leftrightarrow~~ \{1,3,5\}\\
        & \ket {011010} ~~\leftrightarrow~~ \{2,3,5\}\\
        & \ket {100110} ~~\leftrightarrow~~ \{1,4,5\}\\
        & \ket {010110} ~~\leftrightarrow~~ \{2,4,5\}\\
        & \ket {001110} ~~\leftrightarrow~~ \{3,4,5\}\\
        & \ket {110001} ~~\leftrightarrow~~ \{1,2,6\}\\
        & \dots \\
    \end{array}$
    \caption{Writing address sets as binary numbers to determine their size.}
    \end{subfigure}
    \begin{subfigure}[t]{0.49\textwidth}
    $\begin{array}{r@{~~<~~}l}
        \dots              & \ket {\RC{01}1100}\\
        \ket {\blacier{01110}0} & \ket {\blacier{11001}0}\blacier{ -Carry }\\
        \ket {1\RC{10}010} & \ket {1\RC{01}010}\RC{ -Incrementation } \\
        \ket {\RC{10}1010} & \ket {\RC{01}1010}\\
        \ket {\blacier{0110}10} & \ket {\blacier{1001}10}\\
        \ket {\RC{10}0110} & \ket {\RC{01}0110}\\
        \ket {0\RC{10}110} & \ket {0\RC{01}110}\\
        \ket {\blacier{001110}} & \ket {\blacier{110001}}\\
        \ket {1\blacier{10}001} & \dots \\
    \end{array}$
    \caption{Effect of the renaming : when is each case used.}
    \end{subfigure}
    \caption{Order and types of successive basis blocks, by writing the image of $\calA$ by the Indicator function of $\calA_i$ as a binary number. The pattern identified by $\calB_{i,i+1}$ or $\calB'_{i,i+1}$ is then represented as colored numbers.}
    \label{fig:order}
\end{figure}

\begin{lemma}\label{prop:equalblocks}
Let $g$ be a gate of an AQC with address space $\calA$. Write $S^g$ in the basis sorted by:
\begin{enumerate}
    \item  First, according to the size $\mathfrak B$ of address sets ---see above.
	\item Second, for each group of addresses, the words are sorted in lexicographical increasing order.
\end{enumerate}
Then S is blockwise diagonal with blocks corresponding to which addresses are present, and each of those blocks are equal.
\end{lemma}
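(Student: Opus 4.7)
The plan is to establish the two claims of the lemma separately: first, that $S^g$ is block-diagonal in the specified basis; second, that all diagonal blocks coincide as matrices.

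Block diagonality will follow directly from the defining property of a $(m,n)$-nameblind matrix, which by hypothesis neither creates nor destroys addresses. Denoting by $\hat\calH_{\calA'}$ the subspace spanned by basis words whose letters are exactly the set $\calA' \subseteq \calA$, this invariance means $S^g\hat\calH_{\calA'} \subseteq \hat\calH_{\calA'}$ for every $m$-subset $\calA'$. Since the chosen sort order groups the basis vectors of each $\hat\calH_{\calA'}$ contiguously, $S^g$ takes block-diagonal form with one block per $m$-subset.

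For equality of blocks, I would fix two arbitrary $m$-subsets $\calA_1 = \{a_1 < \cdots < a_m\}$ and $\calA_2 = \{b_1 < \cdots < b_m\}$ of $\calA$ and construct a renaming $R$ satisfying $R(a_i) = b_i$ for each $i$, extending it to any bijection from $\calA \setminus \calA_1$ onto $\calA \setminus \calA_2$ (which is possible because the two complements have equal size). The crucial observation is that $R|_{\calA_1}$ is \emph{order-preserving} as a map into $\calA_2$, so $R$ sends the $k$-th word of $\hat\calH_{\calA_1}$ in lexicographic order to the $k$-th word of $\hat\calH_{\calA_2}$ in lexicographic order. Combining this with the renaming-invariance $SR = RS$, for any basis vector $\ket w$ of $\hat\calH_{\calA_1}$ with $S\ket w = \sum_{w'} s_{w',w}\ket{w'}$ one obtains
\[S\ket{R(w)} = RS\ket w = \sum_{w'\in\hat\calH_{\calA_1}} s_{w',w}\,\ket{R(w')},\]
so the matrix of $S$ restricted to $\hat\calH_{\calA_2}$, in the lexicographic basis, has precisely the same entries $s_{w',w}$ as the matrix of $S$ restricted to $\hat\calH_{\calA_1}$. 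Since $\calA_1$ and $\calA_2$ were arbitrary, all diagonal blocks are equal.

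I do not expect any real obstacle: the only mild care needed is in verifying that the renaming $R$ may legitimately be chosen to act as the prescribed order-preserving bijection on $\calA_1 \to \calA_2$ while remaining a permutation of the full address set $\calA$, which is immediate by a counting argument on complements. Everything else is a bookkeeping application of nameblindness.
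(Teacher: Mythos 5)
Your proof is correct, and it rests on the same core mechanism as the paper's: a renaming that restricts to an \emph{order-preserving} bijection between the two address sets acts as the identity permutation on the lexicographically sorted bases of the corresponding subspaces, so commutation $SR=RS$ forces the two diagonal blocks to have identical entries. The difference is in execution. The paper only compares \emph{successive} address sets in its binary-indicator enumeration, builds the required renaming by a two-case construction (``incrementation'' vs.\ ``carry'' on the smallest contiguous varying part), and obtains equality of all blocks by chaining through adjacent pairs. You instead compare two arbitrary $m$-subsets directly, taking the unique order-preserving bijection $a_i\mapsto b_i$ and extending it arbitrarily to a permutation of $\calA$; this eliminates the enumeration machinery and the case analysis entirely, at no cost. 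Your version is the cleaner of the two; the only points worth making explicit in a write-up are (i) that order preservation on letters implies order preservation on words in lexicographic order (so the matrix of $R$ between the two sorted bases really is the identity), which you state, and (ii) that the commutation hypothesis $SR=RS$, though not written in the lemma statement, is supplied by the surrounding nameblindness assumption --- both your proof and the paper's rely on it in the same way.
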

\begin{proof}

Consider the set of external addresses $\calA_{ex} = \{x_1, \dots, x_{n - |g|}\} \subset \calA $ of the gate operator $S^g$, and two successive sets of addresses $\calA_i = \{y_1 \dots y_{p}\} \subseteq \calA_{ex}$ and $\calA_{i+1} = \{y_1' \dots y_{p}'\} \subseteq \calA_{ex}$ (both ordered as specified in the lemma and as seen in Fig. \ref{fig:order}). Then, extract the smallest continuous varying parts $\calB_{i,i+1}  = \{y_l, y_{l+1}, \dots, y_{l+m}\}$ and $\calB_{i,i+1}' = \{y_{l}', \dots, y_{l+m}'\} $ --- also ordered the same way. Note that if $y_l = x_j$:
\begin{equation*}
	\begin{cases}\calB_{i,i+1} = \{x_j,x_{j+1},\dots,x_{j + m} \} \\ y_{l+m}' = x_{j+m+1} \end{cases}
\end{equation*}

We define now a renaming $R$ so that $\forall k, R(y_{l+k}) = y_{l+k}'$. As a consequence, $R$ always results in a identity block in the block column corresponding to the first state and in the line of the second state ---as this renaming preserves the relative order of such words on addresses. 

The renaming $R$ is defined with the following case distinction:
\begin{description}
    \item[- Incrementation] If $\calB_{i,i+1}$ is of the form $\{x_j\}$, then $\calB_{i,i+1}'$ is $\{x_{j+1}\}$. $R$ is defined as the transposition swapping $x_{j}$ and $x_{j+1}$.
    \item[- Carry] Else, $l=1$. Then $\calB_{i,i+1} = \{y_l = x_j, x_{j+1}, x_{j+2}, \dots  , x_{j+m-1}, x_{j+m} \}$ consists of $m$ successive numbers in $\calA_{ex}$ with $m > 0$, and $\calB_{i,i+1}' = \{y_1' = x_1, x_2,  \dots, x_{m-1}, x_{j+m+1} = y_{l+m}'\}$. Therefore, $R$ is defined as :
    \begin{enumerate}
        \item $\forall k \in \{1,\dots,{j-1}\}, R(x_k) = x_{k+m}$
        \item $\forall k \in \{j,\dots,{j+m-1}\}, R(x_{k}) = x_{k-j+1}$
        \item $R(y_{l+m}) = R(x_{j+m}) = x_{j+m+1} = y_{l+m}'$, $R( y_{l+m}')= x_{j+m}$
        \item $R(a) = a$ for any other $a\in\calA$
    \end{enumerate}
\end{description}
The property $ \forall k, R(y_{l+k}) = y_{l+k}'$ entails that $R\Pi_i = \Pi_{i+1}$, therefore : $\Pi_{i+1} R S \Pi_{i} = S_i $  and   $ \Pi_{i+1} S R \Pi_{i} = S_{i+1} $. As $RS=SR$ we have $S_i = S_{i+1}$.
\end{proof}

\subsection{From gate operators to \texorpdfstring{$(n,n)$}{(n,n)}-nameblind matrices}\label{annex: from GO to (n,n) nameblind matrices}
\begin{figure}[H]
    \centering
$\ket\eps_\calT\ket{abc}_\calI\ket\eps_\calO, \hspace{11pt} \ket\eps_\calT\ket{ab}_\calI\ket c_\calO,\hspace{11pt}\ket\eps_\calT\ket a _\calI\ket{b  c }_\calO,\hspace{11pt} \ket\eps_\calT\ket\eps_\calI\ket{abc}_\calO,$\\ \phantom{.}\\
$\ket a_\calT\ket{bc}_\calI\ket\eps_\calO ,\hspace{14pt} \ket a _\calT\ket b _\calI\ket c _\calO,\hspace{14pt} \ket a _ \calT\ket\eps_\calI\ket{bc}_\calO.$
    \caption{The 7 subspaces with 3 external addresses of a one sector gate.   $a,b,$ and $c$ represent a position of an address, spanned by any possible external address like in Fig. \ref{fig:subspaces:intext}. Each one of the 49 blocks delimited by these subspaces contains a nameblind matrix.}
    \label{fig:subspaces}
\end{figure}

\begin{manualprop}{\ref{prop:gateopdecomp}}
[Decomposition of gate operators into $(m,m)$-nameblind matrices]
Consider $S^g$ a gate operator on addresses $\calA$ which commutes with every external renaming. Consider any words on quantum data $q_0$ \& $q_1$, internal addresses $\calA_{int} \subseteq g$ with positions $P_{int_{1}}$ \& $P_{int_{2}}$, external addresses  $\calA_{ex} \subseteq  \calA \backslash g$ with positions $P_{ex_{1}}$ \& $P_{ex_{2}}$. The action of $S^g$ between states corresponding to these parameters $S^g_{\calA_{int},\calA_{ex},q_0\to q_1,P_{int_{1}}\to P_{int_{2}},P_{ex_{1}}\to P_{ex_{2}}}$ is a $(m,m)$-nameblind matrix with $m = |\calA_{ex}|$.
\end{manualprop}

\begin{proof}

Let $g$ be a gate and $S = S^g$ a nameblind gate operator. First we decompose the large gate space $\calH^g$ into $|\calQ^{2|g|}|$ subspaces for each possible basis state of the dataspace. For each couple of basis states $\ket{q_0}, \ket{q_1} \in {\calH}_{\calQ^{2|g|}}$ we denote by $S_{q_0\to q_1}$ the block of $S$ which maps states with data $\ket{q_0}$ to states with data $\ket{q_1}$. Since $S$ is nameblind we have :
\begin{equation*}
    \Pi_{q_1}SR\ket{\overline{a}}\ket{q_0}=S_{q_0\to q_1}R\ket{\overline{a}}\ket{q_0}.
\end{equation*}
On the other hand:
\begin{align*}
    \Pi_{q_1}SR\ket{\overline{a}}\ket{q_0}&=\Pi_{q_1}RS\ket{\overline{a}}\ket{q_0}\\
    &=RS_{q_0\to q_1}\ket{\overline{a}}\ket{q_0}
\end{align*}
where $\Pi_{q_1}$ is the projector onto the subspace with data state $q_1$. These equalities imply that each block must commute with renamings.

Furthermore, each $S_{q_0\to q_1}$ is block diagonal over subspaces $B_i$ defined by a number of external addresses $m \in \{0,\dots, |\calA\setminus g|\}$ and a set of internal addresses $\calA_{int}\subseteq g$, because both parameters are preserved by $S$. We cannot derive constraints between those diagonal blocks, since the subspaces $B_i$ are preserved by renamings and gate operators.

Then, we further divide the $B_i$ subspaces into subspaces characterized by a fixed position for each internal addresses of $\calA_{int}$, and $m$ "slots" for external addresses. We enumerate these subspaces  (see Figs. \ref{fig:subspaces:intext} and \ref{fig:subspaces}) and denote them by $\subspace i$. Moreover, we write $S_{[i,j]}=\Pi_j S_{\subspace i}$ where $S_{\subspace i}$ is projection of $S$ onto $\subspace i$. Its coefficients are the same as the block of $S$ in (block) column $i$ and (block) line $j$. Let us show that this block is nameblind.

Consider two subspaces $ \subspace i$ (corresponding to the block column/line $i$ in $S$) and  $\subspace j$ (column/line $j$), and a renaming $R$ that is external for $g$. We may for instance take  $\subspace i$ with states of the form $\ket{ axy }_I$, where $a\in g$ and $x,y$ span all external addresses. The block of $S$ specifying what part of $\subspace i$ goes to $\subspace j$ is $S_{[i,j]}$. Let $\ket{\mathbf w} \in \subspace i$ be a state. On one hand, as $\ket{\mathbf w}$ and $R\ket{\mathbf w}$ both belong to $\subspace i$, they are handled by the same block column $i$ of $S$ (they are in the same subspace, as $R$ does not modify or change the position of $a$, and make no internal address appear): 
\begin{equation*}
	\Pi_j (SR\ket{\mathbf w}) = S_{[i,j]} R\ket{\mathbf w}
\end{equation*}
On the other hand :
\begin{align*}
    \Pi_j (SR \ket{\mathbf w}) &= \Pi_j (RS\ket{\mathbf w}) \\
    &= \Pi_j ( \sum_{w\in \subspace j} \alpha_w \ket{R(w)} + \sum_{w\not\in \subspace j} \alpha_w\ket{R(w)} ) \\
    &=  R(\sum_{w\in \subspace j} \alpha_w \ket{w})\\
    &= RS_{[i,j]}\ket{\mathbf w},
\end{align*}
where in the third line we used that $R$ preserves subspaces and thus commutes with $\Pi_j$. As a consequence, for all $\ket{\mathbf w}, R$ we have $ S_{[i,j]} R \ket{\mathbf w}= R S_{[i,j]} \ket{\mathbf w}$, and therefore $S_{[i,j]}$ is nameblind. 

Therefore, if a gate operator commutes with every renaming on external addresses, its matrix representation can be written blockwise with nameblind matrices.

Vice versa, if a gate operator $S$ can be partitioned blockwise as above into nameblind matrices, we can write 
\begin{equation*}
	RS = R\sum_i\sum_j S_{[i,j]} = \sum_i\sum_j  \Big(S_{[i,j]}R \Big) = \Big(\sum_i\sum_j S_{[i,j]}\Big)R = SR.
\end{equation*}
\end{proof}

When handling the nameblind blocks $S_{[i,j]}$, we may simply consider that $S$ is acting on a state of the form $\ket {x_1\dots x_{m}}$ where $x_i$ are external addresses, and ignore the internal addresses and the specific subspaces in which are the external addresses, as those does not change. Therefore, by Prop. \ref{prop:nameblind} these operators are isomorphic to $(m,n)$-nameblind matrices and can hence be written blockwise as $(n,n)$-nameblind matrices by Thm. \ref{nameblind matrices}.

\section{Composing AQCs}\label{annex:operations}

Several ways of composing AQCs are possible. Here we define: 
\begin{description}
\item[Parallel composition] of AQCs corresponds to executing them in parallel without interaction.
\item[Concatenation] of AQCs corresponds to successively applying them to the quantum data.
\item[Connection] of AQCs is also defined, and allows for a more general connectivity without assuming an intrinsic (temporal) order between AQCs.
\end{description}

\subsection{Parallel composition}\label{annex:parallel}

The simplest combination of AQCs is parallel composition, which intuitively means executing them both side by side, separately:

\begin{definition}[Parallel composition] \label{def:parallel}
Let $\calA,\calB$ be two disjunct address sets, and let $C_\calA$ and $C_\calB$ be the corresponding AQC with skeletons $K_\calA = ( \calA,\calG_\calA,\calD_\calA^{\leq n},S_\calA)$ and $K_\calB = (\calB,\calG_\calB,\calD_\calB^{\leq m},S_\calB)$ and states $\ket {\psi_\calA}$ and $\ket {\psi_\calB}$. Their \emph{parallel composition} $C_\calA \mid \mid C_\calB$ is defined as $((\calA \cup \calB,\calG_\calA \cup \calG_\calB,\calQ_\calC,S_\calC),\ket{\psi_\calC})$, where $\calQ_\calC=(\calD_\calA\cup\calD_\calB)^{\leq m + n}$ and the scattering $S_\calC$ applies to $g \in \calG_\calA$ the trivial extension ${S_\calA^g}'$ of $S_\calA^g$ from $\calA$ to $\calA\cup\calB$ and from $\calQ_\calA$ to $\calQ_\calC$, and similarly for $g\in\calG_\calB$.
\end{definition}

By trivial extension, we mean that if $g\in\calG_\calA$, then $S^g_\calC$ acts as the identity on addresses of $\calB$ 
and on data of $\calQ_\calC \setminus \calQ_\calA$.

\subsection{ Connection }\label{annex:connection}

Explicitly including the sectors 1 and 4 in the Bell state creation circuit (see Fig. \ref{fig:epr}) or sectors $1$ and $6$ of the quantum switch (see Fig. \ref{fig:QS}) helps distinguishing where the data is coming from and where it is going to.  
AQC are infinitely evolving systems
and, as such, may not be perfectly fitted to express finite computations. However, these sectors, which at first sight seem superfluous, provide a unified template 
to portray finite quantum circuits, possibly followed by a measurement. In other words, they are useful to determine beginning and endpoint of a finite computation, and we use them as sectors where we place our initial data in and take the final data out.

Moreover, the presence of these sectors becomes crucial as soon as we want to compose AQCs and make them interact. Indeed, we will do so by identifying the outgoing sectors of an AQC, with the ingoing sectors of another. 

The following definition paves the road to composing AQCs, by making it explicit which sectors can be identified. 
\begin{definition}[Ingoing and outgoing buffer sectors]
Let $((\calA,\calG,\calQ,S),\ket{\psi})$ be an AQC and $a \in \calA$ be an address. We call the sector at $a$ an \emph{ingoing buffer sector} if
\begin{itemize}
    \item $a$ does not occur in $\ket\psi$, i.e. the sector cannot be targeted during the evolution,
    \item $\{a\}\in\calG$, and
    \item $S^{\{a\}} = F$ (the flip -- see \ref{annex:gates}).
\end{itemize}
Similarly, we call the sector at $a$ an \emph{outgoing buffer sector} if
\begin{itemize}
    \item $\ket{\psi}=\ket\eps^a \otimes \ket{\phi}$, i.e. the state of the sector $a$ is initially empty,
    \item $\{a\}\in\calG$, and
    \item $S^{\{a\}} = F$.
\end{itemize}
\end{definition}
As a consequence, it is possible to ``merge'' outgoing with ingoing sectors into a single sector having the address of the outgoing buffer sector and the internal state of the ingoing buffer sector, for an example see Fig. \ref{fig:inoutbs:2}.

\begin{figure}[htb]
    \centering
    \begin{subfigure}[t]{0.5\textwidth}
        \centering
        \resizebox{\textwidth}{!}
        {\newcommand{\AQC}[5]{
\draw[fill=white,dashed,opacity=1] (#1,#2) rectangle (#3,#4) node[pos=0.1,above] {\huge #5};;
}
\newcommand{\aqcx}{12}

\begin{tikzpicture}[thick]
\AQC{-2.5}{-2.5}{9}{4.5}{$C_\calA$};
\AQC{\aqcx - 0.65}{-2.5}{\aqcx + 9}{4.5}{$C_\calB$};
\sector{0}{1.3}{}{1}{}{}
\sector{0}{-1.3}{}{2}{}{}
\sector{3.2}{1.3}{}{3}{}{}
\sector{3.2}{-1.3}{}{4}{}{}
\sector{6.4}{0}{$\eps$}{5}{$\eps$}{$\eps$}
\sector{\aqcx}{0}{$\psi$}{6}{}{$\ovva$}
\sector{\aqcx + 3.2}{1.3}{}{7}{}{}
\sector{\aqcx + 3.2}{-1.3}{}{8}{}{}
\sector{\aqcx + 6.4}{0}{}{9}{}{}
\node[inner sep=0pt] (joe) at (6,5.5) {\Huge outgoing buffer sector};
\draw[stealth-] (topmid5) --  (joe);
\node[inner sep=0pt] (joe2) at (14,5.5) {\Huge ingoing buffer sector};
\draw[stealth-] (topmid6) -- (joe2);
\path[-stealth,line width=1mm] (inarrow1) edge[bend right=90]  (inarrow2);
\fleche[out=60, in= 200]{outarrow2}{inarrow3}{blue}
\fleche[out=-30, in= 150]{outarrow3}{inarrow5}{blue}
\fleche{outarrow2}{inarrow4}{red}
\fleche[out=30, in= 210]{outarrow4}{inarrow5}{red}
\fleche[out=30, in= 210]{outarrow6}{inarrow7}{orange}
\fleche[out=-30, in= 150]{outarrow6}{inarrow8}{green}
\fleche[out=-30, in= 150]{outarrow7}{inarrow9}{black}
\end{tikzpicture}}
        \caption{Sector $5$ is empty, Sector $6$ is targeted by no one.}
        \label{fig:inoutbs:1}
    \end{subfigure}
    \hfill
    \begin{subfigure}[t]{0.4\textwidth}
        \centering
        \resizebox{\textwidth}{!}
        {\newcommand{\AQC}[5]{
\draw[fill=white,dashed,opacity=1] (#1,#2) rectangle (#3,#4) node[pos=0,above right] {\huge #5};;}
\newcommand{\aqcx}{6.4}

\begin{tikzpicture}[thick]
\AQC{-2.5}{-2.5}{\aqcx + 9}{4.5}{$C_\calA \odot_{\{6\},\{5\}} C_\calB$};
\sector{0}{1.3}{}{1}{}{}
\sector{0}{-1.3}{}{2}{}{}
\sector{3.2}{1.3}{}{3}{}{}
\sector{3.2}{-1.3}{}{4}{}{}
\sector{6.4}{0}{$\psi$}{5}{$\eps$}{$\ovva$}
\sector{\aqcx + 3.2}{1.3}{}{7}{}{}
\sector{\aqcx + 3.2}{-1.3}{}{8}{}{}
\sector{\aqcx + 6.4}{0}{}{9}{}{}
\node[inner sep=0pt] (joe) at (6.4 ,5.5) {\Huge merged buffer sector};
\draw[stealth-] (topmid5) -- (joe);
\path[-stealth,line width=1mm] (inarrow1) edge[bend right=90]  (inarrow2);
\fleche[out=60, in= 200]{outarrow2}{inarrow3}{blue}
\fleche[out=-30, in= 150]{outarrow3}{inarrow5}{blue}
\fleche{outarrow2}{inarrow4}{red}
\fleche[out=30, in= 210]{outarrow4}{inarrow5}{red}
\fleche[out=30, in= 210]{outarrow5}{inarrow7}{orange}
\fleche[out=-30, in= 150]{outarrow5}{inarrow8}{green}
\fleche[out=-30, in= 150]{outarrow7}{inarrow9}{black}
\node at (3.6,1.5) (secn3l) {};
\node at (3.6,0.5) (secn4l) {};
\node at (5,1.5) (secn3r) {};
\node at (5,0.5) (secn4r) {};
\node at (8.4,-0.3) (secn6) {};

\end{tikzpicture}}
        \caption{Sectors $5$ and $6$ are merged.}
        \label{fig:inoutbs:2}
    \end{subfigure}
    \caption{Concatenating two AQCs $C_\calA$ and $C_\calB$ along  $\{6\}$ and $\{5\}$. In the end, sector $5$ is no longer an outgoing buffer sector and sector $6$ does not exist anymore.}
    \label{fig:inoutbs}
\end{figure}

\begin{definition}[Connection of AQCs]\label{def:connection}
Let $k \in\mathbb{N}^*$ and $C_1 \dots C_k$ be AQCs with corresponding sets of addresses $\calA_1 \dots \calA_k $ so that $\forall i\neq j ,\calA_i \cap \calA_j = \varnothing$ and skeletons $K_{j} = ( \calA_j,\calG_{j},\calQ_{j},S_{j})$. Moreover, let $\ket {\psi}$ be the tensor product of their (initial) states, and $\In_{j}$ and $\Out_{j}$ the sets of their ingoing and outgoing buffer sectors. Let $ \In = \{i_1, \dots ,i_n\} \subseteq \bigcup\nolimits_{j=1}^k\In_j$ and $\Out = \{o_1, \dots ,o_n\} \subseteq\bigcup\nolimits_{j=1}^k \Out_{j}$ such that $\In\cap\Out=\varnothing$. The \emph {connection of $ C_{1} \dots C_{k}$ along $(\In,\Out)$} is denoted 
\begin{equation*}
    C = \Conn(\In,\Out,\{C_1\dots C_k\})  =  ((\calA_\calC,\calG_\calC,\calQ_\calC,S_\calC),\ket{\psi_\calC})
\end{equation*}
where
\begin{itemize}
    \item $\calA_\calC= \bigg(\bigcup\limits_{j=1}^k \calA_j\bigg) \setminus \In $ and $\calG_\calC= \bigg(\bigcup\limits_{j=1}^k \calG_j\bigg) \setminus \{ \{i\} ,i \in \In\}$ is the set of addresses of $C$ with corresponding gate set,
    \item $\ket{\psi_\calC} = \bigotimes\limits_{a\in\calA_\calC\setminus \Out} \ket{\psi_a}^a\otimes\bigotimes\limits_{j=1}^n\ket{\psi_{i_j}}^{o_j}$ when $\ket{\psi}$ is of the form $\bigotimes \ket{\psi_{a}}^a$, with the general case following by linear extension.
    \item $\calQ_C=\Large(\bigcup\limits_{j=1}^k\calD_j\Large)^{\leq N}$, where $N=\sum\limits_{i=1}^k n_i$ and those $n_i$ are such that $\calQ_i = \calD_i^{\leq n_i}$. 
    \item $S_C$ is the trivial extension of all $S_{j}$ as in Def. \ref{def:parallel}.
\end{itemize}
\end{definition}

The connection of an AQC with the trivial AQC, which consists only of a single empty sector, is the identity operation, given how ingoing and outgoing buffers are merged.

Moreover, note that parallel composition is a special type of connection, where the connecting sectors \emph{\In} and \emph{\Out} are empty, i.e. 
\begin{equation*}
    C_\calA\mid \mid C_\calB = \Conn (\varnothing,\varnothing, \{C_\calA, C_\calB\}).
\end{equation*}

With such a general notion of connection, one may connect an AQCs to form a loop, or even connect an AQC to itself ($\Conn (\In,\Out, \{C_\calA\})$). The resulting AQC may be an evolving structure with neither ingoing nor outgoing buffers.

Such constructions are of course not very physical, since they lack a way for quantum data to enter and or exit, and thus cannot be connected to any other AQC: For an AQC to correspond to a quantum circuit, it has to include at least one ingoing and one outgoing buffer. However, the AQC model as presented in this paper allows for the definition of more general systems, without an input or an output node.  Intuitively, such AQC are not physically implementable as a circuit in a satisfying way, as they lack a way for quantum data to enter and to exit it.
They are still causal structures, as the evolution is driven by local reversible rules and gates may be applied on no data, i.e. vacuum states \cite{Quantum-Shannon-Theory}; no causality problem arises if some data enters the same sector twice during the infinite evolution.

The alternative to composition with ingoing and outgoing buffers would have been to manipulate ``open AQC'', i.e. featuring addresses that correspond to no sector, and then plugging these open AQCs together. This causes several problems however, as the transport step may be undefined, and renaming only one of the open AQCs may change the nature of the composition. 

\subsection{Concatenation}\label{annex:concatenation}
Another type of connection is a concatenation : $C_\calA \odot_{\In,\Out} C_\calB = \Conn(\In,\Out,\{C_\calA, C_\calB\})$, where $\In$ only contains addresses from $\calB$ and $\Out$ only address from $\calA$.

Concatenation satisfies some kind of associativity property: for all input (resp. output) sets $\alpha,\beta_1,\beta_2,\gamma$ : 
\begin{equation*}
	\beta_1\cap\beta_2=\varnothing \qquad \Rightarrow \qquad (A \odot_{\beta_1,\alpha} B) \odot_{\gamma,\beta_2} C = A \odot_{\beta_1,\alpha} (B \odot_{\gamma,\beta_2} C).
\end{equation*}

Without the condition to the left, the equality on the right holds, but may also be meaningless, as some sectors might have disappeared during the first operation to be applied, or simply not be outgoing or ingoing anymore. Notice that similar to causal boxes \cite{CausalBoxes} and contrary to quantum combs, the concatenation of two AQCs is an AQC.

\begin{example}\label{example:concatenation}
Let us concatenate the quantum switch $A$ defined in Sect. \ref{sec:qs} and the Bell state creation circuit $B$ of Fig. \ref{fig:epr} with address sets $\calA=\{1,2,3,4,5,6\}$ and $\calB = \{7,8,9,0\}$, respectively. As unitaries in the quantum switch, take the Pauli gates $U=Y$ and $V=Z$ on a single qubit. Moreover, to facilitate matters we take $\ket{\psi_A}=\ket\eps$ as initial state of the quantum switch so that after concatenation data is only present at the beginning of the Bell state creation circuit. We show in detail the steps of the evolution of $A$ and $B$ concatenated along $\alpha = \{ 0 \}$ and $\beta = \{ 1 \}$ in Fig. \ref{fig:eprqs}, where $B \odot_{\alpha,\beta} A = \Conn(\alpha,\beta,\{B,A\})$.

\begin{figure}[H]
    \begin{center}
    \begin{subfigure}[ht!]{\textwidth}
        \centering
        \resizebox{0.65\textwidth}{!}
        {\renewcommand{\eps}{\epsgr}\begin{tikzpicture}[thick]
\sector{-10.4}{0}{00}{7}{$\eps$}{$\eps$}
\sector{-7.2}{0}{$\eps$}{8}{$\eps$}{$\eps$}
\sector{-4}{0}{$\eps$}{9}{$\eps$}{$\eps$}
\sector{0}{1.3}{$\eps$}{1}{$\eps$}{$\eps$}
\sector{0}{-1.3}{$\eps$}{2}{$\eps$}{$345_\calO$}
\sector{3.2}{1.3}{$\eps$}{3}{$\eps$}{$\eps$}
\sector{3.2}{-1.3}{$\eps$}{4}{$\eps$}{$\eps$}
\sector{6.4}{1.3}{$\eps$}{5}{$\eps$}{$\eps$}
\sector{6.4}{-1.3}{$\eps$}{6}{$\eps$}{$\eps$}
\node (in12) at (0.1,2) {};
\node (in11) at (0.1,2.6) {};
\fleche[out= 20, in= 190]{outarrow9}{in11}{black}
\fleche[bend right=70]{in12}{inarrow2}{black}
\fleche{outarrow7}{inarrow8}{black}
\fleche{outarrow8}{inarrow9}{black}
\node at (3.6,1.5) (secn3l) {};
\node at (3.6,0.5) (secn4l) {};
\node at (5,1.5) (secn3r) {};
\node at (5,0.5) (secn4r) {};
\path[-stealth,line width=1mm] (outarrow5) edge[bend left=70]  (outarrow6); %{black}{10}{0.7}
\end{tikzpicture}\renewcommand{\eps}{\epsgr}}
        \caption{Initial state of the concatenation $B \odot_{\alpha,\beta} A$ of the Bell state creation circuit and the quantum switch.}
        \label{fig:eprqs:1}
    \end{subfigure}
    
    \begin{subfigure}[ht!]{\textwidth}
        \centering
        \resizebox{0.65\textwidth}{!}
        {\renewcommand{\eps}{\epsgr}\begin{tikzpicture}[thick]
\sector{-10.4}{0}{$\eps$}{7}{$\eps$}{$\eps$}
\sector{-7.2}{0}{$\eps$}{8}{$\eps$}{$\eps$}
\sector{-4}{0}{$\eps$}{9}{$\eps$}{$\eps$}
\sector{0}{1.3}{\scalebox{0.82}{(\textcolor{blue}{00}+\textcolor{red}{11})}}{1}{$\eps$}{$\eps$}
\sector{0}{-1.3}{$\eps$}{2}{$\eps$}{$345_\calO$}
\sector{3.2}{1.3}{$\eps$}{3}{$\eps$}{$\eps$}
\sector{3.2}{-1.3}{$\eps$}{4}{$\eps$}{$\eps$}
\sector{6.4}{1.3}{$\eps$}{5}{$\eps$}{$\eps$}
\sector{6.4}{-1.3}{$\eps$}{6}{$\eps$}{$\eps$}
\node (in12) at (0.1,2) {};
\node (in11) at (0.1,2.6) {};
\fleche[out= 20, in= 190]{outarrow9}{in11}{black}
\fleche[bend right=70]{in12}{inarrow2}{black}
\fleche{outarrow7}{inarrow8}{black}
\fleche{outarrow8}{inarrow9}{black}
\node at (3.6,1.5) (secn3l) {};
\node at (3.6,0.5) (secn4l) {};
\node at (5,1.5) (secn3r) {};
\node at (5,0.5) (secn4r) {};
\path[-stealth,line width=1mm] (outarrow5) edge[bend left=70]  (outarrow6); %{black}{10}{0.7}
\end{tikzpicture}\renewcommand{\eps}{\epsgr}}
        \caption{State of $B \odot_{\alpha,\beta} A$ after the Bell state circuit.}
        \label{fig:eprqs:2}
    \end{subfigure}

    \begin{subfigure}[ht!]{\textwidth}
        \centering
        \resizebox{0.65\textwidth}{!}
        {\renewcommand{\eps}{\epsgr}\begin{tikzpicture}[thick]
\sector{-10.4}{0}{$\eps$}{7}{$\eps$}{$\eps$}
\sector{-7.2}{0}{$\eps$}{8}{$\eps$}{$\eps$}
\sector{-4}{0}{$\eps$}{9}{$\eps$}{$\eps$}
\sector{0}{1.3}{$\eps$}{1}{$\eps$}{$\eps$}
\sector{0}{-1.3}{$\eps$}{2}{$\eps$}{$345_\calO$}
\sector{3.2}{1.3}{$\eps$}{3}{$\eps$}{$\eps$}
\sector{3.2}{-1.3}{$\eps$}{4}{$\eps$}{$\eps$}
\sector{6.4}{1.3}{$\eps$}{5}{$\eps$}{$\eps$}
\sector{6.4}{-1.3}{\scalebox{0.83}{(\textcolor{blue}{10}\scalebox{0.75}[1.0]{$-$}\textcolor{red}{01})}}{6}{$\eps$}{$\eps$}
\node (in12) at (0.1,2) {};
\node (in11) at (0.1,2.6) {};
\fleche[out= 20, in= 190]{outarrow9}{in11}{black}
\fleche[bend right=70]{in12}{inarrow2}{black}
\fleche{outarrow7}{inarrow8}{black}
\fleche{outarrow8}{inarrow9}{black}
\node at (3.6,1.5) (secn3l) {};
\node at (3.6,0.5) (secn4l) {};
\node at (5,1.5) (secn3r) {};
\node at (5,0.5) (secn4r) {};
\path[-stealth,line width=1mm] (outarrow5) edge[bend left=70]  (outarrow6); %{black}{10}{0.7}
\end{tikzpicture}\renewcommand{\eps}{\epsgr}}
        \caption{``Final'' state of $B \odot_{\alpha,\beta} A$ after the Quantum Switch AQC. A global phase $i$ is omitted.}
        \label{fig:eprqs:3}
    \end{subfigure}
    \caption{Evolution of $B \odot_{\alpha,\beta} A$, concatenation of two AQCs of Ex. \ref{example:concatenation}, with initial state $00$}
    \label{fig:eprqs}
    \end{center}
\end{figure}
\end{example}

\begin{remark}
Outgoing buffer sectors correspond to the physical locations where the quantum data is measured. We do not discuss the possibility of measuring sectors here, yet it is straightforward to see how to do that. To represent a circuit with definite beginning and end, one would execute the AQC until quantum data reaches an outgoing sector.
\end{remark}

\section{Relation between AQC and QCGD}\label{section:QCGD}

Another approach to represent indefinite causal orders is to shun away from the circuit formalism and take inspiration from cellular automata. The QCGD framework \cite{ArrighiQCGD} features quantum superpositions of graphs evolving synchronously according to unitary local rules. Each vertex has a local quantum state and communicates with other vertices via ports. Compared to AQCs, vertices have replaced gates and edges have replaced wires.

Like QCGD, our model is able to represent a quantum superposition of different graphs (the graphs of connectivity between gates). 
The interdiction of non-causal evolution in QCGD may appear as contradictory with the sudden appearance of an edge in the AQC between nodes of different gates --- for instance, when an address suddenly goes from the input address space to the target.
However, by encoding address spaces into port names, one may simulate this seemingly non-causal edge appearance at the cost of having a larger number of ports ---see below
. We can therefore encode any AQC into a QCGD. The reverse simulation is a challenging open problem.

\subsubsection*{Encoding of AQC into the QCGD formalism} 
A gate operator can be encoded on local unitary rewritings, and multi-sector gates are implemented by connecting its sectors to a central common vertex, via a certain port $g$.

A minor difficulty is that in an AQC, a sector may suddenly target seemingly faraway sector, whereas in a QCGD, vertices may only connect to closeby vertices. However, when an AQC does this, it is because the faraway sector's address is a stored address. Therefore, we must encode stored address spaces as QCGD edges.

For clarity purposes, the translation presented here has $3n+1$ ports and $n+m$ vertices for an AQC with $n$ sectors and $m$ gates, but smaller, less trivial encodings are possible.

\begin{figure}[t]
	\centering
	\begin{minipage}[t]{0.38\textwidth}
		\centering
		\resizebox{\textwidth}{!}{
			\begin{tikzpicture}[thick]

\tikzset{arc text/.style={%
	decorate,
	decoration={%
		text effects along path,
		text={#1},
		text align=center,
		text effects/.cd,
		text along path,
		characters={anchor=mid}
	}
}}

\def\radius{2.5cm}
\def\dist{10pt}
\node[font=\Huge] (0,0) {$s_a$};
\draw (\radius,0) arc [start angle= 0, end angle= 360, radius= \radius]
	node[pos=.5, draw, fill, circle, inner sep=0pt, minimum size=4pt, xshift=-2pt, label={left:$e$}] {}
	node[pos=0 , draw, fill, circle, inner sep=0pt, minimum size=4pt, xshift=2pt, label={right:$\uptau$}] {}
	;

\foreach \ang in {65,215,305}{
	\path[arc text={...}] (\ang:\radius+\dist/2) arc (\ang:\ang+30:\radius+\dist/2); %somehow I don't manage to use $\dots$ here
}
	
\foreach \ang/\lab [evaluate=\i as \m using {\ang < 180 ? \ang+5 : \ang-5 }, evaluate=\ang as \n using { \ang < 180 ? \ang-5 : \ang+5 }] 
	in {65/{$p_{a_n}$}, 95/{$p_{a_2}$}, 115/{$p_{a_1}$},
		195/{$in_1$}, 215/{$in_2$}, 245/{$in_n$},
		285/{$out_1$}, 305/{$out_2$}, 335/{$out_n$}
	} {
	\draw[fill] (\ang:\radius+2pt) circle (2pt);
	\path[arc text={\lab}] (\m:\radius+\dist) arc (\m:\n:\radius+\dist);
}

\end{tikzpicture}}
		\caption{Ports of a sector vertex $s_a$,\\ where~$\calA = \{a_1,a_2,\dots,a_n\}$. }
		\label{fig:QCGD:ports}
	\end{minipage}\hfill
	\begin{minipage}[t]{0.59\textwidth}
		\centering
		\resizebox{\textwidth}{!}{
			\pgfdeclarelayer{edgelayer}
\pgfdeclarelayer{nodelayer}
\pgfsetlayers{edgelayer,nodelayer}

\tikzstyle{gat}=[circle,fill=red,draw=black,line width=0.7 pt]
\tikzstyle{sec}=[circle,fill=white,draw=black,line width=0.7 pt]

\begin{tikzpicture}
	\begin{pgfonlayer}{nodelayer}
		\node [style=gat] (0) at (-2.5, -0) {$r_{\{2,5\}}$};
		\node [style=sec] (1) at (-4, -0) {$s_2$};
		\node [style=sec] (2) at (-1, -0) {$s_5$};
		\node [style=sec] (3) at (-3, 2) {$s_3$};
		\node [style=sec] (4) at (-3, -2) {$s_4$};
		\node [style=sec] (5) at (-6, -0) {$s_1$};
		\node [style=sec] (6) at (1, -0) {$s_6$};
		\node [style=gat] (7) at (-1.5, 2) {$r_{\{3\}}$};
		\node [style=gat] (8) at (-6, -1.5) {$r_{\{1\}}$};
		\node [style=gat] (9) at (-1.5, -2) {$r_{\{4\}}$};
		\node [style=gat] (10) at (1, -1.5) {$r_{\{6\}}$};
	\end{pgfonlayer}
	\begin{pgfonlayer}{edgelayer}
		\draw (1) to (5);
		\draw [color=blue] (1) to (3);
		\draw [color=blue] (1) to (4);
		\draw [color=black] (2) to (6);
		\draw [color=red] (0) to (1);
		\draw [color=red] (0) to (2);
		\draw [color=red] (4) to (9);
		\draw [color=red] (8) to (5);
		\draw [color=red] (10) to (6);
		\draw [color=red] (3) to (7);
		\draw [color = blue,bend right = 60] (1) to (2);
	\end{pgfonlayer}
\end{tikzpicture}}
		\caption{Simplified representation of the encoding as a QCGD of the quantum switch. Notice that ports are not represented. Red nodes are gate-vertices, white nodes are sector-vertices. Red edges use the ports $p_a$, black edges link the target port $\uptau$ to the entry port $e$, blue edges link output ports $out$ to entry ports $e$.}
		\label{fig:QCGD:QS}
	\end{minipage}
\end{figure}

\begin{definition}[Encoding into QCGD]\label{def:QCGD_enc}
Let $C = ((\calA,\calG,\calQ,S),\ket{\psi}) $ be an AQC.
In the canonical basis of the circuit space, $\ket{\psi}={\sum_k}\alpha_k \ket{c_k}$, where each $c_k$ is a basis state. The superposition of quantum graphs $e(C) = \ket{ \mathbf{G}}=\sum_k \alpha_k \ket{\mathbf{G}_{c_k}}$ encodes $C$, where each $\mathbf{G}_{c_k}$ is defined as follows: 

\begin{itemize}
    \item The set of vertices is :
    $\calV = \{ s_a \mid a \in \calA\} \cup \{r_g \mid g\in \calG\}$ where the vertices
    $\{ s_a  \mid a \in \calA\}$ represent the sectors, while the vertices $\{r_g \mid g\in \calG\}$ represent the gates and will be used to regroup all sectors of a same gate.
    \item Each vertex has ports $\pi = \{p_a  \mid a \in \calA\}\cup\{e\}\cup\{\uptau,in_1,in_2,\dots,in_{|\calA|}, out_1,out_2,\dots,out_{|\calA|}\}$.
    
    The ports $\{p_a \mid a \in \calA\}$ are used by the gate-vertices to connect to their sector-vertices and reciprocally.
    The ports $\{\uptau,in_1,in_2,\dots,in_{|\calA|}, out_1,out_2,\dots,out_{|\calA|}\}$
    are used to represent the target space, the input address space and the output address space of the sectors. Any edge from those ports is connected to the entry port $e$ of the corresponding addressed sector-vertex. 

    \item The set of labels depends on $\ket{c_k}$. Each sector $a$ having a quantum data space containing the state $\ket{\ovvq}_{\calQ_\calI}^a\ket{\ovvqp}_{\calQ_\calO}^a$ sees its corresponding vertex $s_a$ being labelled with $(\ovvq,\ovvqp)$.
    Each vertex $r_g$ is labelled with $S^g$.
    
    \item The set of edges depends on $\ket{c_k}$. Each sector $a$ having an address space containing the state $\ket t ^a\ket{\ovva}_{\calW_\calI}^a\ket{\ovvap}_{\calW_\calO}^a$
    sees its corresponding vertex $s_a$ have the following edges :
    \begin{gather*}
    \{\{s_a: in_i, a_i : e\} \mid\forall a_i \in \ovva \}\\
    \{\{s_a: out_i, a'_i : e\} \mid\forall a'_i \in \ovvap \}\\
    \{s_a:\uptau,t:e\}
    \end{gather*}
    
    Moreover, each sector $a$ is in a gate $g$, embodied by the edge :
    \begin{equation*}
    	\{s_a:p_a, r_g:p_a\}
    \end{equation*}
\end{itemize}
\end{definition}

\begin{remark}
The condition "address appear at most once in an AQC" in Def. \ref{def:AQC} ensures that at most one edge uses the port $e$ of any vertex.
\end{remark}

\begin{definition}[Evolution of Encoding]
The operator $E = E_T E_S$ is the following quantum causal graph dynamic acting on $\ket{\mathbf{G}_{c_k}}$ by changing the labels and edges of vertices according to the labels and edges of ``nearby'' vertices:
\begin{itemize}
    \item $E_T$ implements the transport step. It applies synchronously for every edge $\{s_a:\uptau , s_t : e\}$  the following evolution:
    \begin{itemize}
        \item \emph{(Swap of input and output address spaces)} For any $j,k \in [1\dots|\calA|]$, the edges $\{s_t:in_j,s_k:e\}$ disappear and are replaced by $\{s_a:out_j,s_k:e\}$, while edges $\{s_a:out_j,s_k:e\}$ disappear and become edges $\{s_t:in_j,s_k:e\}$
        
        \item \emph{(Swap of input and output data spaces)} Let $\ovvqp$ be the second part of the label of vertex $s_a$. Let $\overline{p}$ be the first part of the label of vertex $s_t$. 
        After applying $E_T$,
        the second part of the label of vertex $s_a$ becomes $\overline{p}$, and the first part of the label of vertex $s_t$ becomes $\ovvqp$.
    \end{itemize}

    \item $E_S$ implements the scattering step. It is synchronously applied on every $r_g\in\calV$ of label $S^g$, corresponding to the gate $g\in\calG$, and can be explained by the following substeps :
    \begin{enumerate}
        \item Remember that the vertices $\{s_a \mid a\in g\}$ are exactly the neighbours of vertex $r_g$.
        One builds the state of the equivalent AQC-gate space : 
        
        \qquad \qquad \qquad \quad $ \ket\psi^g = \bigotimes_{a \in g} \ket\psi^a$,

        where each $\ket\psi^a$ is build from the quantum labelled graph:
        \begin{itemize}
            \item The target space is $\ket t$ if there is an edge $\{s_a:\uptau,s_t:e\}$. Else, it is $\ket\eps$.
            \item The input address space contains the (potentially empty) word  $\ovva= a_1\dots a_{|\calA|}$ where, for each edge $\{s_a : in_i, s_z : e\}$, $a_i = z$. If there exists no such edge for some  $i\in[1\dots|\calA|]$, the corresponding $a_i$ is equal to $\eps$.
            \item The output address space is filled like the input address space, by replacing every $in_i$ by a $out_i$ in the line above.
            \item Let $(\ovvq,\ovvqp)$ be the label of vertex $s_a$. The input and output address spaces of sector $a$ are $\ket{\ovvq}_{\calQ_\calI}^a\ket{\ovvqp}_{\calQ_\calO}^a$.
        \end{itemize}
        \item $S^g$ (obtained from the label of vertex $r_g$) is applied to the gate space $ \bigotimes_{a \in g} \ket\psi^a$, which therefore becomes $ \ket{\psi'}^g = S^g\bigotimes_{a \in g} \ket\psi^a$
        \item The edges and labels are rewritten from the state $\ket{\psi'}^g$, using the construction explained in step 1 (and Def. \ref{def:QCGD_enc}). in the other direction---this time constructing the subgraph (or potentially a superposition of subgraphs) from the gate space.  
    \end{enumerate}
\end{itemize}
\end{definition}

\begin{proposition}[Equivalence of Evolution of Encoding]
Let $C$ be an AQC. 
Then $e(GC)=E e(C)$ where $E=E_TE_S$ implements the AQC evolution operator $G=TS$ and with $e$ as in Def. \ref{def:QCGD_enc}.
\end{proposition}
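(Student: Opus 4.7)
The plan is to proceed by linearity and then by factoring the evolution into its two constituent steps. Since $e$ is defined on the canonical basis and extended linearly, and since $G=TS$, $E=E_TE_S$ are both linear, it suffices to establish the two identities
\[e(S\ket{c})=E_S\, e(\ket{c})\qquad\text{and}\qquad e(T\ket{c})=E_T\, e(\ket{c})\]
for every basis state $\ket c$ of the circuit space. The composition $e(GC)=E\,e(C)$ then follows by chaining these equalities, using that the intermediate state $S\ket c$ may itself be a superposition $\sum_j \beta_j \ket{c_j}$, together with linearity of $e$ and $E_T$.

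First I would handle the scattering step. Fix a basis state $\ket c$ and a gate $g\in\calG$. The graph $\ket{\mathbf G_c}$ attaches to each vertex $r_g$ a neighborhood that, by Def.~\ref{def:QCGD_enc}, is in bijection with the basis state of the gate space $\calH^g$: target edges $\{s_a\!:\!\uptau,s_t\!:\!e\}$ encode $\ket t_\calT^a$; $in_i$/$out_i$-edges encode the ordered letters of $\ket{\ovva}_{\calW_I}^a$/$\ket{\ovvap}_{\calW_O}^a$; and the label $(\ovvq,\ovvqp)$ of $s_a$ encodes the data $\ket{\ovvq}_{\calQ_I}^a\ket{\ovvqp}_{\calQ_O}^a$. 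The reading procedure in $E_S$ step~(1) is exactly the inverse of this encoding, so it reconstructs $\bigotimes_{a\in g}\ket\psi^a$ faithfully; step~(2) applies $S^g$ verbatim; step~(3) rewrites the subgraph from the post-scattering gate state via the same encoding rule. Because gates act on disjoint sector sets (Def.~\ref{def:gates}) and disjoint vertex neighborhoods in the graph, the rewritings for different gates do not interfere and may be applied synchronously, matching $S=\bigotimes_g S^g$. Hence $E_S\,e(\ket c)=e(S\ket c)$ on basis states, and by linearity on arbitrary states.

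Next I would handle the transport step. Here I only need to check that, for every basis state $\ket c$, the edge and label rewrites of $E_T$ reproduce the permutation $T$ of Def.~\ref{def:Evolution}. Pick any sector $i$ whose target space holds an address $j$, encoded as an edge $\{s_i\!:\!\uptau,s_j\!:\!e\}$. In $\ket c$ the output space of $i$ carries some $\ket{\ovvap,\ovvqp}_O$ and the input space of $j$ carries some $\ket{\ovvb,\ovvp}_I$. The $out_k$-edges incident to $s_i$, the $in_k$-edges incident to $s_j$, and the corresponding halves of the $s_i$- and $s_j$-labels carry precisely this data. The $E_T$ rule swaps these $out_k$- and $in_k$-edges and the corresponding label halves, which produces exactly the graph encoding of the state obtained from $\ket c$ by swapping $\ket{\ovvap,\ovvqp}_O^i$ with $\ket{\ovvb,\ovvp}_I^j$, i.e.\ $T\ket c$. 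The global well-definedness of $E_T$ (no two synchronous rewrites compete for the same edge slot) is guaranteed by the AQC axiom that each address appears at most once in $\ket c$ (Def.~\ref{def:AQC}), so each sector is targeted by at most one other sector and each port pair is touched by at most one rewrite.

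The main obstacle is the bookkeeping at step (3) of $E_S$: showing that the post-scattering subgraph extracted from $\ket{\psi'}^g=S^g\ket\psi^g$ is consistent with the neighbouring part of the graph that was not rewritten. This reduces to the address-preservation clause in Def.~\ref{def:gates}: $S^g$ never creates or destroys letters in $\ovva,\ovvap$, so any edge $\{s_a\!:\!out_k,s_z\!:\!e\}$ before scattering corresponds to an edge of the same form (possibly at a different $out_{k'}$) afterwards, and the endpoint $s_z$ on the other side is untouched by $E_S$ because $z\notin g$. Once this consistency is verified, the two identities above combine to give $e(GC)=Ee(C)$ on basis states and, by linear extension, on all AQC states.
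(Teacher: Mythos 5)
The paper actually states this proposition without giving any proof at all --- the section on QCGD encodings ends immediately after the statement --- so there is no author argument to compare yours against. Your sketch supplies the natural missing argument and is essentially correct: reducing to canonical basis states by linearity, verifying $e(S\ket c)=E_S\,e(\ket c)$ and $e(T\ket c)=E_T\,e(\ket c)$ separately, and chaining them (with the observation that $S\ket c$ may be a superposition, handled by linearity of $e$ and $E_T$) is exactly what a proof of $e(GC)=Ee(C)$ with $G=TS$ and $E=E_TE_S$ requires. You also correctly identify the two load-bearing hypotheses: the ``each address appears at most once'' condition of Def.~\ref{def:AQC}, which makes the synchronous $E_T$ rewrites non-conflicting, and the address-preservation clause of Def.~\ref{def:gates}, which keeps the post-scattering subgraph consistent with the untouched exterior.

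Two small points you should make explicit if you expand the sketch. First, the whole argument rests on the map from circuit basis states to labelled graphs being injective (so that $e$, defined by linear extension over the canonical basis, is well defined and intertwining on superpositions); this follows from the read-back procedure in step~(1) of $E_S$ being a two-sided inverse of the encoding, which is worth stating once rather than leaving implicit in the word ``bijection''. Second, the decoding in Def.~\ref{def:QCGD_enc} reconstructs a word $a_1\cdots a_{|\calA|}$ padded with $\eps$ at positions with no incident $in_i$-edge, whereas the original stored word may be shorter; one must check that the encoding only ever populates an initial segment of the ports $in_1,\dots,in_{|\calA|}$ (or otherwise fix the identification of padded words with their unpadded versions) so that the round trip is the identity. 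Neither issue threatens the argument, but both belong in a complete proof.
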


\section{Notations}\label{annex:notations}
\label{annex:trivnotations}

Let ${\cal S}$ be a finite set. We denote by $\overline s$ finite strings with alphabet $\mathcal S$, i.e. $\overline s=s_1s_2\dots s_n$ with $s_i\in\mathcal S$. Moreover, $\eps$ denotes the empty string. Words of addresses written as letters with overlines ($\ovva$) may be of length $0$. Letters without overline ($a$) represent words of length $1$. From a finite set $\mathcal S$ we construct the Hilbert space $\calH_{\cal S} = \mathbb{C}^{\cal S} = \{ \sum_{s\in {\cal S}} \alpha_s \ket{s} , \alpha_s \in \mathbb{C} \}$ with scalar product $\langle s|s'\rangle=\delta_{s,s'}$.
In other words, to each element $s\in\calS$ is associated a unit vector $\ket s$, such that the family $(\ket s)_{s \in \calS}$ is the canonical orthonormal basis of $\calH_{\cal S} $. If $\mathcal S$ were countably infinite, we would have ensured square-summability by taking $\calH=\ell^2(\mathcal S)$.

The other notations used in the paper are summarized in Tab.~\ref{tab:notations} on the next page.
\clearpage

\label{annex:conventions}

\setlength{\extrarowheight}{1pt}
\begin{longtable}{|>{\centering}p{35pt}|p{9cm}|p{43pt}|}
\hline
Notation & Definition& Occurrence\\
\hline
     \calA &  A set of addresses, namely, a finite subset of $\mathbb{N}$. &  Def. \ref{def:targetstored}\\
     \calG &  A set of gates, namely, a partition of some \calA.&Def. \ref{def:gates}\\
    \calT & A set containing all addresses and the empty address $\eps$, called set of targets.&Def. \ref{def:targetstored} \\
    \calW & The set of words on addresses where no address repeats. $\calW = \{a_1\dots a_n|a_i\in\calA,a_i\neq a_j\:\forall i,j\}$&Def. \ref{def:targetstored}\&\ref{def:addresses}\\
     $\overline{a}$ & A list of addresses. It usually denotes an element of $\calW$ or $\calW\otimes\calW$.&Def. \ref{def:targetstored}\\
     \calD & A finite set, called set of data values.&Def. \ref{def:dataspace} \\
    \calQ & The set of data words of length at most $n$ for some $n\in\mathbb{N}$-- i.e. $\calQ=\calD^{\leq n}$.& Def. \ref{def:dataspace} \\
    $\overline{q}$ & An element in \calQ.&Def. \ref{def:dataspace} \\
     $\forall \calS, \calH_\calS$& The Hilbert space constructed from any set $\calS$ as specified in Sect. \ref{annex:trivnotations}&Sect. \ref{annex:notations}\\
    $\calH_\calT$& Hilbert space on $\calT$ called target space.& Def. \ref{def:targetstored}\\
    $\calH_\calW$& Hilbert space on $\calW$ called stored address space.&Def. \ref{def:targetstored}\\
    $\calH_\calQ$& Hilbert space on $\calQ$ called data space.&Def. \ref{def:dataspace}\\
    $\calH_I$& Input space. It is the tensor product of $\calH_\calW$ and $\calH_\calQ$.&Def. \ref{def:sectors}\\
    $\calH_O$& Output space. It is the tensor product of $\calH_\calW$ and $\calH_\calQ$.&Def. \ref{def:sectors}\\
    $\calH^a$ & The sector space associated to $a$. It is the tensor product of $\calH_{\calT}$,$\calH_I$ and $\calH_O$.&Def. \ref{def:sectors}\&\ref{def:gates}\\
    $\calH^g$ & The sector space associated to a gate $g\subset\calA$ : $\calH^g=\bigotimes_{a\in g}\calH^a$&Def \ref{def:gates}\\
    $\calH$ & Circuit space. $\calH= \bigotimes_{a\in \calA}\calH^a$ &Def. \ref{def:AQC}\\
    
    $\ket{a}_\calT$& A basis state of $\calH_\calT$.&Eq. \ref{eq:equivalence}\\

    $\ket{\overline{a},\overline{q}}_\calI$& A basis state of $\calH_\calI$.&Ex. \ref{example:bell_exec}\\

    $\ket{\overline{a},\overline{q}}_\calO$& A basis state of $\calH_\calO$.&Eq. \ref{eq:S25:s2}\\

    $\ket{\overline{a},\overline{q}}_\calO^s$& A basis state of the output space of sector $s$.&Eq. \ref{eq:psi0}\\
    $\ket{\overline{a},\overline{a}'}_\calW$
    & A basis state of $\calH_{\calW_I}\otimes \calH_{\calW_O}$.&Eq. \ref{eq:equivalence}\\
    $\ket{\overline{q},\overline{q}'}_\calQ$& A basis state of $\calH_{\calQ_I}\otimes \calH_{\calQ_O}$.&Eq. \ref{eq:equivalence}\\
    $\ket{x}^i$ & A basis state of the i-th sector space. It can be decomposed either as~: $\ket{a}^i_\calT\ket{\overline{a},\overline{q}}^i_\calI\ket{\overline{a}',\overline{q}'}^i_\calO$ or as~: $\ket{a}^i_\calT\ket{\overline{a},\overline{a}'}^i_\calW\ket{\overline{q},\overline{q}'}^i_\calQ$&Def. \ref{def:Evolution}\\
    $\ket{x}_y^i$ & Where $ y \in \{\calT,\calW,\calQ\}$ is the corresponding space of the $i$-th sector.&Def. \ref{def:Evolution}\\
    $\ket{32}$ & The state $\ket {3\hspace{1em}2}$ (ket--three--two), while being distinct from the state $\ket {32}$ (ket--thirty--two), is always written as $\ket {32}$ . This causes no confusion, as only address sets with fewer than ten addresses are used in this paper.& Ex. \ref{example:ket32}\\
    \hline
    $S$ & The scattering step. &Def. \ref{def:Evolution}\\
    $T$ & The transport step. &Def. \ref{def:Evolution}\\
    $G$ & $G = TS$, the global evolution operator. &Def. \ref{def:Evolution}\\\hline
    \caption{\label{tab:notations}Notations used in the paper}
\end{longtable}

\end{document}